\long\def\onlyarxive#1{#1}
\newtheorem{thm}{Theorem}[section]
\newtheorem{lem}[thm]{Lemma}
\newtheorem{lemma}[thm]{Lemma}
\newtheorem{prop}[thm]{Proposition}
\newtheorem{cor}[thm]{Corollary}
\theoremstyle{definition}
\newtheorem{defn}[thm]{Definition}
\newtheorem{definition}[thm]{Definition}
\newtheorem{remark}[thm]{Remark}
\newcommand{\End}{\operatorname{End}}
\newcommand{\Diff}{\operatorname{Diff}}
\newcommand{\Diffeo}{\operatorname{Diffeo}}
\newcommand{\CI}{C^\infty}
\newcommand{\CIc}{C^\infty_c}
\newcommand{\grad}{\operatorname{grad}}
\newcommand{\id}{\operatorname{id}}
\def\hookdownarrow{\mbox{\begin{turn}{270}$\hookrightarrow\kern4mm$\end{turn}}}
\newcommand{\pa}{\partial}
\newcommand\vect[1]{\overset{\to}{ \mathbf{#1}}}
\let\witi\widetilde
\renewcommand{\tilde}{\widetilde}
\renewcommand{\bar}{\overline}
\newcommand{\NN}{\mathbb{N}}
\newcommand{\ZZ}{\mathbb{Z}}
\newcommand{\RR}{\mathbb{R}}
\newcommand{\CC}{\mathbb{C}}
\renewcommand{\SS}{\mathbb{S}}
\def\maB{\mathcal{B}}
\def\maF{\mathcal{F}}
\def\maK{\mathcal{K}}
\def\maS{\mathcal{S}}  
\def\maX{\mathcal{X}}  
\def\maY{\mathcal{Y}}            
\def\maV{\mathcal{V}} 
\def\maW{\mathcal{W}} 
\def\maT{\mathcal{T}}
\let\cB\maB
\let\cH\maH 
\newcommand{\supcirc}{\supset\hspace{-0.4cm}\circ}
\newcommand\ie{{i.\kern2pt e.\ }}
\newcommand\vol{\operatorname{vol}}
\title{Regularity for eigenfunctions of Schr\"{o}dinger operators}
\begin{document}

\author[B. Ammann]{Bernd Ammann}
       \address{B. Ammann, Fakult\"at f\"ur Mathematik, Universit\"at
       Regensburg, 93040 Regensburg,
       Germany} \email{bernd.ammann@mathematik.uni-regensburg.de}
       
\author[C. Carvalho]{Catarina Carvalho}
       \address{C. Carvalho, Instituto Superior T\'ecnico,
       Math. Dept., UTL, Av. Rovisco Pais, 1049-001 Lisbon,
       Portugal} \email{ccarv@math.ist.utl.pt}

\author[V. Nistor]{Victor Nistor} \address{V. Nistor, Pennsylvania State
       University, Math. Dept., University Park, PA 16802,
       USA} \email{nistor@math.psu.edu}

\thanks{Ammann's manuscripts are available from {\bf
       http:{\scriptsize//}www.berndammann.de{\scriptsize/}publications}. Carvalho's
       manuscripts are available from {\bf
       http:{\scriptsize//}www.math.ist.utl.pt/$\sim$ccarv}.  Nistor's Manuscripts available from {\bf
       http:{\scriptsize//}www.math.psu.edu{\scriptsize/}nistor{\scriptsize/}}.
       Nistor was partially supported by the NSF Grants DMS-0713743, 
       OCI-0749202, and DMS-1016556. }



\begin{abstract} 
We prove a regularity result in weighted Sobolev (or
Babu\v{s}ka--Kondratiev) spaces for the eigenfunctions of
certain Schr\"{o}dinger-type operators. Our results apply, in
particular, to a non-relativistic Schr\"{o}dinger operator of an
$N$-electron atom in the fixed nucleus approximation.  More
precisely, let $\maK_{a}^{m}(\RR^{3N},r_S)$ be the weighted Sobolev space
obtained by blowing up the set of singular points of the potential
$V(x) = \sum_{1 \le j \le N} \frac{b_j}{|x_j|} + \sum_{1 \le i < j \le
N} \frac{c_{ij}}{|x_i-x_j|}$, $x \in \RR^{3N}$, $b_j,
c_{ij} \in \RR$. If $u \in L^2(\RR^{3N})$ satisfies $(-\Delta + V) u
= \lambda u$ in distribution sense, then $u \in \maK_{a}^{m}$ for all
s$m \in \ZZ_+$ and all $a \le 0$. Our result extends to the case when
$b_j$ and $c_{ij}$ are suitable bounded functions on the blown-up
space. In the single-electron, multi-nuclei case, we obtain the same
result for all $a<3/2$.
\end{abstract}

\maketitle
\tableofcontents

\noindent{\bf MSC (2010):} 35J10 (Primary), 35Q40, 58G25, 81U10 (Secondary)\\[1mm]
\noindent{\bf Keywords:} Hamiltonian, Schr\"odinger operator,  eigenvalues, 
bound states, regularity of eigenfunctions,
blow-up of singularites, singular potentials, multi-electron atoms

\section{Introduction}

We prove a global regularity result for the eigenfunctions
of a non-relativistic Schr\"{o}dinger operator $\cH := -\Delta +
V$ of an $N$-electron atom.  More
precisely, let
\begin{equation} \label{eq.def.V}
  V(x) = \sum_{1 \le j \le N} \frac{b_j}{|x_j|} 
  + \sum_{1 \le i < j \le N} \frac{c_{ij}}{|x_i-x_j|},
\end{equation}
where $x = (x_1, x_2, \ldots, x_N) \in \RR^{3N}$,
$x_j \in \RR^3$, and $b_j$ and $c_{ij}$ are suitable smooth
functions. This potential can be used to model the case of a single,
heavy nucleus, in which case $b_j$ are negative constants, arising
from the attractive force between the nucleus and the $j$-th electron,
whereas the $c_{ij}$ are positive constants, arising from the
repelling forces between electrons. Our results, however, will not
make use of sign assumptions on the coefficients $b_j$,
$c_{ij}$. We also study the case of one electron
and several fixed nuclei, which is important for the study of Density
Functional Theory, Hartree, and Hartee-Fock equations. In that case,
our regularity results are optimal. Our method can also be applied to the
case of several light nuclei and to the study of wave packets, as in \cite{Kato}.

Let $u \in L^2(\RR^{3N})$ be an eigenfunction of $\cH := -\Delta + V=
- \sum_{i=1}^{3N} \frac{\partial^2}{\partial x_i^2} +V$, the
Schr\"odinger operator associated to this potential, that is, a
non-trivial solution of
\begin{equation}\label{hamilt}
  \cH u := -\Delta u + V u = \lambda u
\end{equation} 
in the sense of distributions, where $\lambda \in \RR$. Our main goal
is to study the regularity of $u$. One can replace the Laplacian
$\Delta$ with any another uniformly strongly elliptic operator on
$\RR^n$.  Typically the negativity of the $b_j$ implies that
infinitely many eigenfunctions of~$\cH$ exist, see for instance the
discussion in \cite[XIII.3]{Simon4}. In physics, an eigenfunction of~$\cH$ 
(associated to a discrete eigenvalue with finite multiplicity) is
interpreted as a bound electron, as its evolution under the
time-dependent Schr\"odinger equation is $e^{-i\lambda t}u(x)$ and
thus the associated probability distribution $|u(x)|^2$ does not
depend on~$t$.


The potential $V$ is singular on the set $S := \bigcup_j \{ x_j =
0 \} \cup\bigcup_{i < j} \{ x_i = x_j \}$. The planes in the union
defining $S$ describe the collision of at least two particles, thus we
also call them collision planes, as customary.  Basic elliptic
regularity \cite{EvansBook, taylor2} shows that $u \in H^s_{\rm
loc}(\RR^{3N} \smallsetminus S)$ for all $s \in \RR$, which is however
not strong enough for the purpose of approximating the eigenvalues and
eigenvectors of $\cH$.  Moreover, it is known classically that $u$ is
not in $H^s(\RR^{3N})$ for all $s \in
\RR$ \cite{Fournais2,HunsickerNistorSofo,taylor2}. 
If the coefficient $b_j$ and $c_{ij}$ are real-analytic, then it
follows from analytic regularity theory (see
e.g.\ \cite[Theorem~6.8.1]{morrey:66}) that $u$ is analytic on
$\RR^{3N}\setminus S$. In this case a strong local regularity result
was obtained in \cite{Fournais2} in the neighborhood of the simple
coalescence points, where it was shown that locally $u(x) = u_1(x) +
|l(x)| u_2(x)$ with $u_1$ and $u_2$ real analytic and $l$ linear.  See
also \cite{CarmonaReg, CarmonaSimon2, jecko2, cfks, Flad2, Flad1,
Flad3, Fournais3, Fournais1, Jecko, Kato, VasyReg,
Yserentant1, Yserentant2, Yserentant3} and references therein for more
results on the regularity of the eigenfunctions of Schr\"{o}dinger
operators. Related is~\cite{flad.harutyunyan} which was circulated
after this article has been submitted.

Our approach is to use the ``weighted Sobolev spaces,'' or
``Babu\v{s}ka-Kondratiev spaces,''
\begin{equation}\label{eq.def.sobw}
        \maK_{a}^m(\RR^n,r_S) := \{ u: \RR^{n} \to \CC \,|\,
        r_S^{|\alpha| - a} \pa^\alpha u \in L^2(\RR^n), \ |\alpha| \le
        m \},
\end{equation}
where $a\in\RR$ and $m\in\NN\cup\{0\}$. The weight function $r_S(x)$
is the smoothed distance from $x$ to $S$, however the distance $r_S$
is not measured with respect to euclidean distance, but with respect
to a metric on the ball compactification of $\RR^n$. This modified
choice does not effect $\maK_{a}^m(\RR^n, r_S)$ on closed balls, but
globally.  The main result of our paper (Theorem \ref{theorem.main})
is that
\begin{equation}
\label{eq.main}
        u \in \maK_{a}^m(\RR^{3N},r_S)
\end{equation}
for $a \le 0$ and for {\em arbitrary} $m\in \NN$.  For a single
electron, we prove the same result for $a<3/2$ and conjecture that
this holds true in general. Let us notice that we obtain higher regularity
results, which were not available before (for instance, the results in \cite{Kato},
yield the boundedness of eigenfunctions and of their gradients, but no results
on the higher derivatives).

The proof of our main result uses a suitable compactification $\SS$ of
$\RR^{3N} \smallsetminus S$ to a manifold with corners, which turns
out to have a Lie manifold structure. Then we use the regularity
result for Lie manifolds proved in \cite{ain}. The weighted Sobolev
spaces $\maK_{a}^m(\RR^{3N},r_S)$ then identify with some
geometrically defined Sobolev spaces (also with weight).

To obtain the space $\SS$, we compactify $\RR^{3N}$ to a ball.  This
ball carries a Lie manifold structure, which describes the geometry
underlying the scattering calculus, recalled later.  The space $\SS$
is then blown up along the closure of the singular set~$S$ in
$\RR^{3N}$.  For this we decompose $S$ in its strata of different
dimensions and then blow up the $0$-dimensional stratum first and then
successively the strata of higher and higher dimension.  The resulting
compact space is a manifold with corners~$\SS$ whose interior is
naturally diffeomorphic to $\RR^{3N} \setminus S$. Roughly speaking,
the blow-up-compactification procedure amounts to define generalized
polar coordinates close to the singular set in which the analysis
simplifies considerably.  Each singular stratum of the singular set
$S$ gives rise to a boundary hyperface, corresponding to the collision planes, 
in the blown-up
manifold with corners~$\SS$, and the distance functions to the strata
turn into boundary defining functions. (These kind of hyperfaces are called also 'hyperfaces at inifinity'.)
The construction of the manifold with corners $\SS$ 
is a standard technique, see e.g.\
\cite{melrose.dam} for a similar construction.

We show then that, additionally, the compactification $\SS$ carries a Lie
structure at infinity~$\maW$, a geometric structure developed in
\cite{aln1, ammann.lauter.nistor:07}, which extends work by Melrose, 
Schrohe, Schulze, Vasy and their collaborators, which in turn build on
earlier results by Cordes~\cite{Cordes}, Parenti~\cite{Parenti}, and
others.  More precisely, $ \maW$ is a Lie subalgebra of vector fields
on $\SS$ with suitable properties (all vector fields are tangent to
the boundary,  $ \maW$  is a finitely generated projective
$\CI(\SS)$-module, there are no restrictions on~$\maW$ in the
interior of $\SS$). There is a natural algebra $\Diff_{\maW}(\SS)$ of
differential operators on $\SS$, defined as the set of differential
operators generated by $\maW$ and $\CI(\SS)$. This Lie structure
is obtained interatively as well. On the ball compactification of $\RR^{3N}$
this is just the Lie structure underlying the scattering calculus. 
We will show in Section~\ref{section.lie.struct} that each time we blow up 
a Lie manifold along a suitable submanifold, then the blown-up manifold
inherits the structure of a Lie manifold as well. In particular,
we obtain a Lie manifold structure on the blown-up manifold without assuming 
any additional condition at infinity, in contrast to the existing
literature where Lie manifold structures on 
blow-ups have only been developed under quite restrictive conditions.
The Lie structure on~$\SS$ provides a Lie algebroid $A$ on $\SS$, 
a structure which, in particular, is a vector bundle $A$ over $\SS$.
It comes with an anchor map $\rho:A\to T\SS$, a vector bundle
homomorphism which is an isomorphism in the interior of $\SS$.  A
metric on $A$ gives rise to a complete metric~$g$ on the
interior~$\SS_0$ of~$\SS$. Metrics on~$\SS_0\cong\RR^{3N}\setminus S$
obtained this way are said to be compatible with the Lie structure.
Our blow-up procedure yields such a compatible metric on
$\RR^{3N}\setminus S$ with the additional property to be conformal to
the euclidean metric.

Our analytical results will be obtained by studying the properties of
the differential operators in $\Diff_{\maW}(\SS)$ and then by relating
our Hamiltonian to $\Diff_{\maW}(\SS)$. Here $\maW$ is the Lie
algebra of vector fields defining the Lie manifold structure of
$\SS$. Some of the relevant results in this setting were obtained
in~\cite{ain}. More precisely, let $\rho := \prod_{1\leq i\leq k}
x_{H_i}$, where $\cB=\{H_1,\ldots,H_k\}$ is the set of boundary
hyperfaces of $\SS$ that
are obtained by blowing up the singular set (corresponding to the
collision planes) and $x_{H_i}$ is a defining function of the
hyperface $H_i$. An important step in our approach is to show that
$\rho^2\cH \in \Diff_{\maW}(\SS)$, where $\cH=-\Delta +V$ is as in
Equation~\eqref{hamilt} (see Theorem \ref{thm.regularity}).

Let $H^m(\SS)$ be the Sobolev spaces associated to a metric $g$ on
$\RR^{3N} \smallsetminus S$ compatible with the Lie manifold structure
on $\SS$, namely
\begin{equation}\label{eq.def.Hm}
        H^m(\SS):=\{u\in L^2(\RR^{3N}) \,|\, Du\in
        L^2(\RR^{3N}\smallsetminus S, d\vol_g ),\ \forall\ 
        D\in \Diff^m_{\maW}(\SS)\}.
\end{equation}
For any vector $\vect{a}=(a_H)_{H\in \cB} \in \RR^k$, where again
$k:=\# \cB$ is the number of hyperfaces of $\SS$ corresponding to the
singular set (the collision planes), we
define $H_{\vect{a}}^m(\RR^{3N}) := \chi H^m(\SS)$, with $\chi
:= \prod_{H\in \cB} x_H^{a_H}$. In particular,
$H_{\vect{0}}^m(\RR^{3N}) = H^m(\SS)$.  This allows us to use the
regularity result of \cite{ain} to conclude that $u \in
H_{\vect{a}}^m(\RR^N)$ for all $m$, whenever $u \in
H_{\vect{a}}^0(\RR^N)$. Since $H_{\vect{a}}^0(\RR^{3N}) =
L^2(\RR^{3N})$ for suitable $\vect{a} = (a_H)$, this already leads to
a regularity result on the eigenfunctions $u$ of $\cH$, which is
however not optimal in the range of $a$, as we show for the case of a
single electron (but multiple nuclei). Future work will therefore be
needed to make our results fully applicable to numerical methods. One
will probably have to consider also regularity in anisotropically
weighted Sobolev spaces as in~\cite{bnz3}.

We now briefly review the contents of this paper. In
Section~\ref{sec.diff.struc}, we describe the differential structure
of the blow-up of a manifold with corners by a family of submanifolds
that intersect cleanly. In particular, we define the notion of
iterated blow-up in this setting. In Section~\ref{section.lie.struct},
we review the main definitions of manifolds with a Lie structure at
infinity and of lifting vector fields to the blown-up manifold. The
main goal is to show that the iterated blow-up of a Lie manifold
inherits such a structure (see Theorem~\ref{theorem.Lie.P}). We give
explicit descriptions of the relevant Lie algebras of vector fields,
study the geometric differential operators on blown-up spaces and
describe the associated Sobolev spaces. Finally, in
Section~\ref{sec.reg.eigen}, we consider the Schr\"odinger operator
with interaction potential~\eqref{eq.def.V} and apply the results of
the previous sections to obtain our main regularity result,
Theorem \ref{theorem.main}, whose main conclusion is
Equation~\eqref{eq.main} stated earlier. The range of the index $a$ in
Equation~\eqref{eq.main} is not optimal. New ideas are needed to
improve the range of $a$. We show how this can be done for the case of
a single electron, but multi-nuclei, in which case we do obtain the
optimal range $a<3/2$. When $b_j$ and $c_{ij}$ are constants, our
regularity result in the single electron case is also a consequence of
\cite{Fournais3,Fournais2}.

In fact, for the case of a single electron and several nuclei, our
result is more general, allowing for the potentials that arise in
applications to the Hartree-Fock equations and the Density Functional
Theory. As such, they can be directly used in applications to obtain
numerical methods with optimal rates of convergence in $\RR^3$. For
several electrons, even after obtaining an optimal range for the
constant $a$, our results will probably need to be extended before
being used for numerical methods. The reason is that the resulting
Riemannian spaces have exponential volume growth. This problem can be
fixed by considering anisotropically weighted Sobolev spaces, as
in \cite{bnz3}. The results for anisotropically weighted Sobolev
spaces however are usually a consequence of the results for the usual
weighted Sobolev spaces. For several electrons, one faces additional
difficulties related to the high dimension of the corresponding space
(curse of dimensionality).

\subsection{Acknowledgements} We thank Eugenie Hunsicker, Jorge Sofo 
and Daniel Grieser for useful discussions. B. Ammann and V. Nistor
thank Werner Ballmann and the Max-Planck-Institut for Mathematics in
Bonn, Germany, for its hospitality. We are also grateful to an unknown
referee for his helpful comments.

\section{Differential structure of blow-ups}\label{sec.diff.struc}

\subsection{Overview}
The main goal of this section is to establish a natural procedure to
desingularize a manifold with corners $M$ along finitely many
submanifolds $X_1,X_2,\ldots,X_k$ of $M$.  This construction is often
useful in studying singular spaces such as polyhedral domains and
operators with singular potentials \cite{ammann.nistor:07, bmnz,
Hunsicker, KMR}. Its origins in the setting of pseudodifferential
calculus on singular spaces can be traced to the work of Melrose,
Schulze, and their collaborators, building on earlier work by Cordes,
Parenti, Taylor, and others, see~\cite{MelrosePCSL,
MazzeoMelrose,grieser:b} and references therein.  See also the
notes \cite{melrose.dam} for more on the constructions below.  In what
follows, by a {\em manifold} we will mean a manifold that may have
corners. On the other hand, by a {\em smooth manifold} we shall
understand a manifold that does {\bf not} have a boundary (so no
corners either). In addition, a submanifold is always required to be
a \emph{closed} subset.

If $X$ is a submanifold of $M$, then the desingularization procedure
yields a new manifold, called the \emph{blow-up} of $M$ along $X$,
denoted by $[M:X]$. Roughly speaking, $[M:X]$ is obtained by removing
$X$ from $M$ and gluing back the unit sphere bundle of the normal
bundle of $X$ in $M$. If $M$ is a manifold without boundary, then
$[M:X]$ is a manifold whose boundary is the total space of that sphere
bundle. More details will be given below. There is also an associated 
natural {\em blow-down map} $\beta:[M:X]\to M$ which is the identity
on $M\setminus X$.

Then we want to desingularize along a second submanifold $X'$ of $M$,
typically we will have $X\subset X'\subset M$.  In this situation, the
inclusion $X'\hookrightarrow M$ lifts to an embedding
$[X':X]\hookrightarrow [M:X]$. Then we blow-up $[M:X]$ along $[X':X]$,
obtaining a manifold with corners. 
%
%
An iteration will then yield the desired blown-up manifold. Since we
are interested in applying our results to the Schr\"odinger equation,
we have to allow that submanifolds intersect each other. These
intersection will be blown up first before the submanifold themselves
are blown up. So even if one is interested just in smooth manifolds
without boundary, a repeated blow up will lead to manifolds with
corners.

\subsection{Blow-up in smooth manifolds}

It is convenient to first understand some simple model cases.  If
$M=\RR^{n+k}$ and $X=\RR^n\times \{0\}$, then we define
\begin{equation}\label{eq.blup.Rn}
        [\RR^{n+k}:\RR^n\times \{0\}]:=\RR^n \times S^{k-1} \times [0,\infty),
\end{equation}
with blow-down map 
\begin{equation}
\label{eq.local}
       \beta: \RR^n \times S^{k-1} \times
         [0,\infty) \to \RR^{n+k}, \quad (y,z,r) \mapsto (y,zr).
\end{equation} 
If $x\in \RR^n \times S^{k-1} \times (0,\infty)$, then we identify $x$
with $\beta(x)$, in the sense that $\RR^n \times S^{k-1} \times
(0,\infty)$ is interpreted as polar coordinates for
$\RR^{n+k}\setminus \RR^n$. In the following we use the symbol
$\sqcup$ for the {\em disjoint} union. We obtain (as sets)
\begin{equation*}
        [\RR^{n+k}:\RR^n\times \{0\}]
        = (\RR^{n+k}\setminus \RR^n \times\{0\})\sqcup \RR^n\times
        S^{k-1}.
\end{equation*}

\begin{remark}
An alternative way to define $[\RR^{n+k}:\RR^n\times \{0\}]$ is as
follows. For any $v\in\RR^{n+k}\setminus \RR^n\times \{0\}$ define the
$(n+1)$-dimensional half-space $E_v:=\{x+tv\,|\,
x\in \RR^n\times \{0\},\ t\geq 0\}$ and
${G:=\left\{E_v\,|\,v\in\RR^{n+k}\setminus \RR^n\times \{0\}\right\}\cong
S^{k-1}}$.  Then
\begin{equation*}  
                   [\RR^{n+k}:\RR^n\times \{0\}] := \{(x,E)\,|\, E\in
                   G,\; x\in E\}
\end{equation*} 
and $\beta(x,E):=x$.  The equation $x\in E$ defines a submanifold with
boundary of $\RR^{n+k}\times G$, and its boundary is $\{(x,E)\,|\,
E\in G, \;x\in \RR^{n} \times \{0\}\}\cong \RR^n\times S^{k-1}$.
\end{remark}

If $V$ is an open subset of $\RR^{n+k}$ and $X=(\RR^n\times \{0\})\cap V$,
then  the blow-up of $V$ along $X$ is defined as
\begin{equation*}  
                   [V:X]:=\beta^{-1}(V) =V\setminus X \sqcup \beta^{-1}(X)
\end{equation*}
for the above map $\beta:[\RR^{n+k}:\RR^n\times \{0\}]\to \RR^{n+k}$,
and the new blow-down map is just the restriction of $\beta$ to
$[V:X]$.
  
\begin{lemma} \label{lemma.loc}
Let $\phi:V_1\to V_2$ be a diffeomorphism between
two open subsets of $\RR^{n+k}$, mapping $X_1:= V_1\cap\RR^n\times \{0\}$ onto 
$X_2:= V_2\cap\RR^n\times \{0\}$. 
Then $\phi$ uniquely lifts to a
diffeomorphism
\begin{equation*}
  \phi^\beta : [V_1:X_1] \to [V_2:X_2]
\end{equation*}
covering $\phi$ in the sense that $\beta \circ \phi^\beta = \phi \circ
\beta$.
\end{lemma}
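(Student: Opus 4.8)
The plan is to reduce everything to the local model \eqref{eq.blup.Rn} and check that the lift $\phi^\beta$ is well-defined, smooth, and a diffeomorphism, all being local statements on $[V_1:X_1]$. First I would dispose of the easy part: on the interiors, $\phi$ restricts to a diffeomorphism $V_1 \setminus X_1 \to V_2 \setminus X_2$, and since these interiors are (canonically) the subsets $V_i \setminus X_i \subset [V_i:X_i]$, we are forced to set $\phi^\beta = \phi$ there. This proves uniqueness of the lift, provided we can show any lift is smooth up to the boundary: a smooth map agreeing with a prescribed one on a dense open set is determined everywhere. So the entire content is: the map $\phi^\beta$, defined on $V_1 \setminus X_1$ by $\phi$ and on $\beta^{-1}(X_1)$ by the formula below, extends $\phi$ smoothly across the boundary hyperface, with a smooth inverse given symmetrically by $(\phi^{-1})^\beta$.

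For the boundary, I would work in the coordinates $(y,z,r) \in \RR^n \times S^{k-1} \times [0,\infty)$ on $\beta^{-1}(V_1)$. Write $\phi(y,w) = (\phi_1(y,w), \phi_2(y,w))$ with $\phi_1 \in \RR^n$, $\phi_2 \in \RR^k$, and $w \in \RR^k$ the normal variable; the hypothesis $\phi(X_1) = X_2$ means $\phi_2(y,0) = 0$ for all $y$ with $(y,0) \in V_1$. By Hadamard's lemma (Taylor expansion with integral remainder in the $w$-variable), there is a smooth matrix-valued function $A(y,w)$ with $\phi_2(y,w) = A(y,w)\, w$. Substituting $w = rz$ with $|z| = 1$ gives $\phi_2(y,rz) = r\, A(y,rz)\, z$, so the candidate lift is
\begin{equation*}
  \phi^\beta(y,z,r) \;=\; \Bigl(\phi_1(y,rz),\;\frac{A(y,rz)\,z}{|A(y,rz)\,z|},\; r\,|A(y,rz)\,z|\Bigr),
\end{equation*}
which for $r > 0$ is exactly the polar-coordinate expression of $\phi(\beta(y,z,r))$, hence agrees with $\phi$ on the interior and satisfies $\beta \circ \phi^\beta = \phi \circ \beta$. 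This formula is manifestly smooth wherever $A(y,rz)\,z \ne 0$, and in particular on a neighborhood of $r = 0$: at $r = 0$ we have $A(y,0) = D_w\phi_2(y,0)$, which is the restriction of the (invertible) Jacobian of $\phi$ along $X_1$ to the normal direction, hence $A(y,0)$ is invertible and $A(y,0)z \ne 0$ for $z \in S^{k-1}$; by continuity and compactness of $S^{k-1}$ this persists for small $r$, and for the remaining values of $r$ one covers $V_1$ by such charts. Applying the same construction to $\phi^{-1}$ yields $(\phi^{-1})^\beta$; since $(\phi^{-1})^\beta \circ \phi^\beta$ and $\phi^\beta \circ (\phi^{-1})^\beta$ agree with the identity on the dense interior and are continuous, they are the identity everywhere, so $\phi^\beta$ is a diffeomorphism.

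The main obstacle — or rather the one point that needs genuine care rather than routine verification — is the smoothness of $\phi^\beta$ across $r = 0$, i.e.\ the claim that dividing out by $r$ in $\phi_2(y,rz)$ produces something smooth and non-vanishing on the sphere factor. The clean way to see this is the Hadamard-lemma factorization $\phi_2 = A\cdot w$, together with the observation that the sphere/radial coordinates $(z,r) \mapsto (z/|z|\cdot\text{stuff})$ cause no trouble precisely because the vector $A(y,0)z$ stays away from the origin uniformly in $z$. One should also note that the construction is canonical — it does not depend on the choice of splitting $\RR^{n+k} = \RR^n \times \RR^k$ used to write down coordinates — which is immediate from the intrinsic characterization $\beta \circ \phi^\beta = \phi \circ \beta$ on the interior plus density; this is what will allow the lift to be patched together globally in the later, more general blow-up constructions. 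Everything else (uniqueness, the cocycle property $(\psi\circ\phi)^\beta = \psi^\beta \circ \phi^\beta$, functoriality) then follows formally from density of the interior.
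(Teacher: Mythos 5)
Your proposal is correct and follows essentially the same route as the paper: you define the lift on the boundary by the same formula $(\phi(x),Dv/\|Dv\|,0)$, with $D$ the normal block of $d_x\phi$, and your Hadamard-lemma factorization $\phi_2(y,w)=A(y,w)\,w$ simply carries out in detail the smoothness verification that the paper dismisses with ``can be checked in polar coordinates.'' The only (harmless) point to note is that the factorization requires working in a product neighborhood of a point of $X_1$ so that the radial segments stay in the domain, which is fine since smoothness is local.
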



\begin{proof} 
For $x\in V_1\setminus X_1\subset [V_1:X_1]$ we set
$\phi^\beta(x):=\phi(x)$.  
Elements in $\beta^{-1}(X_1)$ will be written 
as $(x,v)$ 
with $x=\beta(x,v)\in X_1\subset \RR^n$ and 
$v\in S^{k-1}\subset \RR^k$.  Note that
$d_{x}\phi\in\End(\RR^{n+k})$ maps $\RR^n\times\{0\}$ to
itself, and thus has block-form
\begin{equation*}
        \begin{pmatrix}A& B\\ 0 &D\end{pmatrix}.
\end{equation*}
We then define $\phi^\beta(x,v):=(\phi(x),\frac{D
v}{\|Dv\|},0)\in \RR^{n}\times S^{k-1}\times [0,\infty)$.  The
smoothness of $\phi^\beta: [V_1:X_1] \to [V_2:X_2]$ can be checked in
polar coordinates.  

Alternatively using the above remark, one can
express this map as $\phi^\beta(x,E_x)=(\phi(x), E_{\phi(x)})$ for
$x\in V_1\setminus X_1$ and $\phi^\beta(x,E):=(\phi(x),d_x\phi(E))$ if
$x\in X_1$.  In this alternative expression the smoothness of
$\phi^\beta$ is an immediate consequence of the definition of
derivative as a limit of difference quotients.
\end{proof}

Now let $M$ be an arbitrary smooth manifold (without boundary) of
dimension $n+k$ and $X$ a (closed) submanifold of $M$ of dimension
$n$. We choose an atlas $\mathcal{A}:=\{\psi_i\}_{i\in I}$ of $M$
consisting of charts $\psi_i:U_i\to V_i$ such that $X_i:=X\cap
U_i=\psi_i^{-1}\big (V_i \cap (\RR^n\times\{0\}) \big)$. Note that we
do not exclude the case $X\cap U_i=\emptyset$.  Then the previous
lemma tells us that the transition functions
\begin{equation*}
        \phi_{ij} := \psi_i\circ \psi_j^{-1} : V_{ij}
        := \psi_j(U_i\cap U_j)\to V_{ji}:=\psi_i(U_i\cap U_j)
\end{equation*} 
can be lifted to maps
\begin{equation*}
        \phi_{ij}^\beta: [V_{ij}: X_{ij}]\to [V_{ji}: X_{ji}],
\end{equation*} 
where $X_{ij}:= \psi_j(U_i\cap U_j \cap X)$.

Gluing the manifolds with boundary $[V_i: X_i]$, $i\in I$ with respect
to the maps $\phi_{ij}^\beta$, $i,j\in I$ we obtain a manifold with
boundary denoted by $[M:X]$ and gluing together the blow-down maps
yields a map $\beta:[M:X]\to M$. The boundary of $[M:X]$ is
$\beta^{-1}(X)$.  The restriction of $\beta$ to the interior
$[M:X]\setminus \beta^{-1}(X)$ is a diffeomorphism onto $M\setminus X$
which will be used to identify these sets.

Recall that the {\em normal bundle} of $X$ in $M$ is the bundle
$N^MX\to X$, whose fiber over $p\in X$ is the quotient
$N_p^MX:=T_pM/T_pX$. Fixing a Riemannian metric $g$ on $M$, the normal
bundle is isomorphic to $T^\perp X=\{v\in T_pM\,|\,p\in X,\quad v\perp
T_pX\}$. We shall need also the normal sphere bundle $S^MX$ of $X$ in
$M$, that is, the sphere bundle over $X$ whose fiber $S_p^MX$ over
$p\in X$ consists of all unit length vectors in $N_p^MX$ with respect
the metric on $N^MX$. The choice of $g$ will not affect our
construction. The restriction of $\beta|_{\beta^{-1}(X)}
: \beta^{-1}(X)\to X$ is a fiber bundle over $X$ with fibers
$S^{k-1}$, which is isomorphic to the normal sphere bundle.

Let us summarize what we know about the blow-up $[M:X]$ thus obtained.
As sets we have $[M:X]=M\setminus X \sqcup S^MX$. The set $S^MX$ is
the boundary of $[M:X]$, and the exact way how this boundary is
attached to $M\setminus X$ is expressed by the lifted transition
functions $\phi_{ij}^\beta$.  More importantly, we have seen that the
construction of the blow-up is a local problem, a fact that will turn
out to be useful below when we discuss the blow-up of manifolds with
corners.

\subsection{Blow-up in manifolds with corners}

Now let $M$ be an $m$-dimensional manifold with corners. Recall that
by a hyperface of $M$ we shall mean a boundary face of
codimension~$1$.  The intersection of $s$ hyperfaces
$H_1\cap\ldots \cap H_s$, if non-empty, is then a union of boundary
faces of codimension $s$ of $M$. We shall follow the definitions and
conventions from \cite{aln1}. In particular, we shall always assume
that each hyperface is embedded and has a defining function. We also
say that points $x$ in the interior of $H_1\cap\ldots \cap H_s$ are
points of boundary depth $s$, in other word the boundary faces of
codimension $k$ contain all points of boundary depth $\geq k$.  Points
in the interior of $M$ are points of boundary depth $0$ in $M$. In the
case $s=0$ the intersection $H_1\cap \ldots\cap H_s$ denotes $M$.

\begin{defn}\label{def.subm}
A closed subset $X\subset M$ is called a
\textit{submanifold with corners} of codimension $k$ if
any point $\bar x\in X$ of boundary depth $s\in \NN\cup\{0\}$ in
$M$ has an open neighborhood $U$ in $M$ and smooth functions
$y_1,\ldots,y_k:U\to \RR$ such that the following hold:
\begin{enumerate}[{\rm (i)}]
\item $X\cap U=\{x\in U\,|\, y_1(x)=y_2(x)=\cdots=y_k(x)=0\}$
\item 
Let $H_1,\ldots, H_s$ be the boundary faces containing $\bar x$ (which
is equivalent to saying that $\bar x$ is in the interior of $X\cap
H_1\cap \ldots \cap H_s$). Let $x_1,\ldots, x_s$ be boundary
defining functions of $H_1,\ldots,H_s$.  Then $dy_1, \ldots, dy_k,
dx_1, \ldots, dx_s$ are linearly independent at $\bar x$.
\end{enumerate}
\end{defn}

\begin{remark} Similar notions were also introduced and studied by Melrose
in  \cite{melrose.dam},
however with a different aim and a slightly different terminology. 
A submanifold
with corners in the above sense, is the same as a p-submanifold with $l=k$
in  \cite[Sec. 1.7]{melrose.dam}, and this is equivalent to an interior
p-submanifold in later sections of~\cite{melrose.dam}. Such blow-ups 
are iterated in \cite{melrose.dam} as well, and the iterated constructions
coincide with our iterated blow-up described below 
in the case of chains. However, in contrast to \cite{melrose.dam},
if a clean family of submanifolds 
(definition see below) contains submanifolds $X_1$ and $X_2$ with
$X_1\not\subset X_2$ and $X_2\not\subset X_1$, we will always blow-up $X_1\cap
X_2$ before blowing up $X_1$ and $X_2$ which yields stronger 
analytic properties.
\end{remark}

A simple example of a submanifold with corners $X$ of a manifold with
corners $M$ is
\begin{equation*}
        X:=[0,\infty)^{m-k}\times \{0\} \subset
        M:=[0,\infty)^{m-k}\times \RR^k.
\end{equation*}
Here the codimension is $k$, and as $y_i$ we can choose the standard
coordinate functions of $\RR^k$, and as $x_i$ the coordinate functions
of $[0,\infty)^{m-k}$.

On the other hand this simple example already provides models for all
kind of local boundary behavior of a submanifold with corners $X$ of a
manifold with corners $M$ with codimension $k$, and $m=\dim M$.  More
precisely, a subset $X$ of a manifold with corners $M$ is a
submanifold with corners in the above sense if, and only if, any $x\in
X$ has an open neighborhood $U$ and a diffeomorphism $\phi:U\to V$ to
an open subset $V$ of $[0,\infty)^{m-k}\times \RR^k$ with $\phi(X\cap
U)= ([0,\infty)^{m-k}\times \{0\})\cap V$.

As before, all submanifolds with corners shall be {\em closed} subsets
of $M$, contrary to the standard definition of a smooth submanifold of
a smooth manifold.  The definition of a submanifold with corners gives
right away:
\begin{enumerate}[{\rm (i)}]
\item \emph{Interior submanifold}: the interior of $X$ is a closed submanifold 
of codimension $k$ of the interior of $M$, in the usual sense.
\item \emph{Constant codimension}: If $F$ is the interior of a boundary
  face of $M$ of codimension~$s$, then $F\cap X$ is an
  $(m-k-s)$-dimensional submanifold of $F$, that is, $F \cap X$ is
  also a submanifold (in the usual sense)
of codimension $k$ in $F$.
\item \emph{Clean intersections}: If $F$ is as above and $x\in F\cap X$, 
then $T_x(F\cap X)=T_x F\cap T_x X$
\end{enumerate} 
The use of the term ``clean'' goes back to the work of Bott, and was then
used again in \cite{melrose.dam}.

Let $N^MX$ denote the normal bundle of $X$ in $M$. Now, if $F$ is the
interior of a boundary face, then the inclusion $F\hookrightarrow M$
induces a vector bundle isomorphism
\begin{equation*}
        N^F(X\cap F)\cong N^MX|_{X\cap F}.
\end{equation*}
Similarly, we obtain for the interior $F$ of any boundary face an
isomorphism for normal sphere bundles
\begin{equation*}
        S^F(X\cap F)\cong S^MX|_{X\cap F}.
\end{equation*}

Now we will see how to blow-up a manifold $M$ with corners along a
submanifold $X$ with corners. For simplicity of presentation let
$k\geq 1$.  As before, we have as sets $[M:X]=M\setminus X \sqcup
S^MX$, but here $M\setminus X$ will, in general, have boundary
components, each boundary face $F$ of $M$ will give rise to one (or
several) boundary faces for $[M:X]$. The total space of $S^MX$ yields
new boundary hyperfaces.

To construct the manifold structure on $[M:X]$ one can proceed as in
the smooth setting. Let $\beta : [\RR^{n+k} : \RR^{n} \times \{0\}]$
be the blow-down map. Then the blow-up of
\begin{equation*}
        \RR^{n-s}\times
        [0,\infty)^s\times \{0\} \subset \RR^{n-s}\times
        [0,\infty)^s\times \RR^k
\end{equation*} 
is just the restriction of $[\RR^{n+k}:\RR^n\times\{0\}]\to \RR^{n+k}$
to $\beta^{-1}(\RR^{n-s}\times [0,\infty)^s\times \RR^k)$.  Similarly,
Lemma~\ref{lemma.loc} still holds if $V_i$ are open subsets of
$\RR^{n-s}\times [0,\infty)^s\times \RR^k$, and gluing together charts
with the lifted transition functions $\phi_{ij}^\beta$ yields a
manifold with corners $[M:X]$ in a completely analogous way as in the
previous section.  In this way, we have defined $[M:X]$ if $M$ is a
manifold with corners, and if $X$ is a submanifold with corners of $M$.

For the convenience of the reader, we now describe an alternative way
to define $[M:X]$.  Let $\cB=\{H_1,\ldots,H_k\}$ be the set of
(boundary) hyperfaces of $M$. We first realize $M$ as the set
$\{x\in \tilde M\,|\, x_H \ge 0,\; \forall H\in\cB\}$, for $\tilde M$
an enlargement of $M$ to a smooth manifold, such that $X = \tilde
X \cap M$, for a smooth submanifold $\tilde X$ of $\tilde M$. Here
$\{x_H\}$ is a set of boundary defining functions of $M$, 
extended smoothly to $\tilde M$. Let $\beta
: [\tilde M: \tilde X ] \to \tilde M$ be the blow-down map. Then we
can define $[M: X] := \beta^{-1} (M)=\{x\in [\tilde M: \tilde X],
x_H(\beta(x)) \ge 0\}$, and, slightly abusing notation, we will write
again $x_H$ for $x_H\circ \beta$.  The definition of a submanifold
with corners ensures that $[M: X]$ is still a manifold with
corners. Note that smooth functions on $M$ (respectively $[M: X]$) are
given by restriction of smooth functions on $\tilde M$ (respectively
$[\tilde M: \tilde X]$).

It also is helpful to describe the set of boundary hyperfaces of
$[M:X]$.  Some of them arise from boundary hypersurfaces of $M$ and
some of them are new.  Let $H$ be a connected boundary hyperface of
$M$.  All connected components of $H\setminus (X\cap H)$ give rise to
a connected hyperface of $[M:X]$. The other connected hyperfaces of
$[M:X]$ arise from connected components of $X$. Each connected
component of $X$ yields a boundary hyperface for $[M:X]$, which is
diffeomorphic to the normal sphere bundle of $X$ restricted to that
component.  (Such a hyperface arising from $X$ is said to be an hyperface at infinity.)
The boundary hyperfaces of $X$ then induce codimension $2$
boundary faces for $[M:X]$ each of which is the common boundary of a
hyperface arising from $M$ and a hyperface arising from~$X$.

One can describe similarly the codimension~$2$ boundary faces of
$[M:X]$. Some of them arise from boundary hyperfaces of $X$, as
described in the paragraph above, the others arise from boundary faces
of $M$ of codimension~$2$. More precisely, let $F$ be the interior of
such a face, then any connected component of $F\setminus X$ is a
connected component of a boundary face of codimension~$2$ of $[M:X]$.

As for boundary defining functions, let $\bar g$ be a {\em true }
Riemannian metric on $M$, that is a smooth metric on $M$, defined and
smooth up to the boundary.  We shall denote by $r_X : M \to [0,
1]$ a continuous function on $M$, smooth outside $X$ that close to
$X$ is equal to the distance function to $X$ with respect to $\bar g$
and $r_X^{-1}(0) = X$. A function with these properties will be called
{\em a smoothed distance function to $X$}.  If $X$ and all $H\setminus
(X\cap H)$ are connected, then $x_H$, $H\in \cB$ and $r_X$ (identified
with their lifts to the blow-up) are boundary defining functions of
$[M:X]$. This statement generalizes in an obvious way to the
non-connected case.

\subsection{Blow-up in submanifolds}
For the iterated blow-up construction we have to consider the following 
situation. 

\begin{prop}\label{prop.transverse}
Let $Y$ be a submanifold with corners of $M$ and $X \subset Y$ be
a submanifold with corners of $Y$.
Then there is a unique embedding 
$[Y: X] \to [M:X]$ as a submanifold with corners such that
\begin{equation*}
\begin{array}{ccc}
 [Y:X]  &\rightarrow& [M:X] \\[3mm]
 \ \ \ \downarrow \beta_Y& &   \ \ \ \downarrow\beta_M\\[3mm]
 Y &\rightarrow& M 
\end{array}
\end{equation*}
commutes. The range of the embedding $[Y: X] \to [M:X]$ is the closure
of $Y \smallsetminus X$ in $[M : X]$.
\end{prop}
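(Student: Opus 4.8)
The plan is to reduce the statement to a purely local computation, exploiting the fact (emphasized in the text) that blow-up is a local construction compatible with diffeomorphisms, via Lemma~\ref{lemma.loc}. First I would observe that, away from $X$, the maps $\beta_Y$ and $\beta_M$ are diffeomorphisms onto $Y\smallsetminus X$ and $M\smallsetminus X$ respectively, so on $[Y:X]\smallsetminus\beta_Y^{-1}(X) = Y\smallsetminus X$ there is no choice: the embedding must be the inclusion $Y\smallsetminus X\hookrightarrow M\smallsetminus X\subset[M:X]$. This already forces uniqueness once we know the map extends continuously, since $Y\smallsetminus X$ is dense in $[Y:X]$ (as $X$ has positive codimension in $Y$ and $[Y:X]$ is obtained by gluing in a sphere bundle). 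So the content is \emph{existence} of a smooth extension over $\beta_Y^{-1}(X)$, that it is an embedding onto a submanifold with corners, and the identification of the range.

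For existence and smoothness I would work in the local model. Fix $\bar x\in X$ of boundary depth $s$ in $M$; using Definition~\ref{def.subm} applied to $X\subset M$ and to $X\subset Y$ simultaneously, and the local normal-form statement for submanifolds with corners recorded after that definition, I can choose a chart $\phi:U\to V\subset[0,\infty)^{a}\times\RR^{b}$ in which $X\cap U$ is cut out by a block of the $\RR^{b}$-coordinates and $Y\cap U$ is cut out by a sub-block of those same coordinates; concretely, after renaming, $M$ looks like $[0,\infty)^{s}\times\RR^{p}\times\RR^{q}\times\RR^{k}$ near $\bar x$, with $Y=[0,\infty)^{s}\times\RR^{p}\times\{0\}\times\{0\}$ and $X=[0,\infty)^{s}\times\{0\}\times\{0\}\times\{0\}$, so that $X$ has codimension $q+k$ in $M$ and codimension $k$ in $Y$. (Arranging one chart that normalizes $X$, $Y$ and the boundary hyperfaces all at once is the technical heart; it follows from the transversality/linear-independence condition (ii) in Definition~\ref{def.subm} by a standard implicit-function-theorem argument, and I would spell this out as the key lemma.) In this model the blow-down map $\beta_M$ restricted over a neighborhood of $X$ is the generalized polar-coordinate map of \eqref{eq.blup.Rn}--\eqref{eq.local} applied to the last $q+k$ Euclidean coordinates, and $\beta_Y$ is the same construction applied to the last $k$ of them. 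Writing the last $q+k$ coordinates of $M$ as $(w,v)\in\RR^{q}\times\RR^{k}$, a point of $[Y:X]$ near the exceptional locus is $(t,y,\,\omega,\,r)$ with $\omega\in S^{k-1}$, $r\ge0$, mapping down to $(t,y,0,\omega r)$; I send it to the point of $[M:X]$ with the same base $(t,y)$, radial variable $r$, and spherical variable equal to the unit vector $(0,\omega)\in S^{q+k-1}$. This is manifestly smooth, it covers the given square, and on the dense open set $r>0$ it agrees with the inclusion $Y\smallsetminus X\hookrightarrow M\smallsetminus X$, so the locally defined maps patch to a global smooth map $[Y:X]\to[M:X]$ by uniqueness on the overlaps.

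It remains to check that this map is an embedding as a submanifold with corners and to identify its image. Injectivity and immersivity are visible in the local model — the spherical variable $\omega\in S^{k-1}$ embeds as a totally geodesic $S^{k-1}\subset S^{q+k-1}$ by $\omega\mapsto(0,\omega)$, and the remaining coordinates are carried over identically — and the linear-independence condition for being a submanifold with corners (Definition~\ref{def.subm}(ii)) is inherited from the corresponding condition for $X\subset M$ together with $X\subset Y$: the new boundary hyperface of $[M:X]$ is the sphere-bundle face with defining function $r_X$, and the image meets it in the sphere-subbundle, which is transverse to it, while the old hyperfaces $x_{H}$ restrict compatibly. For the range, note the image is a closed subset of $[M:X]$ (the source is compact if $M$ is, or in general the map is proper over $M$ since $\beta_M$ is and $Y$ is closed), it contains $Y\smallsetminus X$, and it is contained in the closure of $Y\smallsetminus X$ because every boundary point is a limit of points with $r>0$; since a closed set containing a dense subset of $\overline{Y\smallsetminus X}$ and contained in it \emph{equals} it, we get that the range is exactly $\overline{Y\smallsetminus X}$ in $[M:X]$.

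The main obstacle I anticipate is the first step of the local analysis: producing a \emph{single} chart that simultaneously puts $X$, $Y$, and all boundary hyperfaces through $\bar x$ into the standard nested form. The two applications of Definition~\ref{def.subm} (for $X\subset M$ and $X\subset Y$) give defining functions for $X$ and for $Y$ separately, and one must combine them — using that $Y$ is a submanifold with corners of $M$ and $X\subset Y$ — into one coordinate system in which the $Y$-defining functions are a sub-collection of the $X$-defining functions, all still linearly independent together with the boundary defining functions at $\bar x$. Once this normal form is in hand, everything else is the elementary bookkeeping of polar coordinates sketched above.
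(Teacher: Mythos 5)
Your proof is correct and follows essentially the same route as the paper's: reduce to a nested local normal form for $X\subset Y\subset M$ (the paper's version is $[0,\infty)^n\times\{0\}\subset[0,\infty)^n\times\RR^\ell\times\{0\}\subset[0,\infty)^n\times\RR^k$), observe that the inclusion of unit spheres in the normal directions induces the embedding of the local blow-ups, and glue by uniqueness off the exceptional locus; your extra verifications of uniqueness, the embedding property, and the range just fill in what the paper declares obvious. One minor slip: your displayed normal form ($Y=[0,\infty)^{s}\times\RR^{p}\times\{0\}\times\{0\}$, $X=[0,\infty)^{s}\times\{0\}\times\{0\}\times\{0\}$) is inconsistent with the codimensions you state and with your subsequent polar-coordinate formulas; it should read $Y=[0,\infty)^{s}\times\RR^{p}\times\{0\}\times\RR^{k}$ and $X=[0,\infty)^{s}\times\RR^{p}\times\{0\}\times\{0\}$.
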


\begin{proof}
The statement of the proposition is essentially a local statement. Let us 
find good local models first. We assume $n=\dim X$, $n+\ell=\dim Y$
and $n+k=\dim M$. As described above $X$ is locally diffeomorphic to 
an open subset of $[0,\infty)^n$. The definition of submanifolds with corners
implies that $X$ does not meet boundary faces of $Y$ or $M$ of codimension 
$>n$. Thus any point $x\in X$ has an open neighborhood in $M$ where 
the iterated submanifold structure $X\subset Y\subset M$ is locally diffeomorphic to
\begin{equation*}
        [0,\infty)^n\times \{0\} \subset
    [0,\infty)^n\times \RR^\ell\times \{0\} \subset
    [0,\infty)^n\times \RR^k.
\end{equation*}
A more precise version of this is the following obvious lemma. Here $A\supcirc B$ 
stands for an open inclusion map (so $B$ is an open subset of $A$).

\begin{lemma}
Let $Y$ be a submanifold with corners of $M$ and $X \subset Y$ be
a submanifold with corners of $Y$.
Then any $x\in X$ has an open
neighborhood $U$ in $M$ such that there is a diffeomorphism $\phi:U\to
V$ to an open subset $V$ of $[0,\infty)^n\times \RR^k$ for which the
diagram
\begin{equation*}
\begin{matrix}
X&\supcirc & U \cap X & \cong & V \cap [0, \infty)^n \times \{0\}\hfill\\
\hookdownarrow&&\hookdownarrow&&\hookdownarrow\\
Y&\supcirc & U \cap Y & \cong & V \cap [0, \infty)^n \times {\mathbb R}^{\ell} \times \{0\}\hfill\\
\hookdownarrow&&\hookdownarrow&&\hookdownarrow\\
M&\supcirc & U &\cong & V \cap [0, \infty)^n \times \RR^{k}\hfill
\end{matrix}
\end{equation*}
commutes.
\end{lemma}

It is easy to see that Proposition~\ref{prop.transverse} holds for the
local model as the embedding ${S^{\ell-1}\times\{0\}\hookrightarrow
S^{k-1}}$ induces an embedding
\begin{eqnarray*} 
        [U\cap Y:U\cap X]&\cong & V \cap [0, \infty)^n \times
        S^{\ell-1} \times [0,\infty) \times \{0\}\\
        \hookrightarrow [U:U\cap X]&\cong& V \cap [0, \infty)^n \times
        S^{k-1} \times [0,\infty).
\end{eqnarray*}

The local embeddings thus obtained
then can be glued together using Lemma~\ref{lemma.loc} to get a global
map $[Y:X]\to [M:X]$. The other statements of the proposition are then obvious.
\end{proof}


\subsection{Iterated blow-up}\label{subsec.it.bu}

We now want to blow up a finite family of submanifolds.  

\begin{definition}
A finite set of connected submanifolds with corners 
$\maX = \{X_1,\ldots,X_k\}$, $X_i\neq \emptyset$,
of $M$ is said to be a {\em clean family of submanifolds} 
if, for any
indices $i_1,\ldots,i_t\in \{1,2,\ldots,k\}$, one has the following
properties:
\begin{itemize}
\item Any connected component of $\bigcap_{j=1}^t X_{i_j}$ is in $\maX$, 
that is, the family $\maX$ is closed under intersections.
\item For any  $x \in \bigcap_{j=1}^t X_{i_j}$ one has
$\bigcap_{j=1}^t T_x X_{i_j}= T_x\left(\bigcap_{j=1}^t
X_{i_j}\right)$.
\end{itemize}
\end{definition}

Examples:

\begin{enumerate}[(i)]
\item $M=\RR^6=\RR^3\times \RR^3$, $X_1:=\RR^3\times \{0\}$, $X_2:=
\{0\}\times \RR^3$, $X_3$ the diagonal of $\RR^3\times \RR^3$, $X_4:=\{0\}$. 
Then  $\maX := \{X_1,X_2,X_3,X_4\}$ is a clean
family.
\item Using the same notations as in (i), $\maX_0 := \{M,X_1,X_2,X_3,X_4\}$, 
$\maX_1 := \{M,X_1,X_2,X_4\}$ and $\maX_2 := \{M,X_1\}$ are also
clean families.
\item More generally, let $M$ be a vector space and $\maX = \{X_i\}$ a finite family
of affine subspaces closed under intersections. Then $\maX$ is a clean
family.
\end{enumerate}

If $\maX = \{X_i\}$ is a clean family of submanifolds and
the submanifolds $X_i$ are also {\em disjoint}, then we define $[M: \maX]$
by successively blowing up the manifolds $X_i$. The iteratively
blown-up space $[M: \maX] := [ \ldots [[M:X_1]: X_2] : \ldots : X_k]$
is independent of the order of the submanifolds $X_i$, as the blow-up
structure given by Lemma~\ref{lemma.loc} is local.

Let us consider now a general clean family $\maX$, and
let us define the new family $\maY := \{Y_\alpha\}$ consisting of the
{\em minimal} submanifolds of $\maX$ (\ie submanifolds that do not
contain any other proper submanifolds in $\maX$). By the assumption
that the family $\maX$ is closed under intersections, the family
$\maY$ consists of disjoint submanifolds of $M$. Let $M':=[M: \maY]$
be the manifold with corners obtained by blowing up the submanifolds
$Y_\alpha$.  Assuming that $\maY \neq \maX$, we set
$\maY_j:=\{Y\in \maY\,|\, Y\subset X_j\}$, for
$X_j\in \maX\setminus \maY$, and define $X_j':=[X_j:\maY_j]$.  By
Proposition~\ref{prop.transverse} $X_j'$ is the closure of
$X_j \smallsetminus \cup Y_\alpha$ in $M'$.  Let also $d_\maX$ be the
minimum of the dimensions of the minimal submanifolds of $\maX$ (\ie
the minimum of the dimensions of the submanifolds in $\maY$).  We then
have the following theorem.

\begin{thm}\label{thm.transverse} Assume $\maY \neq \maX$. Then, using the notation of the above paragraph, 
 the family~$\maX':= \{X'_j\}$ is a clean
family of submanifolds of~$M'$.  Moreover, the minimum dimension
$d_{\maX'}$ of the family $\maX'$ is greater that the minimum
dimension $d_{\maX}$ of the family $\maX$.
\end{thm}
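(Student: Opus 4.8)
The plan is to verify the two assertions --- that $\maX' = \{X_j'\}$ is again weakly transversal and that $d_{\maX'} > d_{\maX}$ --- by reducing everything to local computations near points of $M'$, using the fact (established in the previous subsections) that blow-up is a local construction and that lifts of submanifolds with corners through a blow-down are again submanifolds with corners. First I would handle the dimension statement, which is the easier of the two: by Proposition~\ref{prop.transverse}, each $X_j' = [X_j : \maY_j]$ has the same dimension as $X_j$ (blowing up does not change dimension, as one sees from the local model $[0,\infty)^n \times S^{\ell-1} \times [0,\infty) \times \{0\}$). Since the family $\maY$ of minimal submanifolds has been removed --- every submanifold of $\maX$ of dimension $d_{\maX}$ lies in $\maY$ by minimality and the closure-under-intersection hypothesis --- every remaining $X_j \in \maX \setminus \maY$ has $\dim X_j > d_{\maX}$, hence $\dim X_j' > d_{\maX}$, and therefore the minimum $d_{\maX'}$ over the lifted family is strictly greater than $d_{\maX}$.

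For the weak transversality of $\maX'$, I would check the two defining conditions separately. For closure under intersections: given $X_{i_1}', \ldots, X_{i_t}'$, one must show each connected component of $\bigcap_j X_{i_j}'$ is some $X_\ell'$. Away from $\beta^{-1}(\cup \maY)$ the map $\beta$ is a diffeomorphism, so there the statement is inherited from the corresponding statement for $\maX$. Near $\beta^{-1}(\cup \maY)$ one works in the local model: each $X_{i_j}$ either contains the minimal submanifold $Y_\alpha$ being blown up locally (and then $X_{i_j}'$ is the proper transform, looking like a product with $S^{\ell_j - 1} \times [0,\infty)$) or meets it in a smaller piece; the intersection of proper transforms is the proper transform of the intersection precisely because $\maX$ was weakly transversal, which guarantees the tangent spaces $T_x X_{i_j}$ intersect ``correctly'' so that their images in the normal sphere bundle intersect in the expected dimension. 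This is where one uses the tangent-space condition $\bigcap T_x X_{i_j} = T_x(\bigcap X_{i_j})$ in an essential way: it is exactly what makes the fiberwise intersection of the lifted sphere bundles transverse. For the tangent-space condition on $\maX'$ itself: at a point $x' \in M'$ one splits into the interior case (where $\beta$ is a local diffeomorphism and one transports the condition from $M$) and the boundary case, where the new hyperface is a sphere bundle and the tangent space decomposes into a ``horizontal'' (base) part and a ``vertical'' (fiber) part; in each part the condition follows from the corresponding condition downstairs together with the smoothness of the sphere bundle.

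The main obstacle I expect is the boundary-case bookkeeping for the intersection statement: one has to track how connected components behave under blow-up, because a single connected $X_j$ can have its complement $X_j \setminus \cup Y_\alpha$ split into several pieces, and the hyperfaces of $[M:X_j]$ arising from $X_j$ versus from the $\maY_j$ must be correctly matched against the hyperfaces of $[M:\maY]$. Making precise that ``proper transform of an intersection equals intersection of proper transforms'' at corner points --- rather than just at interior points --- requires carefully choosing the local diffeomorphism so that all the relevant submanifolds and all the relevant boundary hyperfaces are simultaneously put in the standard product form; the weak transversality of $\maX$ (both the intersection closure and the tangent condition) is precisely the hypothesis that lets one do this, so the argument is essentially ``unwind the definitions in a good chart,'' but the chart has to be chosen with some care. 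The rest --- functoriality of the lift, that the result is a submanifold with corners --- is routine given Lemma~\ref{lemma.loc} and Proposition~\ref{prop.transverse}.
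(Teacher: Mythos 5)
Your plan follows essentially the same route as the paper: Proposition~\ref{prop.transverse} to get that the $X_j'$ are submanifolds with corners, the observation that blow-up preserves dimension plus the fact that every non-minimal $X_j$ properly contains some $Y_\alpha$ to get $d_{\maX'}>d_{\maX}$, and a reduction to the local model in which a point of the new boundary face is a unit normal vector in $T_yM/T_yY_\alpha$, so that both closure under intersections and the tangent-space condition for $\maX'$ follow directly from the weak transversality of $\maX$. The only notable difference is one of execution: where you anticipate delicate chart-juggling at boundary points, the paper handles the tangent-space condition concretely by identifying $[M:Y_\alpha]$ with the set of normal vectors of length $\ge 1$, under which $T_xX_j'$ identifies with $T_{\beta(x)}X_j$ and the condition is immediate.
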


\begin{proof} 
By Proposition \ref{prop.transverse}, the sets $X'_j$ are submanifolds
with corners of $M'$. Let $j_1 < j_2 < \ldots < j_t$ and let $Z' :=
X'_{j_1} \cap X'_{j_2} \cap \ldots \cap X'_{j_t}$. We first want to
show that $Z' \in \maX'$. Assume that $Z' \cap (M \smallsetminus
\bigcup Y_\alpha)$ is not empty. Then $Z := X_{j_1} \cap
X_{j_1} \cap \ldots \cap X_{j_1} \in \maX$ and hence $Z = X_i$, for
some $i$, by the assumption that $\maX$ is a clean
family. We only need to show that $Z' = X_i'$.

We have that $X_i \cap (M \smallsetminus \bigcup Y_\alpha) \subset X_{j_s}
\cap (M \smallsetminus \bigcup Y_\alpha)$, so $X_i' \subset X_{j_s}'$,
and hence $X_i' \subset Z' := \bigcap X_{j_s}'$.  We need now to prove
the opposite inclusion. Let $x \in Z'$. If $\beta(x) \not\in Y_\alpha$
for any $\alpha$, then $x=\beta(x) \in Z = X_i$ and hence $x \in
X_i'$. Let us assume then that $y := \beta(x) \in Y_\alpha$ for some
$\alpha$. By definition, this means that $x \in T_yM/T_yY_\alpha$ (and
is a vector of length one, but this makes no difference). Our
assumption is that $x \in T_y X_{j_s}/T_yY_\alpha$, for all $s$. But
our cleanness assumption  then implies $x \in T_y
X_i/T_yY_{\alpha}$, which means $x \in X_i'$, as desired.

It remains to prove that $TX_i' = \bigcap TX_{j_s}'$, where $X_i' = Z' =
X'_{j_1} \cap X'_{j_2} \cap \ldots \cap X'_{j_t}$, as above.  The
inclusion $TX_i' \subset \bigcap TX_{j_s}'$ is obvious. Let us prove the
opposite inclusion. Let then $\xi \in \bigcap T_xX_{j_s}'$, $x \in M' =
[M: \maY]$. If $\beta(x) \not\in Y_\alpha$, for any $\alpha$, then
$\xi \in TX_i'$, by the assumption that $\maX$ is a clean family. 
Let us assume then that $y := \beta(x) \in
Y_\alpha$.  Since our statement is local, we may assume that $Y_\alpha
= \RR^{n-s} \times [0,\infty)^s\times \{0\}$ and that $M
= \RR^{n-s}\times [0,\infty)^s\times \RR^k$. Then the tangent spaces
$T_y X_{j_s}$ identify with subspaces of $\RR^{n+k}$. Let us identify
$[M: Y_\alpha]$ with the set of vectors in $M$ at distance $\ge 1$ to
$Y_\alpha$. We then use this map to identify all tangent spaces to
subspaces of $\RR^{n+k}$. With this identification, $T_x X_j'$
identifies with $T_y X_j$. Therefore, if $\xi \in \bigcap T_x X_{j_s}'$,
then $\xi \in \bigcap T_y X_{j_s} = T_y X_i = T_x X_i'$.

For each manifold $X'_j$, we have $\dim X'_j = \dim X_j > \dim
Y_\alpha$, for some $\alpha$, so $d_{\maX'} > d_{\maX}$.
\end{proof}

We are ready now to introduce the blow-up of a clean family
of submanifolds of a manifold with corners $M$.

\begin{definition} \label{def.fam.blowUP}
Let $\maX = \{X_j\}$ be a non-empty clean family of
submanifolds with corners of the manifold with corners $M$. Let $\maY
= \{Y_\alpha\} \subset \maX$ be the non-empty subfamily of minimal
submanifolds of $\maX$. Let us define $M' := [M: \maY]$, which makes
sense since $\maY$ consists of disjoint manifolds. If $\maX = \maY$,
then we define $[M: \maX] = M'$. If $\maX \neq \maY$, let $d_{\maX}$
be the minimum dimension of the manifolds in $\maY$ and we define
$[M: \maX]$ by induction on $\dim(\maX) - d_{\maX}$ as follows. Let
$\maX' := \{X_j'\}$, where $X_j'$ is the closure of
$X_j \smallsetminus (\cup Y_{\alpha})$ in $M'$, provided that the
later is not empty (thus $\maX'$ is in bijection with
$\maX \smallsetminus \maY$). Then $\dim(M') - d_{\maX'} < \dim(M) - d_{\maX}$,
and $\maX'$ is a clean family of submanifolds with corners 
of $M'$, so $[M': \maX']$ is defined.  Finally, we define
\begin{equation*}
        [M: \maX] := [M': \maX'] = [[M: \maY] : \maX'].
\end{equation*}
\end{definition}

Another equivalent definition of $[M:\maX]$ is the following. Assume
$\maX=\{X_i\,|\, i=1,2,\ldots,k\}$. Then we say that $\maX$ is {\em
admissibly ordered} if, for any $\ell\in\{1,2\ldots,k\}$, the family
$\maX_\ell=\{X_i\,|\, i=1,2,\ldots,\ell\}$ is a clean
family as well, or equivalently, if it is closed under
intersections. After possibly replacing the index set and reordering
the $X_i$, any $\maX$ is admissibly ordered. Let us denote
$\maY:=\{X_1,\ldots, X_r\}$ for $r:=\# \maY$, with $\maY$ the
family of minimal submanifolds in $\maX$ as before, and $X_{r+1}$
corresponds to a submanifold $X'_{r+1}$ in the family $\maY'$ of
minimal submanifolds in $\maX'$. This gives the following iterative
description of the blow-up:
\begin{equation*}
        [M:\maX]=[[\ldots[M:X_1]:X_2]:\ldots:X_r]:X_{r+1}']:\ldots :X_k''']
\end{equation*}
where $'''$ stands for an appropriate number of $'$-signs.

For $\ell\in\{1,2\ldots,k\}$, let us then denote
\begin{equation*}
        M^{(\ell)}:=[[\ldots[M:X_1]:X_2]:\ldots:X_r]:X_{r+1}']:\ldots
        : X_\ell'''] \qquad Y^{(\ell)}:=X_\ell'''\subset M^{(\ell-1)}
\end{equation*} 
where
again $'''$ stands for an appropriate number of $'$-signs.  Then
$M=M^{(0)}$, $M^{(\ell)}=[M^{(\ell-1)}:Y^{(\ell)}]$ and
$M^{(k)}=[M:\maX]$.

\begin{definition}\label{def.canonical}
The sequences $Y^{(1)},Y^{(2)},\ldots, Y^{(k)}$ and $M^{(0)},
M^{(1)}, \ldots, M^{(k)}$ are called the {\em canonical sequences}
associated to $M$ and the admissibly ordered family $\maX$.
\end{definition}

Let $\beta_{\ell} : M^{(\ell)} = [M^{(\ell-1)}: Y^{(\ell)}] \to
M^{(\ell-1)}$ for $\ell\in\{1,2,\ldots,k\}$ be the corresponding
blow-down maps. Then we define the blow-down map $\beta :
[M: \maX] \to M$ as the composition
\begin{equation}\label{eq.it.blow-down}
        \beta
        := \beta_{1} \circ \beta_{2} \circ \ldots \circ \beta_{k} :
        M^{(k)}= [M: \maX] \to M = M^{(0)}.
\end{equation}

\section{Lie structure at infinity}\label{section.lie.struct}

Manifolds with a Lie structure at infinity were introduced
in \cite{aln1} (see also \cite{MelrosePCSL} for the general ideas
related to this definition). In this section, we consider the blow-up
of a Lie manifold by a submanifold with corners and show that the
blown-up space also has a Lie manifold structure.  To this effect, we
start with describing lifts of vectors fields to the blow-up.  By the
results of the previous section, we can then blow up with respect to a
clean family of submanifolds with corners. We also investigate the
effect of the blow-ups on the metric and Laplace operators (and
differential operators in general).

Let $M$ be a manifold with corners and let $\cB_M = \{H_1, \ldots,
H_k\}$ be its set of boundary hyperfaces. As usual, we define
\begin{equation}\label{eq.maV_M}
        \maV_M:=\left\{ V\in\Gamma(TM)\,|\, V|_H \mbox{ is tangent to
         $H$ },\; \forall H\in \cB_M\right\}.
\end{equation}
That is, $\maV_M$ denotes the Lie algebra of vector fields on $M$ that
are tangent to all boundary faces of~$M$. It is the Lie algebra of the
group of diffeomorphisms of $M$.

\subsection{Lifts of vector fields}
Let $M$ be a manifold with corners. As in the smooth case, we identify
the set $\Gamma(TM)$ of smooth vector fields on $M$ with the set of
derivations of $\CI(M)$, that is, the set of linear maps
$V: \CI(M) \to \CI(M)$ satisfying $V(fg) = fV(g) + V(f)g$. With this
identification, the Lie subalgebra $\maV_M \subset \Gamma(TM)$
identifies with the set of derivations $V$ that satisfy
$V(x_H\CI(M)) \subset x_H\CI(M)$, for all boundary defining functions
$x_H$ \cite{MelroseScattering}.

Let $M$ and $P$ be manifolds with corners and $\beta: P\to M$ a smooth,
surjective, map. Regarding vector fields as derivations, it is then
clear what one should mean by ``lifting vector fields from $M$ to
$P$,'' namely that the following diagram commutes
\begin{equation}
\begin{CD}
C^\infty(P)           @>{W} >>      C^\infty(P) \\     
@A{\beta^*}AA            @AA{\beta^*}A           \\
C^\infty(M)         @>{V}>>   C^\infty(M)         
\end{CD}
\end{equation}
where $\beta^*f=f\circ \beta$.  Given two vector fields $V$
on~$M$ and $W$ on~$P$ , we say that
$V$ \textit{lifts} to $W$ along $\beta$, if
$V(f)\circ \beta=W(f\circ\beta)$, for any $f\in C^\infty(M)$.
Considering the differential $\beta_*: T_pP\to T_{\beta(p)}M$, we then say 
that $V$ \textit{lifts to}  $W$ along $\beta$  if, and only if, $\beta_*W_p=
V_{\beta(p)}, \mbox{ for all } p\in P$.

For a vector field $W$ on $P$, $\beta_*W$ does not define
in general a vector field on $M$. If $W$ is the lift of a 
vector field $V$ on $M$, then $\beta_*W_p$ only depends on $\beta(p)$,
i.e.\ $\beta_*W_p= \beta_*W_q$ for all $p,q\in P$ with
$\beta(p)=\beta(q)$. We denote by
$\Gamma_\beta(TP)$ the set of all vector fields on $P$
that are lifts along $\beta$ of some vector field on $M$.  For any $W\in\Gamma_\beta(TP)$, the push-forward
$\beta_*W$ is well defined as a vector field on $M$.  By definition,
we have a map
\begin{equation}\label{beta* v flds}
      \beta_*: \Gamma_\beta(TP) \to \Gamma(TM),\quad
      (\beta_*W)_x:= \beta_*W_p, \: \beta(p)=x.
\end{equation}
 If $\beta$ is a diffeomorphism, then $\Gamma_\beta(TP)
= \Gamma(TP) $ and any vector field on $M$ can be lifted uniquely to
$P$. Note that $\Gamma_\beta(TP)$ is always a Lie subalgebra of
$ \Gamma(TP)$, since $\beta_*([W_1, W_2]_p) =
[\beta_*W_1, \beta_*W_2]_x $, if $\beta(p) = x$.

If $\beta$ is a submersion, then any vector field on $M$ lifts to $P$
along $\beta$, and the lift is unique mod $\ker \beta_*$, that is,
after fixing a Riemannian structure on $P$, there is an unique
horizontal lift $W$ such that $W_p\in (\ker \beta_*)^\perp$, $p\in P$.

\subsection{Lifts and products}

Let $P$, $M$ and $\beta$ as above. We assume in this subsection that
any vector field $V\in\Gamma(TM)$ has {\em at most one lift}
$W_V\in\Gamma(TP)$.  We now take product with a further manifold $N$
with corners. Then $T(M\times N) = TM \times TN$. Accordingly, a
vector field $\witi V\in \Gamma(T(M\times N))$ is then naturally the
sum of its $M$- and $N$-components: $\witi V(x,y)=\witi V_M(x,y)+\witi
V_N(x,y)$, $x\in M$, $y\in N$.

The following lemma answers when such a vector field lifts with
respect to $\beta\times \id:P\times N\to M\times N$.

\begin{lemma}\label{lem.prod}
Under the above assumptions (including uniqueness of the lift), any
vector field $\witi V\in \Gamma(T(M\times N))$ has a lift $\witi
W\in \Gamma(T(P\times N))$ if, and only if, for any $y\in N$, the
vector field $\witi V_M(\,.\,,y)\in \Gamma(TM)$ lifts to a vector
field $W_y$ on $P$.  In this case, the lift is $\witi
W(x,y)=W_y(x)+\witi V_N(x,y)$, in particular, the lift $\witi W$ is
uniquely determined.
\end{lemma}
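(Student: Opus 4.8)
The plan is to reduce the statement to the pointwise characterization of $\beta$-relatedness and then to treat the two summands of $\witi V$ separately. Recall that $\witi W$ lifts $\witi V$ along $\beta\times\id:P\times N\to M\times N$ exactly when $(\beta\times\id)_*\witi W_{(p,y)}=\witi V_{(\beta(p),y)}$ for all $(p,y)$. Under the canonical splittings $T(P\times N)=TP\oplus TN$ and $T(M\times N)=TM\oplus TN$ one has $(\beta\times\id)_*=\beta_*\oplus\id$; writing $\witi W=\witi W^P+\witi W^N$ accordingly, the relation is equivalent to the two conditions $\beta_*\witi W^P_{(p,y)}=\witi V_M(\beta(p),y)$ and $\witi W^N_{(p,y)}=\witi V_N(\beta(p),y)$. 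In particular $\witi W^N$ is forced and equals the pull-back of $\witi V_N$ along $\beta$ in the $M$-variable, which is automatically a smooth vector field on $P\times N$ since $\witi V_N$ and $\beta$ are smooth; so only the $P$-component is at issue. The first condition says precisely that, for each fixed $y$, the vector field $p\mapsto\witi W^P_{(p,y)}$ on $P$ is $\beta$-related to $\witi V_M(\,\cdot\,,y)\in\Gamma(TM)$.

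For the ``only if'' direction, given a smooth lift $\witi W$, restrict $\witi W^P$ to $P\times\{y\}\cong P$ for each $y$: this is a smooth vector field on $P$ which is $\beta$-related to $\witi V_M(\,\cdot\,,y)$, so $\witi V_M(\,\cdot\,,y)$ lifts, and by the standing uniqueness hypothesis its lift is $W_y$. Hence $\witi W^P_{(p,y)}=W_y(p)$ and $\witi W$ has the asserted form. Applying the same slicing to the difference of two lifts of $\witi V$ (its $N$-component vanishes identically, and each $y$-slice of its $P$-component lifts $0$, hence is $0$) shows that uniqueness of lifts along $\beta$ forces uniqueness along $\beta\times\id$, which is the ``in particular'' clause.

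For the ``if'' direction, assume each $\witi V_M(\,\cdot\,,y)$ lifts, necessarily to $W_y$, and define $\witi W$ by $\witi W(p,y):=W_y(p)+\witi V_N(\beta(p),y)\in T_pP\oplus T_yN$. That $\witi W$ satisfies the two pointwise conditions above, hence is $(\beta\times\id)$-related to $\witi V$, is immediate from the definitions of $W_y$ and $\witi V_N$. The one substantive point — and the main obstacle — is that $\witi W$ be \emph{smooth}, i.e.\ that $(p,y)\mapsto W_y(p)$ be jointly smooth; the $\witi V_N$-summand was already dealt with.

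To establish this I would argue locally. Smoothness is local on $P\times N$, and, since lifts are unique, ``liftability'' and the lift itself are local on $M$: if $V$ lifts over each member of an open cover of $M$, the local lifts agree on overlaps by uniqueness and glue to a global lift. One may therefore work in coordinate charts adapted to $\beta$ — in the intended application $\beta$ is a composition of blow-down maps, so these are the polar-type charts of Section~\ref{sec.diff.struc} (cf.\ Lemma~\ref{lemma.loc}) — in which the liftability of a vector field $V$ amounts to finitely many vanishing conditions on the coefficients of $V$ along the submanifold being blown up, and the lift $W_V$ is then given by an explicit formula in these coefficients and finitely many of their derivatives, the division by the relevant defining functions being absorbed by Hadamard's lemma. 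Inserting the smooth family $V=\witi V_M(\,\cdot\,,y)$ into this formula and invoking the \emph{parametrized} Hadamard lemma (smooth dependence on the parameter $y$) yields that $W_y(p)$ is jointly smooth in $(p,y)$. This parametrized-Hadamard step is where the real work lies; everything else is the bookkeeping sketched above.
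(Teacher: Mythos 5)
Your reduction to the two component conditions, the slicing argument for the ``only if'' direction, and the uniqueness claim all match the paper's implicit bookkeeping; the paper likewise declares that the only non-trivial point is the smoothness of $\witi W(x,y)=W_y(x)+\witi V_N(x,y)$. Where you genuinely diverge is in how that smoothness is established. The paper's argument is soft and coordinate-free: uniqueness of lifts makes $\beta_*:\Gamma_\beta(TP)\to\Gamma(TM)$ a linear isomorphism of vector spaces, its inverse is then taken to be a smooth linear map for the $C^\infty$-Fr\'echet topologies, and composing with the smooth map $y\mapsto\witi V_M(\,.\,,y)$ gives smoothness of $y\mapsto W_y$ as a curve in $\Gamma_\beta(TP)$, hence joint smoothness of $(x,y)\mapsto W_y(x)$. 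Your argument is hard and local: pass to the polar-type normal form of a blow-down, write the lift by an explicit formula in the coefficients of $V$, and control the divisions by defining functions with a parametrized Hadamard lemma. The trade-off is this: the paper's route proves the lemma in exactly the generality in which it is stated (an arbitrary smooth surjection $\beta$ with at most one lift per vector field), at the price of leaning on functional-analytic facts it does not spell out (continuity of the inverse of a linear bijection of Fr\'echet spaces, and the exponential law identifying smooth curves in a section space with jointly smooth families); your route is more concrete and checkable, but it silently strengthens the hypotheses, since the normal form and the Hadamard-type division are only available when $\beta$ is a (composition of) blow-down map(s). As written, your proof therefore establishes a special case of the stated lemma rather than the lemma itself. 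That special case is the only one the paper ever uses (in the proof of Lemma~\ref{lem.blow-up.Rn}), so your argument suffices for the application; but if you want to keep the statement at the paper's level of generality you need the Fr\'echet-space argument, or some substitute for it, for the final smoothness step.
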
 

\begin{proof}
The only non-trivial statement in the lemma is to prove that the
vector field $\witi W$ defined by $\witi W(x,y)=W_y(x)+\witi V_N(x,y)$
is smooth, provided that the right hand side exists.

The uniqueness of the lift implies that the map
$\Gamma_\beta(TP)\to \Gamma(TM)$ is an isomorphism of vector spaces,
and thus its inverse, being a linear map, is a smooth map
$\Gamma(TM)\to \Gamma_\beta(TP)$, where we always assume the
$C^\infty$-Frechet topology in these spaces. The composition map
$Y\to \Gamma(TM)\to \Gamma_\beta(TP)$, $y\mapsto V_M(\,.\,,y)\mapsto
W_y$ is thus smooth as well. We have proven the smoothness of $\witi
W$.
\end{proof}

\subsection{Lifting vector fields to blow-ups}\label{subsec.lift.vf}

Let $M$ be a manifold with corners, $X$ a submanifold with corners.
We are interested in studying lifts of Lie algebras of vector fields
on $M$, tangent to all faces, along the blow-down map $\beta :
[M:X]\to M$.


For simplicity of presentation, {\em we shall restrict to the case
$\dim X<\dim M$}, in what follows (even if most of our results hold
for $\dim X=\dim M$).  We adopt from now on the convention
that \emph{any submanifold (with corners) is of smaller dimension than
its ambient manifold (with corners).}  The map $\beta$ is then
surjective and it yields a diffeomorphism
$[M:X] \smallsetminus \beta^{-1}(X) \to M \smallsetminus X$. The
problem of lifting vector fields thus is an extension problem, so the
lift is unique if it exists.  The uniqueness implies that lifts exist
on $M$ if and only if they exist on each open subset of $M$, i.e.\ the
lifting problem is a local problem.  Recall that $\maV_M$ was defined
in Equation~\eqref{eq.maV_M}.

In this subsection, we will show the following proposition on lifts of
vector fields to blow-ups.  A proof of this result can be found in
Section~5.3 of the unpublished manuscript \cite{melrose.dam}, so we
include a proof for completeness. Notice, however, that the extension
of this result to Lie manifolds is a new result, which is surprising,
in part, because it requires no additional assumptions on the Lie
manifold structure, see Subsection~\ref{subsec.bu-lm}.

\begin{prop}\label{prop.lift}
Let $M$ be a manifold with corners, $X$ a submanifold with corners,
and $V \in \maV_M$.  Then, there exists a
vector field $W \in \maV_{[M:X]}$ that lifts $V$ if, and only if, $V$ is tangent to~$X$.
\end{prop}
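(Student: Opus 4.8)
The plan is to reduce everything to the local model and verify the statement there by an explicit computation in polar coordinates. Since lifting a vector field along the surjection $\beta:[M:X]\to M$ is an extension problem across the boundary hyperface $\beta^{-1}(X)$, uniqueness is automatic, and hence existence is a purely local question near a point of $X$. So I would fix $\bar x\in X$ of boundary depth $s$ in $M$ and use the local model from Definition~\ref{def.subm}: a chart identifying a neighborhood of $\bar x$ with an open subset of $[0,\infty)^s\times\RR^{n-s}\times\RR^k$ in which $X$ becomes $[0,\infty)^s\times\RR^{n-s}\times\{0\}$ and the blow-up becomes $[0,\infty)^s\times\RR^{n-s}\times S^{k-1}\times[0,\infty)$, with $\beta(x',y,z,r)=(x',y,rz)$.

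First I would write a general $V\in\maV_M$ in these coordinates. Tangency to the hyperfaces $\{x_i=0\}$ means the coefficient of $\partial_{x_i}$ vanishes on $\{x_i=0\}$, i.e.\ is divisible by $x_i$; there is no constraint from the $y$- or $\RR^k$-directions. Write $V=\sum_i a_i(x,y,v)x_i\partial_{x_i}+\sum_j b_j(x,y,v)\partial_{y_j}+\sum_\ell c_\ell(x,y,v)\partial_{v_\ell}$, where $v\in\RR^k$ are the coordinates being blown up. The hypothesis that $V$ is tangent to $X=\{v=0\}$ means precisely that each $c_\ell$ vanishes on $\{v=0\}$, so $c_\ell(x,y,v)=\sum_m c_{\ell m}(x,y,v)v_m$ with $c_{\ell m}$ smooth (a Hadamard/Taylor-with-remainder argument in the $v$-variables). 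The key step is then to substitute $v=rz$ with $|z|=1$ and check that the pushed-forward/pulled-back object extends smoothly to $r=0$. In the blown-up coordinates the $x_i\partial_{x_i}$ and $\partial_{y_j}$ terms lift trivially (they commute with $\beta$ in the obvious way and are already tangent to all faces of $[M:X]$). For the normal part, $\sum_\ell c_\ell\partial_{v_\ell}=\sum_{\ell,m}c_{\ell m}(x,y,rz)\,v_m\,\partial_{v_\ell}$, and the standard computation converting $\partial_{v_\ell}$ to polar coordinates $(z,r)$ shows that the vector field $\sum_m v_m\partial_{v_m}$ lifts to $r\partial_r$ while the combinations $v_m\partial_{v_\ell}-v_\ell\partial_{v_m}$ lift to the (smooth, boundary-tangent) rotation fields on $S^{k-1}$; since $\sum_{\ell,m}c_{\ell m}v_m\partial_{v_\ell}$ is a $\CI$-linear combination of $r\partial_r$ (with coefficient $\frac1k\sum_\ell c_{\ell\ell}$, say, plus symmetric corrections) and the rotation fields, and $c_{\ell m}(x,y,rz)$ is smooth in $(x,y,z,r)$ up to $r=0$, the lifted field $W$ is smooth on $[M:X]$.

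Finally I would check that this $W$ lies in $\maV_{[M:X]}$, i.e.\ is tangent to every boundary hyperface. The hyperfaces of $[M:X]$ are of two types (as described earlier in the excerpt): those coming from $\{x_i=0\}$, to which $W$ is tangent because its $x_i\partial_{x_i}$-coefficient structure is preserved and the normal part has no $\partial_{x_i}$ component; and the new front face $\beta^{-1}(X)=\{r=0\}$, to which $W$ is tangent because its $\partial_r$-component is $r$ times a smooth function. Since lifts on overlapping charts agree (uniqueness again, or Lemma~\ref{lemma.loc}), these local lifts patch to a global $W\in\maV_{[M:X]}$ with $\beta_*W=V$, as required. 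The main obstacle is the smoothness-at-$r=0$ verification for the normal component: one must be careful that the apparent $1/r$ singularities in converting $\partial_{v_\ell}$ to polar coordinates are exactly cancelled by the vanishing of $c_\ell$ on $\{v=0\}$ together with the extra factor of $r$ from $v_m=rz_m$, and this is where the hypothesis ``$V$ tangent to $X$'' is used in an essential way.
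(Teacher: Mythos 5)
Your proof is correct and takes essentially the same route as the paper's: reduce to the local model, use a Hadamard-type decomposition of the normal component as $\sum_{\ell,m} c_{\ell m}v_m\pa_{v_\ell}$, verify in polar coordinates that each $v_m\pa_{v_\ell}$ lifts smoothly and tangent to the front face (the paper's Lemma~\ref{lem.blow-up.Rn}, with the tangential/product directions handled by Lemma~\ref{lem.prod}), and patch by uniqueness. The only cosmetic slip is the guessed coefficient $\tfrac1k\sum_\ell c_{\ell\ell}$ of $r\pa_r$: writing $v=r\omega$, the radial part of $v_m\pa_{v_\ell}$ is $\omega_m\omega_\ell\, r\pa_r$, so the coefficient is $\sum_{\ell,m}c_{\ell m}\omega_m\omega_\ell$, which is smooth on the sphere and does not affect the argument.
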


The proposition should be seen as an infinitesimal version of
Lemma~\ref{lemma.loc}.  Let us denote by $\Diffeo(M/X)$ the group of
diffeomorphisms of $M$ mapping $X$ onto itself. Then let
$\Diffeo(M):=\Diffeo(M/\emptyset)$.  In the case that $M$ is an open
subset of $[0,\infty)^n\times \RR^k$, and $X=M\cap
[0,\infty)^n\times \{0\}$, Lemma~\ref{lemma.loc} states that a Lie
group homomorphism $\alpha:\Diffeo(M/X)\to \Diffeo([M:X])$ exists such
that $\alpha(\phi)$ coincides with $\phi$ on $M\setminus X$. It thus
implies a Lie algebra homomorphism $\alpha_*$ between the
corresponding Lie algebras. The Lie algebra of $\Diffeo(M/X)$ consists
of those vector fields in $\maV_M$ whose restriction to $X$ is tangent
to~$X$. The Lie algebra of $\Diffeo([M:X])$ is $\maV_{[M:X]}$. The
image of $\alpha_*$ is $\Gamma_\beta(T[M:X])$.  As lifting vector
fields is a local property, these considerations already provide a
proof of Proposition~\ref{prop.lift}, assuming facts from the theory
of infinite-dimensional Lie groups and algebras.

In order to be self-contained we will also include a direct proof. As
before we will study a simple model situation first.

\begin{lem}\label{lem.blow-up.Rn}
Let $M=[0,\infty)^n \times \RR^k$ and
$X=[0,\infty)^n \times \{0\}\subset M$, and thus
$[M:X]=[0,\infty)^n\times S^{k-1}\times [0,\infty)$. Let $V\in \maV_M$
be a vector field that is tangent to $[0,\infty)^n\times \{0\}$, that
is we assume that $V$ is a vector field on $M$ tangent to the boundary
of $M$ and to the submanifold $X$.  Then there exists a lift of $V$ in
$\maV_{[M:X]}$, that is, there is a vector field $W\in \maV_{[M:X]}$
with $\beta_*W=V$ that is tangent to all boundary hyperfaces of
$[M:X]$.
\end{lem}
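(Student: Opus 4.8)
The plan is to work in the explicit coordinates provided by the model and simply write down the lift, then check directly that it is smooth up to all faces and tangent to them. On $M = [0,\infty)^n \times \RR^k$ use coordinates $(x_1,\ldots,x_n, z)$ with $x_i \ge 0$ and $z \in \RR^k$; on $[M:X] = [0,\infty)^n \times S^{k-1} \times [0,\infty)$ use $(x_1,\ldots,x_n,\omega,r)$, so that $\beta(x,\omega,r) = (x, r\omega)$, i.e.\ $z = r\omega$. Write $V = \sum_i a_i(x,z)\,\partial_{x_i} + \sum_j b_j(x,z)\,\partial_{z_j}$. The hypothesis that $V \in \maV_M$ means each $a_i$ is divisible by $x_i$ (tangency to $\{x_i = 0\}$), say $a_i = x_i \tilde a_i$ with $\tilde a_i$ smooth; the hypothesis that $V$ is tangent to $X = \{z = 0\}$ means $b_j(x,0) = 0$ for all $j$, so by Hadamard's lemma $b_j(x,z) = \sum_\ell c_{j\ell}(x,z) z_\ell$ with $c_{j\ell}$ smooth.

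The key computation is to express $\partial_{z_j}$ in the blown-up coordinates. From $z = r\omega$ one gets $r = |z|$ and $\omega = z/|z|$, hence $\partial_{z_j} = \omega_j \,\partial_r + \frac{1}{r}\, T_j$, where $T_j$ is the vector field on $S^{k-1}$ (depending smoothly on $\omega$) given by the tangential part of $\partial_{z_j}$; concretely $T_j = \sum_\ell (\delta_{j\ell} - \omega_j\omega_\ell)\partial_{\omega_\ell}$ in the ambient picture, and it is a smooth vector field tangent to $S^{k-1}$. Substituting into $V$ and using $z_\ell = r\omega_\ell$, the singular $\tfrac1r$ from $\partial_{z_j}$ is exactly cancelled by the factor $r$ coming from $z_\ell$ in the Hadamard expansion of $b_j$: one finds
\begin{equation*}
  W := \sum_i x_i \tilde a_i(x,r\omega)\,\partial_{x_i}
     + \Big(\sum_{j,\ell} \omega_j\,c_{j\ell}(x,r\omega)\,\omega_\ell\Big)\, r\,\partial_r
     + \sum_{j,\ell} c_{j\ell}(x,r\omega)\,\omega_\ell\, T_j,
\end{equation*}
which is manifestly a smooth vector field on $[0,\infty)^n \times S^{k-1} \times [0,\infty)$ (all coefficient functions are smooth, since $c_{j\ell}$, $\tilde a_i$ and the $T_j$ are, and $r\omega$ is smooth). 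By construction $\beta_* W = V$ on the interior, and smoothness extends this to $\beta_* W = V$ everywhere, so $W$ is the lift. Tangency: the $x_i$-component is divisible by $x_i$, so $W$ is tangent to each hyperface $\{x_i = 0\}$; the $r$-component is divisible by $r$, so $W$ is tangent to the new hyperface $\{r = 0\} = \beta^{-1}(X)$; and $T_j$ is tangent to $S^{k-1}$, which has no boundary, so there is no further condition there. Hence $W \in \maV_{[M:X]}$.

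The one point requiring genuine care — the main obstacle — is verifying smoothness of $W$ at $r = 0$ rather than merely boundedness: one must check that the cancellation of $\tfrac1r$ against $r$ really does produce a smooth function (not just a continuous one) of $(x,\omega,r)$, which is why invoking Hadamard's lemma to pull an honest factor $z_\ell = r\omega_\ell$ out of $b_j$ is essential, and why it is not enough to know $b_j(x,0)=0$. Once the decompositions $a_i = x_i\tilde a_i$ and $b_j = \sum_\ell c_{j\ell} z_\ell$ are in hand, the rest is the bookkeeping substitution above. The general case of Proposition~\ref{prop.lift} then follows, since lifting is local and every point of a submanifold with corners has a neighbourhood of the model form by the local normal form recorded after Definition~\ref{def.subm}.
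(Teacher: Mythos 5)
Your proof is correct and follows essentially the same route as the paper's: the key identity $\pa_{z_j} = \omega_j\pa_r + \tfrac1r T_j$ combined with factoring $z_\ell$ out of the coefficients vanishing on $X$ is exactly the paper's Equation~\eqref{lift_tg_X}. The only difference is organizational: the paper first reduces to the case $n=0$ via the product Lemma~\ref{lem.prod}, whereas you carry the $x$-variables along explicitly, which changes nothing of substance.
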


\begin{proof} 
At first, we assume $n=0$. Denoting $f_{\lambda}(x) =
f(\lambda x)$, a differential operator
$D\in \Diff(\RR^k\setminus \{0\})$ is homogeneous of degree $h$ if $(D
f)_{\lambda} = \lambda^h D f_\lambda$ for all $\lambda\in
(0,\infty)$. Radially constant vector fields on $\RR^k\setminus\{0\}$
thus define first order homogeneous differential operators homogeneous
of degree $-1$.

For $y=(y_1, ..., y_k)\in \RR^k\setminus 0$ (defining $X$) and $(r, \omega) \in
[0,\infty)\times S^{k-1}$, $x=\beta(r,\omega)=r\omega$, we can write
in polar coordinates, for $r\neq 0$,
\begin{equation}\label{cart to polar}
             \pa_{y_j}= \frac{\pa y_j}{\pa r}\pa_r +
             S_j(r)= \omega_j\pa_r + \frac{1}{r}S_j(1)
\end{equation}
where $S_j(r)$ is a vector field on $S^{k-1}$, depending smoothly on
$r\in(0,\infty)$. Note that since both $\pa_{y_j}$ and $\pa_r$ are
homogeneous of degree $-1$, the component $S_j$ is again of degree
$-1$, and this means $S_j(r)=\frac1r S_j(1)$ for all $r\in
(0,\infty)$. A vector field $V$ on $\RR^k$ vanishes at $0$ if, and
only if, it can be written as $V=\sum a_{ij}(y)y_i\pa_{y_j}$,
$x\in\RR^k$. Since $a_{ij}$ lifts to $\beta^*a_{ij}=
a_{ij}\circ \beta$ and since, writing $y=r\omega$,
\begin{equation}\label{lift_tg_X}
             y_i\pa_{y_j}= r\omega_i\omega_j\pa_r + \omega_iS_j(1)
\end{equation}
clearly extends to $r=0$, we have that $V$ lifts to $[\RR^k: 0]$ and
it is tangent to $S^{k-1}$ at $r=0$. The statement for $n=0$ follows.
The case for general $n$ then follows from Lemma~\ref{lem.prod}.
\end{proof}

Now, as the existence of a lift is a local property,
Lemma~\ref{lem.blow-up.Rn} also holds if $M$ is an open subset of
$[0,\infty)^n\times \RR^k$ with $X=M\cap [0,\infty)^n\times\{0\}$. If
$M$ is a manifold with corners and if $X$ is submanifold with corners
of it, then we obtain that a vector field on $M$ can be lifted in any
coordinate neighborhood, if it is tangent to $X$.  As the lifts are
unique we obtain sufficiency in Proposition~\ref{prop.lift} by gluing together the
local lifts.  Note that we obtain from Equation~\eqref{lift_tg_X} that
lifts of vector fields tangent to $X$ are in fact tangent to the
fibers of $\beta^{-1}(X)=S^MX\to X$.

It  also follows from (\ref{cart to polar}) that a vector field
$V\in \Gamma(TM)$ for which $V|_X$ is not tangential to $X$ does not
lift to a vector field in $\maV_{[M:X]}$, so we finish the proof of Proposition \ref{prop.lift}.

We now choose a true Riemannian metric $\bar g$ on $M$ (\ie
smooth up to the boundary). In contrast to the $\maV$-metric,
introduced later, this is a metric in the usual sense, i.e.\ a smooth
section of $T^*M\otimes T^*M$ which is pointwise symmetric and
positive definite.  Recall that we denoted by $r_X : M \to
[0,\infty)$ a smoothed distance function to $X$, that is, a
continuous function on $M$, smooth outside $X$ that close to $X$ is
equal to the distance function to $X$ with respect to $\bar g$ and
$r_X^{-1}(0) = X$.

\begin{cor}\label{cor.lift}
Let $M$ be a manifold with corners, $X$ a submanifold with corners,
and $r_X : M \to [0,\infty)$ be a smoothed distance function to
$X$. Let $V \in \maV_M$.  Then there exists a vector field
$W \in \maV_{[M:X]}$ such that $W = r_XV$ on $M \smallsetminus
X \subset [M:X]$.
\end{cor}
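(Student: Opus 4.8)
The plan is to reduce the statement to the local model of Lemma~\ref{lem.blow-up.Rn} and then to read off the lift from the polar-coordinate identity~\eqref{cart to polar}, invoking Proposition~\ref{prop.lift} for the part of $V$ that is tangent to $X$. Observe first that over $[M:X]\smallsetminus\beta^{-1}(X)$, where $\beta$ is a diffeomorphism onto $M\smallsetminus X$ and $r_X$ is smooth and positive, the vector field $r_XV$ is already a well-defined element of $\maV_{[M:X]}$. Since lifts along $\beta$ are unique when they exist, it therefore suffices to construct a lift in a neighborhood of each point $\bar x\in X$; the local lifts will then glue to a vector field on $[M:X]$ that is smooth and tangent to all faces because it is so on each patch.

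So I would fix $\bar x\in X$ and, using the local model for a submanifold with corners, identify a neighborhood with an open subset of $[0,\infty)^n\times\RR^k$, with $X$ corresponding to $[0,\infty)^n\times\{0\}$ and $[M:X]$ to an open subset of $[0,\infty)^n\times S^{k-1}\times[0,\infty)$ via polar coordinates $z=r\omega$ on the $\RR^k$-factor. Because $V\in\maV_M$ is tangent to each face $\{x_i=0\}$, it has the form $V=\sum_i a_i\,x_i\pa_{x_i}+\sum_j b_j\,\pa_{z_j}$ with smooth coefficients (and no condition is imposed along $X$). Hence $r_XV=\sum_i (r_Xa_i)\,x_i\pa_{x_i}+\sum_j b_j\,(r_X\pa_{z_j})$. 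In the first sum, each field $x_i\pa_{x_i}$ lies in $\maV_M$ and is tangent to $X$, so it lifts to $\maV_{[M:X]}$ by Proposition~\ref{prop.lift}, and the coefficients $r_Xa_i$ are smooth on $[M:X]$; in the second sum, \eqref{cart to polar} gives $r\,\pa_{z_j}=r\omega_j\pa_r+T_j(1)$, which extends smoothly across $\{r=0\}$ and is tangent there to $S^{k-1}$, so $r\,\pa_{z_j}\in\maV_{[M:X]}$, and since $r_X/r$ is smooth and positive near $\beta^{-1}(X)$ one gets $r_X\pa_{z_j}=(r_X/r)(r\,\pa_{z_j})\in\maV_{[M:X]}$ as well. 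Multiplying by the pulled-back smooth functions and summing, $r_XV$ lifts into $\maV_{[M:X]}$ on this neighborhood; gluing the local lifts with the lift over $M\smallsetminus X$ produces the required $W\in\maV_{[M:X]}$ with $W=r_XV$ on $M\smallsetminus X$.

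The step I expect to need the most care, and which is really the crux, is the claim that the (lift of the) smoothed distance function $r_X$ is a boundary defining function for the new hyperface $\beta^{-1}(X)$, so that $r_X$ is smooth on $[M:X]$ and $r_X/r$ is smooth and positive near that face. This is exactly the fact recorded in the earlier description of the boundary defining functions of a blow-up: concretely, in a tubular neighborhood of $X$ one writes $r_X(x,z)=|z|_{h(x)}$ for a smooth family $h(x)$ of inner products on the normal spaces, whence $r_X/r=|z|_{h(x)}/|z|$ depends only on $x$ and $\omega=z/|z|$ and is manifestly smooth and positive on the blow-up; away from $X$ there is nothing to prove. Everything else is routine: the locality and uniqueness of lifts, the general form of a vector field tangent to the faces $\{x_i=0\}$, and the fact that $\maV_{[M:X]}$ is a module over $\CI([M:X])$.
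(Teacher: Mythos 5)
Your proof is correct, but it runs along a different decomposition than the paper's. The paper's proof is shorter and rests on a single algebraic identity: choosing local defining functions $y_1,\dots,y_k$ for $X$ with (after normalization) $r_X^2=\sum_i y_i^2$, one writes
\begin{equation*}
r_XV=\sum_i \frac{y_i}{r_X}\, y_iV ,
\end{equation*}
notes that each $y_iV$ vanishes on $X$ and hence lifts by Proposition~\ref{prop.lift}, and that each quotient $y_i/r_X$ extends smoothly to $[M:X]$ (it is essentially $\omega_i$ in polar coordinates); the $\CI([M:X])$-module property then finishes the argument. You instead split $V$ itself into a part tangent to $X$ and a normal part $\sum_j b_j\pa_{z_j}$, apply Proposition~\ref{prop.lift} only to the tangential part, and lift the normal part directly from the identity $r\,\pa_{z_j}=r\omega_j\pa_r+T_j(1)$ of \eqref{cart to polar} — in effect re-deriving a small extension of Lemma~\ref{lem.blow-up.Rn} (lifting $r\pa_{z_j}$ rather than $z_i\pa_{z_j}$). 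Both routes hinge on the same two facts (uniqueness/locality of lifts, and smoothness of $r_X$ and of $r_X/r$ on the blow-up, i.e.\ that $r_X$ is a defining function for the new hyperface), and you correctly identify the latter as the delicate point; the paper avoids dwelling on it by normalizing $r_X^2=\sum y_i^2$ locally, whereas your tubular-neighborhood formula $r_X=|z|_{h(x)}$ is only exact in Fermi coordinates, though the conclusion that $r_X/r$ extends smoothly and positively is the same Gauss-lemma-type statement the paper uses elsewhere (e.g.\ in Proposition~\ref{prop.these.are.lifts}). What the paper's factorization buys is brevity and independence from an explicit coordinate splitting of $V$; what yours buys is a more transparent picture of exactly which component of $V$ obstructs lifting before multiplication by $r_X$.
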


\begin{proof}
Again, it is sufficient to check the lifting property locally.  We
assume that $U$ is open in $M$ and that $y_1,\ldots,y_k$ are functions
defining $X$ as in Definition~\ref{def.subm} (i). We can assume that
$r_X^2=\sum_i y_i^2$. We then can write
\begin{equation}\label{lift_rXV}
        r_XV=\sum_i \frac{y_i}{r_X} \;y_i V.  
\end{equation}
Proposition~\ref{prop.lift} says that the vector fields $y_iV$ lift to
$\Gamma(T[M:X])$ as vector fields tangent to the faces. The functions
$\frac{y_i}{r_X}$, defined a priori on $U\setminus (U\cap X)$, extend
to smooth functions on $\beta^{-1}(U)$. Thus $r_XV$ has a lift locally
on $U$, and by uniqueness of the local lifts, these lifts match
together to a global lift.
\end{proof}

If $X$ is connected, then $\{r_X\}\cup \{x_H\,|\,H\in \maB\}$ is a set
of boundary defining functions for $[M:X]$, where each $x_H$ is the
defining function for the hyperface $H$ of $M$. Furthermore
$W\in\maV_{[M:X]}$ if, and only if, $W(x_Hf)=x_H\tilde f$ and
$W(r_Xf)=r_X\tilde f$ (where we are actually considering lifts of
$x_H$ and $r_X$ to $[M:X]$). For non-connected $X$, the distance to
$X$ has to be replaced by the distance functions to the connected
components in the obvious way, and the same result remains true.

The set of vector fields in $\maV$ which are tangent to $X$ forms a
sub-Lie algebra of $\maV$ which is also a $\CI(M)$-submodule.  This is
the Lie-algebra of $\Diffeo(M/X)$.  Inside this sub-Lie algebra, the
vector fields \emph{vanishing} on $X$ form again a sub-Lie algebra,
which is again a $\CI(M)$-submodule. This is the Lie algebra to the
group $\Diffeo(M;X)$ the Lie group of diffeomorphisms of $M$ that fix
$X$ pointwise.



We can characterize the lifts of such vector fields. Let $V\in \maV_M$
with lift $W\in \maV_{[M:X]}$. It follows from the definition that
$ \beta_*(W(p))= V_{\beta(p)}$. Hence, $V|_X\equiv 0$ is equivalent to
\begin{equation}
  \beta_*(W(p))=0\qquad \forall p\in \beta^{-1}(X).
\end{equation}
We obtain that $V$ vanishes on $X$ if, and only if, $W|_{S^MX}$ is a vector
field on $\beta^{-1}X=S^MX\subset \pa [M:X]$ which is tangent to the
fibers of $S^MX\to X$.  With (\ref{lift_rXV}) we see that lifts of
vector fields $r_XV$ from $M\setminus X$ to $[M:X]$ are also tangent
to these fibers.

\subsection{Lie manifolds}\label{subsec.liemanifolds}

Let us recall the definition of a Lie manifold and of its Lie
algebroid \cite{aln1,ammann.lauter.nistor:07}.  Let $M$ be a compact
manifold with corners. We say that a Lie subalgebra $\maV \subset
{\maV_M}$ is a {\em structural Lie algebra of vector fields} if it is
a finitely generated, projective $C^\infty(M)$-module.  The
Serre-Swan theorem then yields that there exists a vector bundle
$A$ satisfying $\maV\cong \Gamma(A)$.  In particular, $\Gamma(A)$ is a Lie algebra. Moreover, 
\begin{enumerate}
  \item there is a map $\rho: {A} \to TM$, called the \emph{anchor map}, which induces the inclusion map
$\rho: \Gamma({A}) \to \Gamma(TM)$;
  \item $\rho$ is a Lie algebra homomorphism and $[V,fW]=f[V,W]+\left(\rho(V)f\right)W$.
\end{enumerate}
The vector bundle $A$ is then what is called  a \emph{Lie algebroid}.

\begin{defn} 
A \textit{Lie manifold} $M_0$ is given by a pair $(M,\maV)$ where $M$
is a compact manifold with corners with $M_0=int(M)$, and $\maV$ is
structural Lie algebra of vector fields such that
$\rho_{|M_0}:{A}|_{M_0}\to TM_0$ is an
isomorphism. A \textit{$\maV$-metric} is a smooth section of
${A}^*\otimes {A}^*$ which is pointwise symmetric and positive
definite.
\end{defn}

A $\maV$-metric defines a Riemannian metric on the interior $M_0$ of
$M$.  If $\maV$ is fixed, then any two such metrics are bi-Lipschitz
equivalent.  The geometric properties of Riemannian Lie manifolds were
studied in~\cite{aln1}. It is known that any such $M_0$ is necessarily
complete and has positive
injectivity radius by the results of Crainic and
Fernandes \cite{CrainicFernandes}.

To avoid a misunderstanding, we emphasize that the metric $\bar g$
introduced in Subsection~\ref{subsec.lift.vf}, and used to define smoothed 
distance functions, is not a
$\maV$-metric. The metric $\bar g$ extends to the boundary as a smooth
section of $T^*M\otimes T^*M$, whereas a $\maV$-metric does not. One
can also use the terminology that $\bar g$ is a true metric on $TM$,
whereas $\maV$-metrics are usually called metrics on $A$.

To each Lie manifold we can associate an algebra of
$\maV$-differential operators $\Diff_\maV(M)$, the enveloping algebra
of $\maV$, generated by $\maV$ and $C^\infty(M)$. If $E, F$ are
vector bundles over $M$, then we define $\Diff_\maV(M;E,F):=
e_FM_N(\Diff_\maV(M))e_E,$ where $e_E, e_F$ are projections onto $E,
F\subset M\times \CC^N$.

It is shown in \cite{aln1} that all geometric differential
operators associated to a compatible metric on a Lie manifold
are $\maV$-differential, including the classical Dirac operator and
other generalized Dirac operators.  In particular, the de Rham
differential defines an operator
$d: \Gamma(\bigwedge^qA^*)\to \Gamma(\bigwedge^{q+1}A^*)$ and
$d\in \Diff^1_\maV(M;\bigwedge^{q}A^*,\bigwedge^{q+1}A^*)$, and its
formal adjoint $d^*$ is an operator in
$\Diff^1_\maV(M;\bigwedge^{q+1}A^*,\bigwedge^{q}A^*)$. By composition, 
we know that the \textit{Hodge-Laplace operator}
\begin{equation}
        \Delta : =(d+d^*)^2=dd^*+d^*d \in \Diff^2_\maV(M;
        {\textstyle\bigwedge\nolimits^q}A^*),
\end{equation}
is thus $\maV$-differential as well. It is moreover elliptic in that algebra, in
the sense that its principal symbol, a function defined on $A^*$, is
invertible, see \cite{aln1}.

We shall need the following regularity result from \cite[Theorem~5.1]{ain}.
The Sobolev space $H^{k}(M,\maV)$ associated to a Lie manifold
$(M,\maV)$ with a $\maV$-metric $g$ on its Lie algebroid $A$ is defined
in \cite{ain} as
\begin{equation}
        H^{k}(M, \maV) := \{u: M \to \CC\,|\; V_1 \ldots V_j u \in
        L^2(M, d\vol_g)\; \forall V_1, \ldots, V_j \in \maV, \, j \le
        k\,\}
\end{equation}
Note that these Sobolev spaces are not the Sobolev spaces with respect
to the euclidean metric, but with respect to the blown-up metric
$g$, and they depend only on the Lie manifold structure defined
by $\maV$.

\begin{thm} 
\label{thm.ain}Let $m\in \ZZ^+$, $s\in\ZZ$.
Let $P \in \Diff^m_\maV(M, \maV)$ be elliptic and $u \in H^r(M, \maV)$ be 
such that $Pu \in H^s(M, \maV)$. Then $u \in H^{s+m}(M, \maV)$. The same result holds
for systems.
\end{thm}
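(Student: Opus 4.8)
The plan is to deduce this from \cite[Theorem~5.1]{ain} by the standard elliptic parametrix argument, adapted to Lie manifolds; since the statement is quoted rather than re-proved here, I will only outline that argument. The first thing I would do is assemble the pseudodifferential calculus $\Psi^\infty_\maV(M)=\bigcup_{k\in\RR}\Psi^k_\maV(M)$ attached to the Lie structure $\maV$ (constructed in \cite{aln1}, using that a $\maV$-compatible metric makes the interior $M_0$ a complete manifold of bounded geometry) together with its basic properties: it is a filtered algebra, $\Psi^k_\maV(M)\,\Psi^{k'}_\maV(M)\subset\Psi^{k+k'}_\maV(M)$; it contains $\Diff_\maV(M)$ as its differential operators; it carries a principal symbol map that on $\Diff^m_\maV(M)$ recovers the symbol as a function on $A^*$; it is asymptotically complete, so symbol expansions can be summed; every operator of order $-\infty$ maps $H^r(M)$ into $H^t(M)$ for all $r,t$; and each $P\in\Psi^k_\maV(M)$ is bounded from $H^s(M)$ to $H^{s-k}(M)$ for every $s$. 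All of this is in the cited references, and it is the one place where real work is hidden.

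With the calculus in hand, I would construct a left parametrix. Ellipticity of $P\in\Diff^m_\maV(M)$ means precisely that its principal symbol is an invertible, fibrewise homogeneous function of degree $m$ on $A^*\smallsetminus 0$. Inverting this symbol and then correcting the lower-order contributions one order at a time — the error at each stage lying in symbols of one order lower — and summing the resulting expansion by asymptotic completeness yields $Q\in\Psi^{-m}_\maV(M)$ with
\begin{equation*}
  QP = I - R,\qquad R\in\Psi^{-\infty}_\maV(M).
\end{equation*}
A right parametrix could be built the same way, but a left one is all that is needed here.

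Finally I would run the bootstrap, which in this framework is a single line: given $u\in H^r(M)$ with $Pu\in H^s(M)$, apply $Q$ to obtain $u = Q(Pu) + Ru$ as distributions on $M_0$. The first term lies in $H^{s+m}(M)$ because $Q\in\Psi^{-m}_\maV(M)$ and $Pu\in H^s(M)$; the second lies in $\bigcap_{t}H^t(M)\subset H^{s+m}(M)$ because $R$ has order $-\infty$. Hence $u\in H^{s+m}(M)$. For systems the argument is unchanged: one works with the bundle-valued operators $\Diff_\maV(M;E,F)$ and $\Psi^\bullet_\maV(M;E,F)$ (equivalently, tensors the scalar calculus with $\End(\CC^N)$), ellipticity again being pointwise invertibility of the principal symbol on $A^*\smallsetminus 0$, and the same parametrix does the job. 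The main obstacle is entirely in the first paragraph: establishing that the calculus $\Psi^\infty_\maV(M)$ exists on the manifold with corners $M$, that its residual ideal $\Psi^{-\infty}_\maV(M)$ is smoothing on the full Sobolev scale $\{H^s(M)\}_{s}$, and that invertibility of the principal symbol as a function on $A^*$ is exactly the hypothesis that the parametrix construction consumes. Once that machinery is granted, the rest is the formal argument above. (Alternatively, one could avoid the calculus and argue by Nirenberg-type difference quotients along vector fields in $\maV$, but near the corners of $M$ this is considerably more cumbersome, so I would prefer the parametrix route.)
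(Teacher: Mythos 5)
The paper offers no proof of this statement: it is imported verbatim from \cite[Theorem~5.1]{ain}, so there is nothing internal to compare against. Your outline — a left parametrix $Q\in\Psi^{-m}_{\maV}(M)$ built in the pseudodifferential calculus of \cite{ammann.lauter.nistor:07}, followed by the one-line bootstrap $u=Q(Pu)+Ru$ with $R$ of order $-\infty$ mapping into every $H^t(M)$ — is correct and is essentially the argument used in the cited source, with the genuine work correctly located in the existence and mapping properties of the calculus rather than in the formal parametrix step.
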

\onlyarxive{An important example of a Lie manifold is when $A$ is Melrose's 
$b$-tangent space. This leads to the $ b$-calculus. This example is carried out 
in Appendix~\ref{app.ex}.}

\subsection{Blow-up of Lie manifolds}\label{subsec.bu-lm}

Let $M$ carry a Lie manifold structure, and $X$ be a submanifold with corners of
$M$. We want to define a Lie structure on $[M:X]$. 

We begin by choosing a true metric $\bar g$ on $TM$, that is, $\bar g$ is smooth up to the
boundary.  Let $U_\epsilon(X)$ be an $\epsilon$-neighborhood of $X$ in $M$
with respect to $\bar g$. Later on we will need that the distance
function to $X$ with respect to $\bar g$ is a smooth function on
$U_\epsilon(X)\smallsetminus X$ for sufficiently small
$\epsilon>0$. Unfortunately, such an $\epsilon>0$ does not exists for
arbitrary metrics $\bar g$ on $M$. On the other hand, such an
$\epsilon>0$ exists if a certain compatibility condition between $M$,
$X$ and $\bar g$ holds, and for given $M$ and $X$ a compatible $\bar
g$ exists. More precisely, the compatibility condition is that
there is an $\epsilon>0$ such that for any $V\in T_xM$, $x\in X$,
$V\perp T_xX$, the curve $\gamma_V:t\mapsto \exp_x(tV)$ is defined for
$|t|< \epsilon$ and the boundary depth is constant along such
curves. For example metrics $\bar g$ whose restriction to a tubular
neighborhood of $X$ are product metrics of $\bar g|_X$ with a metric
on a transversal section, satisfy this compatibility
condition. However, we cannot assume without loss of generality that
for given $M$ and $X$ there is a metric $\bar g$ providing such a
product structure. (For example, consider the case that the normal
bundle of $X$ in $M$ is non-trivial. Then there is no product metric
on a neighborhood of $X$, whereas a compatible metric exists.)

Now let $r_X$ denote the smoothed distance function to $X$ with
respect to a true metric $\bar g$ that satisfies the compatibility
condition of the previous paragraph. The function $r_X$ thus
coincides with the distance function to $X$ on $U_\epsilon(X)$, for
some $\epsilon>0$, and is smooth and positive on $M\setminus X$.
We will also assume $r_X\leq 1$.

Any $x\in X$ has an open neighborhood $U$ in $M$ and a submersion
$y=(y_1,\ldots,y_k):U\to \RR^k$ with $X\cap U=y^{-1}(0)$ and
$r_X=|y|=\sqrt{\sum_i y_i^2}$.

\begin{lemma}\label{lemma.LieMan}
Let $(M,\maV)$ be a Lie manifold, $X \subset M$ be a submanifold with
corners. Then
\begin{equation*}
        \maV_0 := \bigl\{\sum f_i V_i\,|\, f_i\in C^\infty(M), \quad
           f_i|_X\equiv 0, \quad V_i\in \maV\bigr\}
\end{equation*} 
is a $C^\infty(M)$-submodule and a Lie subalgebra of $\maV$.  The lift
\begin{equation*}
        \maW_0
        := \{W \in \Gamma_\beta(T[M:X])\,|\, \beta_*(W) \in \maV_0 \}
\end{equation*}
is isomorphic to $\maV_0$ as a $\CI(M)$-module and as a Lie algebra. 
Let $\maW$ be the $\CI([M:X])$-submodule of $\maV_{[M:X]}$
generated by $\maW_0$, i.\ e.
\begin{equation*} 
        \maW := \bigl\{\sum_i f_i W_i\,|\, f_i \in \CI([M:X]),\,
        W_i \in \maW_0\, \bigr\}.
\end{equation*} 
Then, for any vector field $W\in \maW$, its restriction $W|_{S^MX}$ is
tangent to the fibers of $S^MX$ and $\maW$ is closed under the
Lie bracket. 
\end{lemma}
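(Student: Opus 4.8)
The plan is to verify the three assertions of the lemma in order, reducing everything to the local model and to the results already established for lifts of vector fields. First I would check that $\maV_0$ is a $C^\infty(M)$-submodule and a Lie subalgebra of $\maV$. The submodule property is immediate from the definition (a $C^\infty(M)$-multiple of a finite sum $\sum f_i V_i$ with $f_i|_X \equiv 0$ is again of that form). For the Lie bracket, given $f V$ and $g W$ with $f|_X = g|_X = 0$ and $V, W \in \maV$, the identity $[fV, gW] = fg[V,W] + f(Vg)W - g(Wf)V$ shows the bracket lies in $\maV_0$, since each of the three coefficient functions $fg$, $f(Vg)$, $g(Wf)$ vanishes on $X$ (each has a factor $f$ or $g$). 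Hence $\maV_0$ is closed under the bracket.

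Next I would establish that $\maW_0 \cong \maV_0$ as $C^\infty(M)$-modules and Lie algebras. Since we are assuming $\dim X < \dim M$, the blow-down map $\beta : [M:X] \to M$ restricts to a diffeomorphism off $\beta^{-1}(X)$, so lifts along $\beta$ are unique when they exist; thus $\beta_* : \Gamma_\beta(T[M:X]) \to \Gamma(TM)$ is injective, and by definition of $\maW_0$ it maps $\maW_0$ bijectively onto its image. The content is that this image is exactly $\maV_0$, i.e.\ that every $V \in \maV_0$ admits a lift $W \in \maV_{[M:X]}$. This is where Corollary~\ref{cor.lift} does the work: writing $V = \sum f_i V_i$ with $f_i|_X \equiv 0$, we may (using $r_X^2 = \sum_j y_j^2$ locally, as in the proof of Corollary~\ref{cor.lift}) absorb the vanishing into a factor $r_X$, so that each $f_i V_i$ is a $C^\infty$-multiple of $r_X V_i'$ with $V_i' \in \maV$; by Corollary~\ref{cor.lift} each $r_X V_i'$ lifts to $\maV_{[M:X]}$, and hence so does $V$. (The functions $f_i/r_X$ are bounded and extend smoothly to the blow-up by the same argument as in Corollary~\ref{cor.lift}, since on the local model $f_i$ vanishes on $\{y = 0\}$ and $r_X = |y|$.) The bijection $\beta_*$ is a $C^\infty(M)$-module map because $\beta^*$ is an algebra homomorphism, and a Lie algebra map because $\Gamma_\beta(T[M:X])$ is a Lie subalgebra of $\Gamma(T[M:X])$ on which $\beta_*$ intertwines the brackets (as recalled in the paragraph on lifts of vector fields).

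Finally, for the $C^\infty([M:X])$-module $\maW$ generated by $\maW_0$, I would prove the two remaining assertions: tangency to the fibers of $S^MX \to X$, and closure under the Lie bracket. Tangency is local and reduces to the model $M = [0,\infty)^n \times \RR^k$, $X = [0,\infty)^n \times \{0\}$: by Lemma~\ref{tangent.to.fibers} (or directly from Equation~\eqref{lift_tg_X}) a lift of a vector field vanishing on $X$ restricts on $S^MX = \beta^{-1}(X)$ to a vector field tangent to the fibers $S^{k-1}$; since the fiberwise-tangent vector fields form a $C^\infty$-module, an arbitrary element $\sum f_i W_i \in \maW$ also restricts to a fiberwise-tangent vector field, proving the first claim. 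For closure under the bracket I would compute $[f W, g W'] = fg[W, W'] + f(Wg)W' - g(W'f)W$ for $W, W' \in \maW_0$ and $f, g \in C^\infty([M:X])$; the first and third terms are manifestly in $\maW$, so it suffices to show $[W, W'] \in \maW$. Here $[W, W'] = [W_V, W_{V'}]$ is the unique lift of $[V, V'] \in \maV_0$ (using that lifts are unique and $\beta_*$ intertwines brackets), hence lies in $\maW_0 \subset \maW$. The main obstacle is the second assertion: $[W, W']$ need not be a lift of $[\beta_* W, \beta_* W']$ because $W, W'$ are general elements of $\maW_0$ — this is exactly why $\maW_0$ (rather than some larger set) is defined via $\beta_*(W) \in \maV_0$, so that $[W,W']$ lifts $[\beta_*W, \beta_*W'] \in [\maV_0,\maV_0] \subset \maV_0$. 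One must be careful to invoke uniqueness of lifts (valid since $\dim X < \dim M$) so that $[W_V, W_{V'}]$ is indeed $W_{[V,V']}$ and lies in $\maW_0$; this then propagates through the bilinear expansion to give closure of $\maW$ under the bracket.
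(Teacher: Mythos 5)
Your proof is correct and follows essentially the same route as the paper's: the bracket identity $[f_1V_1,f_2V_2]=f_1f_2[V_1,V_2]+f_1V_1(f_2)V_2-f_2V_2(f_1)V_1$ for both subalgebra claims, uniqueness of lifts together with the fact that $\beta_*$ intertwines brackets for the isomorphism $\maW_0\cong\maV_0$, and Lemma~\ref{tangent.to.fibers} for tangency to the fibers of $S^MX\to X$. The only cosmetic difference is that you obtain existence of lifts of elements of $\maV_0$ by factoring out $r_X$ and invoking Corollary~\ref{cor.lift}, whereas the paper applies Proposition~\ref{prop.lift} directly (an element of $\maV_0$ vanishes on $X$, hence is tangent to $X$); both rest on the same local computation.
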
 

\begin{proof}
The vector space $\maV_0$ is a Lie subalgebra of $\maV_{[M:X]}$ as
\begin{equation*}
        [f_1V_1, f_2V_2] = f_1f_2[V_1,V_2] + f_1V_1(f_2)V_2 -
        f_2V_2(f_1)V_1.
\end{equation*}
Incidentally, the same equation shows that $\maW$ is closed under the
Lie bracket.

By Propositon~\ref{prop.lift}, any vector field in $\maV_0$ can be
lifted uniquely and smoothly to the blow-up. The map
$\beta_*: \Gamma_\beta(T[M:X]) \to \Gamma(TM)$ is obviously an
isomorphism of $\CI(M)$-modules and of Lie algebras.  Then $\maW_0$ is
a Lie algebra of vector fields in $\maV_{[M:X]}$, and so is $\maW$.
It follows from the definition of lift that $W|_{S^MX}$ is tangent to the
fibers for all $W\in \maW$ (see the remarks at the end of Section \ref{subsec.lift.vf}).
\end{proof}

\begin{lemma}\label{equiv_def_W}
Let $(M,\maV)$ be a Lie manifold, $X \subset M$ be a submanifold with
corners. Let $r_X$ be a smoothed distance function to $X$. Then
\begin{equation*}
        \maW_1 := \{W \in \Gamma(T[M:X])\,|\, \exists V\in \maV
        \mbox{ with } W|_{M\setminus X} =  r_X V|_{M\setminus X} \}
\end{equation*}
is isomorphic to $\maV$ as a $\CI(M)$-module.  
Furthermore
the natural multiplication map 
\begin{equation*}
        \mu: \CI([M:X])\otimes_{\CI(M)} \maW_1 \to \maW \subset \maV_{[M:X]}
\end{equation*}
is an isomorphism of $\CI([M:X])$-modules, and hence $\maW$ is a
projective $\CI([M:X])$-module.
\end{lemma}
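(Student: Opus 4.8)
The plan is to prove the two assertions of Lemma~\ref{equiv_def_W} in turn, relying on the local models from Subsection~\ref{subsec.lift.vf} and on Corollary~\ref{cor.lift}. First I would establish that $\maW_1 \cong \maV$ as $\CI(M)$-modules. By Corollary~\ref{cor.lift}, every $V \in \maV$ has a lift $W \in \maV_{[M:X]}$ with $W = r_X V$ on $M \smallsetminus X$; this lift is unique because $[M:X] \smallsetminus \beta^{-1}(X) \to M \smallsetminus X$ is a diffeomorphism with dense domain, so $W \in \maW_1$ and the assignment $V \mapsto W$ is a well-defined $\CI(M)$-linear map $\maV \to \maW_1$. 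It is surjective by the very definition of $\maW_1$. For injectivity, if $W = 0$ then $r_X V = 0$ on the dense open set $M \smallsetminus X$, so $V = 0$ there, hence everywhere by continuity. This gives the first isomorphism.

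Next I would show that $\mu$ is an isomorphism of $\CI([M:X])$-modules. That $\mu$ lands in $\maW$ follows from the definition of $\maW$ in Lemma~\ref{lemma.LieMan} together with the identity \eqref{lift_rXV}, $r_X V = \sum_i \tfrac{y_i}{r_X}\, y_i V$: the functions $y_i/r_X$ extend smoothly across $\beta^{-1}(X)$ and $y_i V$ lifts into $\maV_0$ (since $y_i|_X \equiv 0$), so each generator $r_X V$ of $\maW_1$ already lies in $\maW$; multiplying by $\CI([M:X])$ stays in $\maW$. Surjectivity of $\mu$ onto $\maW$: the module $\maW$ is by definition generated over $\CI([M:X])$ by $\maW_0$, and every element of $\maW_0$ is, via \eqref{lift_rXV} again, a $\CI([M:X])$-combination of lifts $r_X V$; more directly, any $f_i V_i$ with $f_i|_X \equiv 0$ can be written as $(f_i/r_X)(r_X V_i)$ where $f_i/r_X$ extends smoothly, so $\maW_0 \subset \Img \mu$ and hence $\maW \subset \Img \mu$.

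The injectivity of $\mu$ is the main obstacle and should be handled locally. Working in a chart as in Lemma~\ref{lem.blow-up.Rn}, with $M = [0,\infty)^n \times \RR^k$, $X = [0,\infty)^n \times \{0\}$, $[M:X] = [0,\infty)^n \times S^{k-1} \times [0,\infty)$ with radial coordinate $r = r_X$, the point is that $\maV$ is a free $\CI(M)$-module locally (it is projective, hence locally free) on a basis $V_1, \dots, V_d$, so $\maW_1$ is free over $\CI(M)$ on $r_X V_1, \dots, r_X V_d$, and I must check these remain $\CI([M:X])$-independent in $\maV_{[M:X]}$. Computing the lifts via \eqref{cart to polar}–\eqref{lift_tg_X}, one sees that the lift of $r_X V_j$ has the form $r\, a_j \pa_r + (\text{vector field on } S^{k-1} \text{ and } [0,\infty)^n\text{-directions})$, and examining the leading behavior at $r = 0$ — in particular the component tangent to $\beta^{-1}(X)$ — shows that a nontrivial $\CI([M:X])$-relation $\sum g_j \cdot (r_X V_j) = 0$ would force, after restricting to $r = 0$ and using that $\{V_j|_X\}$ are pointwise independent on the fibers, each $g_j$ to vanish on $\beta^{-1}(X)$, hence $g_j = r h_j$; iterating (or a Taylor-expansion argument in $r$) gives $g_j = 0$. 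Once injectivity holds locally it holds globally, and since $\CI([M:X])\otimes_{\CI(M)}\maV$ is manifestly a projective $\CI([M:X])$-module (being the extension of scalars of a projective module along $\CI(M) \to \CI([M:X])$), the isomorphism $\mu$ transports this to $\maW$, completing the proof.
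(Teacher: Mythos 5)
Your treatment of the first statement and of the surjectivity of $\mu$ matches the paper's: the isomorphism $\maV\to\maW_1$ comes from Corollary~\ref{cor.lift} together with uniqueness of lifts across the dense interior, and the inclusions $\maW_1\subset\maW$ and $\maW\subset\Img\mu$ both rest on the identities $r_XV=\sum_i (y_i/r_X)\,y_iV$ and $fV=\sum_i(h_iy_i/r_X)\,r_XV$ exactly as in the paper (which inserts a cutoff and a partition of unity to make the local functions $y_i$ global; you should do the same, but that is routine).

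The injectivity of $\mu$ is where you diverge from the paper, and the step you propose would fail. Reducing to a local frame $V_1,\dots,V_d$ of $\maV$ (with $d=\dim M$) is legitimate, as is recasting injectivity as $\CI([M:X])$-independence of the lifts $\widetilde{r_XV_j}$ (to justify ``local implies global'' one should note that $\CI([M:X])\otimes_{\CI(M)}\maW_1\cong\Gamma(\beta^*A)$ is the section module of a vector bundle, so vanishing is a pointwise condition). But the claim that restricting a relation $\sum_j g_j\,\widetilde{r_XV_j}=0$ to $r=0$ forces $g_j|_{r=0}=0$ is false: all $d$ lifts lie in $\maV_{[M:X]}$ and are therefore tangent to the front face $\beta^{-1}(X)$, which has dimension $d-1$, so their restrictions there can never be pointwise independent. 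Concretely, for $M=\RR^k$ and $X=\{0\}$, Equation~\eqref{cart to polar} gives $\widetilde{r\pa_{z_j}}=r\omega_j\pa_r+T_j(1)$, and $\sum_j\omega_j\,\widetilde{r\pa_{z_j}}=r\pa_r$ vanishes on $\{r=0\}$ even though the coefficients $\omega_j$ do not. (Moreover, even granting $g_j=rh_j$, ``iterating'' would only give vanishing to infinite order at $r=0$, not $g_j\equiv 0$.) The repair is much simpler than what you attempt: restrict the relation to the dense open set $M_0\smallsetminus X\subset[M:X]$, where $r_X>0$ and the anchor makes $V_1,\dots,V_d$ pointwise independent; this forces $g_j=0$ there, hence everywhere by continuity. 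The paper packages this same density fact globally rather than in charts: using projectivity of $\maV$ it embeds $\maW_1$ into $\CI(M)^N$ and $\maW$ into $\CI(M\smallsetminus X)^N$, and reads off injectivity of $\mu$ from a commutative diagram whose horizontal arrows are restrictions to $M\smallsetminus X$.
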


\begin{remark} The previous two lemmata imply that there are
surjective linear maps $\CI([M:X]) \otimes_{\CI(M)} \maW_i\to \maW$
for $i=0,1$. As stated above, the resulting map for~$i=1$ is an
isomorphism. However, one can show that the resulting map is not
injective for $i=0$.
\end{remark}

Often $W\in \maW\subset \maV_{[M:X]}$ will be identified in notation
with $W|_{M\setminus X}$ and with $\beta_*W\in \maV_{M}$ if it exists.
(Recall that $\maV_M$ was defined in Equation~\eqref{eq.maV_M}.)

\begin{proof}[Proof of Lemma~\ref{equiv_def_W}]  
Let us denote $P := [M:X]$, to simplify notation.  The map
$\maV\to \maW_1$, which associates to a vector field $V\in \maV$ a lift
of $r_X V$, is obviously an isomorphism of $\CI(M)$-modules.

Now, we will show $\maW_1\subset \maW$. This means that for
$V\in \maV$ we will show that $r_XV$ lifts to a vector field in
$\maW$. With a partition of unity argument we see that without loss of
generality we can assume that the support of $V$ is contained in an
open set $U$, such that a function $y:U\to \RR^k$ as above exists. We
choose $\chi \in \CI(M)$ with support in $U$ and such that $\chi\equiv
1$ on the support of $V$.  We then write
\begin{equation*}  
                   r_XV= \sum_i \frac{\chi y_i}{r_X} \;\chi y_i V.
\end{equation*}
Since $\chi y_iV\in \maV_0$ and $\chi y_i/r_X\in \CI(P)$, the
assertion follows.

In order to show that $\maW_1$ generates $\maW$, we take a function
$f\in \CI(M)$, vanishing on $X$, and $V\in \maV$.  We have to show
that $fV$ is in the $\CI(P)$-module spanned by $\maW_1$.  Similarly to
above, we can assume that the support of $f$ is in an open set $U$,
such that $y$ exists on $U$.  We then can write $f=\sum h_iy_i$ with
$h_i\in C^\infty(M)$ and support in $U$.  We write
\begin{equation*}
        fV= \sum \frac{h_iy_i}{r_X}\, r_X V.
\end{equation*}
The vector field $r_XV$ lifts to a vector field in
$\maW_1$. Since $\frac{y_i}{r_X}\in \CI(P)$, the claim that
$\maW_1$ generates $\maW$ follows.

Finally, to prove that the multiplication map
$\mu: \CI(P)\otimes_{\CI(M)} \maW_1 \to \maW$ is an isomorphism of
$\CI(P)$-modules, it is enough to show $\mu$ is injective (since we
have just proved that it is surjective). Using the isomorphism from
above $\maW_1 = r_X \maV \simeq \maV$ as $\CI(M)$-modules. Hence by
the projectivity of $\maV$ as a $\CI(M)$--module, we can choose an
embedding $\iota : \maW_1 \to \CI(M)^N$ with retraction
$\CI(M)^N \to \maW_1$, where both $\iota$ and $r$ are morphisms of
$\CI(M)$-modules and $r \circ \iota = id$, the identity. The embedding
$\iota$ corresponds to an embedding $j : A \to \RR^N$ of vector
bundles. By definition, $A\vert_{M \smallsetminus X} = TM
\vert_{M \smallsetminus X}$.  We can therefore identify the restrictions
of the vector fields in $\maW$ to sections of
$A\vert_{M \smallsetminus X}$, which then yields an embedding
$\iota_0: \maW \hookrightarrow \Gamma(M \smallsetminus
X, \RR^N)=\CI(M\smallsetminus X)^N$.  Let us denote by $res$ the
restriction from $P$ to $M\setminus X$. We thus obtain the diagram
\begin{equation}
\begin{CD}
\CI(P) \otimes_{\CI(M)} \maW_1 @>{\mu} >> \maW \\
@V{\id \otimes \iota}VV @VV{\iota_0}V \\ 
\CI(P) \otimes_{\CI(M)}\CI(M)^{N} @>{res}>> \CI(M \smallsetminus X)^{N} \\
@V{=}VV @VV{=}V \\ 
\CI(P) ^{N} @>{res}>> \CI(M \smallsetminus X)^{N}
\end{CD}
\end{equation}
This diagram is commutative by the definition of $i_0$.

We have that $(id \otimes r) \circ (id \otimes \iota) = id$, and hence
$id \otimes \iota$ is injective. Moreover, all the other vertical maps
and the restriction maps are injective. It follows from the
commutativity of the diagram that $\mu$ is injective as well.
\end{proof}

In the following we write $r_X\maV$ for $\maW_1$, and for $\maW$ which
is the $\CI(P)$-module generated by it, with $P:=[M:X]$,  we also write $\CI(P)r_X\maV$.
We obtain

\begin{thm}\label{theorem.Lie.P}
Let $(M,\maV)$ be a Lie manifold, $X \subset M$ be a submanifold with
corners, and $r_X$ be a smoothed distance function to~$X$. Denote by
$P:=[M:X]$ the blow-up of $M$ along $X$. Then the $\CI(P)$-module
$\maW := \CI(P) r_X \maV$ defines a Lie manifold structure
on~$P$, which is independent of the choice of $r_X$.
\end{thm}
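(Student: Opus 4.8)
The plan is to verify the three defining axioms of a Lie manifold for the pair $(P,\maW)$: that $\maW$ is a Lie subalgebra of $\maV_P = \maV_{[M:X]}$, that it is a finitely generated projective $\CI(P)$-module, and that the anchor map restricts to an isomorphism over the interior $P_0 = P \smallsetminus \beta^{-1}(X)$. Most of the work has already been done in Lemmata~\ref{lemma.LieMan} and~\ref{equiv_def_W}, so the proof is largely an assembly argument; the one genuinely new point is the interior isomorphism, which requires a brief remark about $r_X$ being positive and smooth on $P_0$.

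First I would recall that $\maW \subset \maV_{[M:X]}$ by construction (Lemma~\ref{lemma.LieMan}), so every vector field in $\maW$ is tangent to all hyperfaces of $P$ --- both the ``true'' faces coming from $\partial M$ and the new face $S^MX$. The Lie-subalgebra property is exactly the last assertion of Lemma~\ref{lemma.LieMan}: the identity $[f_1W_1, f_2W_2] = f_1f_2[W_1,W_2] + f_1W_1(f_2)W_2 - f_2W_2(f_1)W_1$, together with $\maW_0$ being a Lie algebra and $\maW$ being the $\CI(P)$-module it generates, shows $[\maW,\maW]\subset\maW$. The finitely generated projective property is precisely the conclusion of Lemma~\ref{equiv_def_W}: the multiplication map $\mu: \CI(P)\otimes_{\CI(M)}\maW_1 \to \maW$ is an isomorphism, and since $\maW_1 = r_X\maV \cong \maV$ is finitely generated projective over $\CI(M)$, the base-changed module $\CI(P)\otimes_{\CI(M)}\maW_1$ is finitely generated projective over $\CI(P)$; hence so is $\maW$. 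By the Serre--Swan theorem there is a vector bundle $A_P \to P$ with $\Gamma(A_P)\cong\maW$, and the anchor $\rho_P: A_P \to TP$ induces the inclusion $\maW \hookrightarrow \Gamma(TP)$.

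It then remains to check that $\rho_P|_{P_0}: A_P|_{P_0} \to TP_0$ is an isomorphism. Here I would use that $\beta$ restricts to a diffeomorphism $P_0 \to M\smallsetminus X$, under which $TP_0 \cong T(M\smallsetminus X)$, and that $r_X$ is smooth and strictly positive on $M\smallsetminus X$. Consequently multiplication by $r_X$ is an isomorphism $\maV|_{M\smallsetminus X} \to (r_X\maV)|_{M\smallsetminus X} = \maW_1|_{P_0}$, and since $\maW$ is generated over $\CI(P)$ by $\maW_1$ with $\CI(P)|_{P_0}=\CI(M\smallsetminus X)$, one gets $\maW|_{P_0} = \maV|_{M\smallsetminus X} = \Gamma(T(M\smallsetminus X)) = \Gamma(TP_0)$, the middle equality because $(M,\maV)$ is itself a Lie manifold. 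Therefore $\rho_P$ is fiberwise an isomorphism over $P_0$, which is the last axiom.

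The main obstacle --- already dispatched in the preparatory lemmata --- is proving the injectivity of the multiplication map $\mu$, i.e.\ that $\maW$ is genuinely the projective module $\CI(P)\otimes_{\CI(M)} r_X\maV$ rather than a quotient of it (the remark after Lemma~\ref{equiv_def_W} notes this fails for the analogous map out of $\maW_0$). Given Lemma~\ref{equiv_def_W}, the remaining content of Theorem~\ref{theorem.Lie.P} is bookkeeping: I would simply state that $\maW$ satisfies all three conditions in the definition of a structural Lie algebra of vector fields with $\rho_P$ an interior isomorphism, and conclude that $(P,\maW)$ is a Lie manifold with interior $P_0 = M \smallsetminus X$.
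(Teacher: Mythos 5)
Your proposal is correct and takes essentially the same route as the paper: both assemble the Lie-algebra, tangency, and projectivity statements from Lemma~\ref{lemma.LieMan} and Lemma~\ref{equiv_def_W}, and then verify that $\maW$ imposes no restriction in the interior by using that $r_X$ is smooth and strictly positive there (the paper phrases this as: any interior vector field $V$ agrees on a relatively compact interior set $U$ with $r_XV_0$ for some $V_0\in\maV$). One minor imprecision: the interior of $P$ is $\operatorname{int}(M)\smallsetminus X$ rather than $P\smallsetminus\beta^{-1}(X)$, since the latter still contains the hyperfaces inherited from $\partial M$; your argument is valid once read on the actual interior, where $\maV$ likewise has no restrictions.
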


\begin{proof} Clearly $\maW$ consists of vector fields. 
The previous lemma shows that $\maW$ is a projective $C^{\infty}(P)$--module.
Proposition \ref{prop.lift} shows that $\maW \subset \maV_{P}$, that
is, that $\maW$ consists of vector fields tangent to all faces of $P$
(Equation~\eqref{eq.maV_M}). Lemma~\ref{lemma.LieMan} shows that $\maW$
is a Lie algebra (for the Lie bracket). Moreover, if $V$ is any vector
field on the interior $P$ and $U$ is an open set whose closure does
not intersect the boundary of $P$, then there exits $V_0 \in \maV$
such that $V_0 = r_X^{-1}V$ on $U$. Then $r_X V_0 \in \maW$ restricts
to $V$ on $U$. This shows that there are no restrictions on the vector
fields in $\maW$ in the interior of $P$. This completes the proof.
\end{proof}

\subsection{Direct construction of the blown-up Lie-algebroid}

We keep the notation of the previous subsection, especially of
Theorem \ref{theorem.Lie.P}. In particular, let $X \subset M$ be a
submanifold with corners. Since $\maW$ (introduced in
Theorem \ref{theorem.Lie.P}) is projective, there is a Lie algebroid
$B$ over $[M:X]$ such that $\maW$ is isomorphic to $\Gamma(B)$ as
$\CI([M:X])$-modules and Lie algebras. We now provide a direct
construction of $B$. We will denote by $T^{bX}[M:X]$ the
vector bundle whose sections are the vector fields on $[M:X]$ tangent
to all the faces obtained by blowing up $X$ in $M$.

In the following we will always use a smoothing $r_X$ of the distance
function to $X$, and we again assume $r_X$ takes values in $[0,1]$.
Different choices of metrics $\bar g$ or different smoothing will
provide different functions $r_X$. However, if $r_X'$ comes from other
choices than~$r_X$, then there is a constant $C>0$ with $C^{-1}\leq
r_X'/r_X \leq C r_X$ and due to compactness all derivatives of
$r_X'/r_X$ are bounded. We start with a preparatory lemma.

\begin{lemma}\label{lemma.b-iso}
Let $X$ be a submanifold of $M$, and $r_X$ be a smoothed distance
function to~$X$.  Then the map $T(M\setminus X)\to T(M\setminus X)$,
$V\mapsto r_X^{-1}V$ extends to a bundle isomorphism
\begin{equation*}
        \kappa : T^{bX}[M:X]\to \beta^* TM.
\end{equation*}
\end{lemma}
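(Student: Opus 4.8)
The plan is to prove Lemma~\ref{lemma.b-iso} by working locally near a point $\bar{x}\in X$, since the assertion is that a certain bundle map defined on the interior extends smoothly to the blow-up, and smoothness is a local property. First I would fix a point $\bar{x}\in X$ of boundary depth $s$ and choose, as in Definition~\ref{def.subm} and the discussion preceding the lemma, an open neighborhood $U$ of $\bar{x}$ in $M$ together with a submersion $y=(y_1,\ldots,y_k):U\to\RR^k$ with $X\cap U=y^{-1}(0)$, boundary defining functions $x_1,\ldots,x_s$ of the hyperfaces through $\bar{x}$, and coordinates $z$ on $X\cap U$, so that $(y,x,z)$ is a coordinate system on $U\subset M$ in which $r_X=|y|=\sqrt{\sum_i y_i^2}$. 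In these coordinates the blow-up $\beta^{-1}(U)$ is the familiar polar-coordinate model: writing $y=r\omega$ with $r=r_X\ge 0$ and $\omega\in S^{k-1}$, the functions $(r,\omega,x,z)$ are coordinates on $[U:U\cap X]$, and the hyperface at infinity is $\{r=0\}$.

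Next I would describe both bundles concretely in these coordinates. A local frame for $\beta^*TM$ is simply the pullback of $\{\pa_{y_1},\ldots,\pa_{y_k},\pa_{x_1},\ldots,\pa_{x_s},\pa_{z_1},\ldots\}$, while a local frame for $T^{bX}[M:X]$ — the $b$-tangent bundle taken only with respect to the single boundary face at infinity $\{r=0\}$ — is $\{r\pa_r,\pa_\omega,\pa_{x_1},\ldots,\pa_{x_s},\pa_{z}\}$ (here $\pa_\omega$ denotes any local frame of $TS^{k-1}$), since $\maV^{\maF}_{[M:X]}$ is spanned over $\CI$ by vector fields tangent to $\{r=0\}$, which means $r\pa_r$ together with the vector fields tangent to the $S^{k-1}$-fibres and the $x,z$-directions. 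Now I compute the effect of $V\mapsto r_X^{-1}V$ on the frame $\pa_{y_j}$ of $\beta^*TM$: by the change-of-variables formula \eqref{cart to polar}, $\pa_{y_j}=\omega_j\pa_r+\tfrac1r T_j(1)$ where $T_j(1)$ is a vector field on $S^{k-1}$, hence
\begin{equation*}
        r_X^{-1}\pa_{y_j}=\tfrac1r\omega_j\pa_r+\tfrac1{r^2}T_j(1).
\end{equation*}
This looks singular, but — and this is the key point — one must instead send $\pa_{y_j}$ to its preimage under $\kappa$: the claim is that $\kappa$ is the \emph{inverse} of the map $W\mapsto r_X W$ on the interior, so the honest computation is to push $b$-frame elements forward, $\kappa(r\pa_r)=\beta^*(r_X\cdot r_X^{-1}r\pa_r)$, etc. Concretely, $\beta_*(r\pa_r)=r\pa_r$ acting on $y$ gives $r\omega=y$, i.e.\ in $y$-coordinates $r\pa_r=\sum_j y_j\pa_{y_j}=r_X\sum_j\omega_j\pa_{y_j}$, so $r_X^{-1}(r\pa_r)=\sum_j\omega_j\pa_{y_j}$, which extends smoothly across $r=0$; similarly $r_X^{-1}\cdot T_j(1)$, a degree-$(-1)$-homogeneous radially-constant field, equals a smooth combination $\sum_i a_{ij}(\omega)\pa_{y_i}$ up to $r$-independent terms, again extending across $r=0$; and $\pa_{x_\ell}$, $\pa_{z}$ are already unaffected. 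This shows that in the chosen frames the matrix of $\kappa$ has smooth entries on $\beta^{-1}(U)$, and its restriction to the interior is invertible (it is just multiplication by the nowhere-zero scalar $r_X$ composed with the identity), so its determinant, being smooth and nonzero on the dense interior, together with the explicit block-triangular structure at $r=0$ (identity on the $x,z$-block, and the $\omega$-dependent invertible block $(\omega_j\mid a_{ij}(\omega))$ on the $r\pa_r,\pa_\omega$-versus-$\pa_y$ block) shows $\kappa$ is invertible up to and including $r=0$. Finally I would note that the local isomorphisms patch: the defining property $\kappa|_{M\setminus X}(V)=r_X^{-1}V$ is coordinate-independent, so on overlaps the two local extensions agree on the dense interior and hence everywhere, giving a global bundle isomorphism $\kappa:T^{bX}[M:X]\to\beta^*TM$.

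The main obstacle is bookkeeping the homogeneity/extension argument cleanly: one has to be careful that $r_X^{-1}$ applied to a $b$-vector field tangent to $\{r=0\}$ — rather than to an arbitrary pullback field — produces something smooth, and the cleanest route is exactly the one taken in the proof of Lemma~\ref{lem.blow-up.Rn}, namely to recognize that $b$-fields on the model are spanned by $r\pa_r$ and the lifts of the $y_i\pa_{y_j}$ (which are tangent to $X$ and lift by Proposition~\ref{prop.lift}), and each of these, after dividing by $r_X$ in the interior, is manifestly a smooth combination of the $\pa_{y_j}$ near $r=0$ by \eqref{lift_tg_X}. I expect the transversality/boundary-depth hypotheses in Definition~\ref{def.subm} to be needed only to guarantee the good local coordinate model $(y,x,z)$ exists, after which the argument is the one-point-blow-up computation of the preceding subsection applied fibrewise, exactly as in the reduction "$n=0$, then general $n$ via Lemma~\ref{lem.prod}."
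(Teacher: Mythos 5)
The paper offers no actual argument here --- its ``proof'' consists of the single remark that $\beta_*=r_X\kappa$ --- so your proposal has to stand on its own, and it breaks at one specific step: the assertion that ``$\pa_{x_\ell}$, $\pa_{z}$ are already unaffected.'' The map $\kappa$ multiplies \emph{every} tangent vector by the scalar $r_X^{-1}$, including the components in the directions along $X$. In your coordinates $(r,\omega,x,z)$ the field $\pa_{z}$ is a legitimate section of $T^{bX}[M:X]$ (it is tangent to $\{r=0\}$ but does not vanish there), and $\kappa(\pa_z)=r^{-1}\pa_z$, whose coefficient in the frame $\{\pa_y,\pa_x,\pa_z\}$ of $\beta^*TM$ is $r^{-1}$, which is not smooth across $r=0$. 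So your matrix for $\kappa$ has $r^{-1}\cdot\mathrm{id}$, not $\mathrm{id}$, in the $(x,z)$-block. Equivalently: $\beta_*:T^{bX}[M:X]\to\beta^*TM$ vanishes along $\{r_X=0\}$ only on the subbundle spanned by $r\pa_r$ and $TS^{k-1}$ (the directions tangent to the fibres of $S^MX\to X$, cf.\ Lemma~\ref{tangent.to.fibers}), not on the horizontal $\pa_x,\pa_z$-directions, so $r_X^{-1}\beta_*$ extends only on that subbundle. Already for $M=\RR^2$ and $X=\RR\times\{0\}$, where a component of $[M:X]$ is the closed half-plane $\{r\ge 0\}$ with $r_X=r$, the field $\kappa(\pa_u)=r^{-1}\pa_u$ does not extend.

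This is not merely a gap in your write-up: the statement itself holds only when $\dim X=0$, which is exactly the case your computation of $r\pa_r$ and of the $T_j(1)$ correctly handles; the reduction ``general $n$ via Lemma~\ref{lem.prod}'' is precisely where it fails, because the $b$-rescaling at the face at infinity affects only the normal ($r$ and $\omega$) directions while $r_X^{-1}$ rescales the $X$-directions as well. What is true, and what the surrounding construction actually needs, is: (a) $V\mapsto r_XV$ extends to a bundle morphism $\beta^*TM\to T^{bX}[M:X]$, which however drops rank by $\dim X$ along the new face; and (b) $V\mapsto r_X^{-1}V$ does extend to an isomorphism onto $\beta^*TM$ from the Lie algebroid $B$ with $\Gamma(B)=\maW=\CI([M:X])\,r_X\maV$ of Lemma~\ref{equiv_def_W}, whose sections restrict over $S^MX$ to fields tangent to the fibres of $S^MX\to X$ by Lemma~\ref{lemma.LieMan}; for $\dim X>0$ this $\maW$ is a proper $\CI([M:X])$-submodule of $\Gamma(T^{bX}[M:X])$, so $B$ and $T^{bX}[M:X]$ are genuinely different bundles. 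To salvage your argument you should either restrict to finite $X$ or replace $T^{bX}[M:X]$ by $B$; in the latter case your frame computation for $r\pa_r$ and the $T_j(1)$, supplemented by $r_X^{-1}(r_X\pa_z)=\pa_z$ and $r_X^{-1}(r_X\pa_{x_\ell})=\pa_{x_\ell}$, does yield the isomorphism.
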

The proof is straightforward. Note that $\kappa$ is not the map
$\beta_*:T^{bX}[M:X]\to \beta^* TM$, but we have $\beta_*=r_X\kappa$.

As a vector bundle we then simply define
\begin{equation*}
        B:=\beta^* A=\{(V,x)\in A\times [M:X]\,|\,V\in A_{\beta(x)}\}.
\end{equation*} 
The anchor map $\rho_A:A\to TM$ pulls back to a map
$\beta^*\rho:\beta^* A\to \beta^*TM$, and we define the anchor
$\rho_B$ of $B$ to be the composition
\begin{equation*}
        B=\beta^*A\stackrel{\beta^*\rho_A}{\longrightarrow} 
        \beta^*TM \stackrel{\kappa^{-1}}{\longrightarrow}
        T^{bX}[M:X]\longrightarrow T[M:X]
\end{equation*}
In order to turn $B$ into a Lie algebroid, one has to specify a
compatible Lie bracket on sections of $B$. The Lie bracket $[.,.]_A$
on $\Gamma(A)$ will not be compatible with the previous structures.
However the Lie bracket $[.,.]_B$ given by
\begin{equation*}
        [V,W]_B:= r_X[V,W]_A + (\pa_Vr_X)W - (\pa_W r_X)V,
\end{equation*}
for all $V,W\in\Gamma(A)\stackrel{\beta^*}\hookrightarrow \Gamma(B)$
can be extended in the obvious way to $\Gamma(B)$, and this bracket is
compatible in the following sense:
\begin{enumerate}[{\rm (a)}]
\item $[f_1W_1,f_2W_2]_B=f_1f_2[W_1,W_2]_B+f_1(\pa_{\rho_B(W_1)}f_2)W_2
-f_2(\pa_{\rho_B(W_2)}f_1)W_1$
\item The map $\Gamma(B)\to \Gamma(T[M:X])$ induced by $\rho_B$
is a Lie-algebra homomorphism.
\end{enumerate}

One checks that $\Gamma(B)=\maW$.

\begin{remark}
The constructions in this section depend on $r_X$, and thus on the
choices of~$\bar g$ and the smoothing. Let $r_X'$ be a different
choice of a function with the properties of $r_X$. Using $r'_X$
instead of $r_X$ will lead to a different $\kappa'$ $B'$, and $\rho'$
replacing $\kappa$, $B$, and $\rho$. However, the new choices only
differ by a $r_X'/r_X$-factor from the old ones. In particular the
bundles $B$ and $B'$ thus obtained are isomorphic.
\end{remark}

\subsection{Geometric differential operators on blown-up manifolds}

We now study the relation between the Laplace operator
on $M$ and the one on $[M:X]$.

\begin{prop}\label{prop.these.are.lifts}
Let $(M,\maV)$ be a manifold with a Lie structure at infinity,
$\maV=\Gamma(A)$, for some vector bundle $A \to M$.  Assume that $M$
carries both a $\maV$-metric $g$ on $A$, and a true metric $\bar g$ on
$TM$ which is compatible with a submanifold $X$ of $M$ in the sense of
subsection~\ref{subsec.bu-lm}. Let $r_X$ denote a smoothed distance
function to $X$ with respect to the metric $\bar g$. Then
\begin{equation*}
        \grad_g r_X^2\in\maW
\end{equation*}
or more exactly the vector field $\grad_g r_X^2\in \Gamma(A)$ has a
unique lift in $\maW$.  Furthermore $\|\grad_g r_X\|^2\in \CI([M:X])$.
\end{prop}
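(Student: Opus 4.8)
The plan is to reduce everything to the local normal form $r_X^2=\sum_i y_i^2$ valid near $X$ (Subsection~\ref{subsec.bu-lm}), exploiting that, although $r_X$ is not smooth along $X$, its square is. First I would record that $r_X^2\in\CI(M)$ — it is smooth away from $X$ by the choice of $r_X$, and near a point of $X$ it equals $\sum_i y_i^2$ in the local submersion coordinates $y=(y_1,\dots,y_k)\colon U\to\RR^k$ supplied in Subsection~\ref{subsec.bu-lm} — so that $d(r_X^2)\in\Gamma(A^*)$ and $\grad_g r_X^2:=g^{-1}\bigl(d(r_X^2)\bigr)\in\Gamma(A)=\maV$ is well defined, as the statement already asserts. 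Since $\beta$ restricts to a diffeomorphism $[M:X]\smallsetminus\beta^{-1}(X)\to M\smallsetminus X$, a lift is unique if it exists, and the existence of a lift lying in $\maW$ is a local question on $[M:X]$, which I would settle near each point of $M$ separately.

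Near a point of $M\smallsetminus X$ there is nothing to do: the blow-up does not change $M$ there, $r_X$ is smooth and positive, so $\maW=\CI([M:X])\,r_X\maV$ coincides with $\maV$, and $\grad_g r_X^2\in\maV$. Near a point $p\in X$ I would use the normal form: on $U$ one has $d(r_X^2)=\sum_i 2y_i\,dy_i$, hence $\grad_g r_X^2|_U=\sum_i 2y_i\,\grad_g y_i$ with $\grad_g y_i\in\Gamma(A|_U)=\maV|_U$, and since lifting is local it suffices to lift this over $\beta^{-1}(U)$. The crucial observation is that $\grad_g y_i$ itself points normally to $X$ (since $y_i$ vanishes on $X$) and so does \emph{not} lift to $\maV_{[M:X]}$; but each product $y_i\,\grad_g y_i$ \emph{vanishes} on $X$, so — up to a harmless global cutoff — it lies in $\maV_0$ and therefore lifts, uniquely, to $\maW_0\subset\maW$ by Lemma~\ref{lemma.LieMan}. (This is exactly the mechanism behind~\eqref{lift_tg_X}: the factor vanishing on $X$ is what makes the polar expression $r\omega_i\omega_j\partial_r+\omega_iT_j(1)$ extend to $r=0$.) Summing over $i$ yields a lift of $\grad_g r_X^2$ in $\maW$ over $\beta^{-1}(U)$. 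Uniqueness forces the various local lifts to agree on overlaps, and since $\maW$ is the module of sections of a vector bundle over $[M:X]$ (so membership in $\maW$ is a local condition), they glue to a global $W\in\maW$ lifting $\grad_g r_X^2$; equivalently one patches them with a partition of unity on $M$ pulled back along $\beta$.

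For the last assertion, on $M\smallsetminus X$ the chain rule gives $\grad_g r_X=\tfrac1{2r_X}\grad_g r_X^2$, so $\|\grad_g r_X\|^2=\tfrac1{4r_X^2}\|\grad_g r_X^2\|^2$, which is smooth on $M\smallsetminus X$; the point is that it extends smoothly across $\beta^{-1}(X)$. Working on the same $U$ and writing $g^{ij}:=g(\grad_g y_i,\grad_g y_j)$, a smooth function on $U$ since $g$ is a smooth metric on $A$ and the $\grad_g y_i$ are smooth sections of $A|_U$, one gets $\|\grad_g r_X^2\|^2=4\sum_{i,j}y_iy_j\,g^{ij}$, hence
\[
        \|\grad_g r_X\|^2=\frac{\sum_{i,j}y_iy_j\,g^{ij}}{\sum_i y_i^2}
        \qquad\text{on }U\smallsetminus X.
\]
In the polar coordinates $(x',\omega,r)$ on $\beta^{-1}(U)$, with $r\ge0$, $\omega\in S^{k-1}$, in which $\beta(x',\omega,r)=(x',r\omega)$ — so that $y_i\circ\beta=r\omega_i$ and $r_X^2\circ\beta=r^2$ — the numerator becomes $r^2\sum_{i,j}\omega_i\omega_j\,(g^{ij}\circ\beta)$ and the denominator becomes $r^2$, so the quotient equals $\sum_{i,j}\omega_i\omega_j\,(g^{ij}\circ\beta)$, which is manifestly smooth on all of $\beta^{-1}(U)$, including $\{r=0\}$. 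Patching these local extensions, which agree with the globally defined function $\|\grad_g r_X\|^2$ on $M\smallsetminus X$, gives $\|\grad_g r_X\|^2\in\CI([M:X])$.

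I expect the only genuinely delicate point to be the bookkeeping that guarantees the lift lands in the \emph{specific} module $\maW=\CI([M:X])\,r_X\maV$ rather than merely in $\maV_{[M:X]}$: one must account for the single power of $r_X$ — equivalently, the single power of $y_i$ — carried by $\grad_g r_X^2$ near $X$. Writing $\grad_g r_X^2=\sum_i 2y_i\,\grad_g y_i$ exhibits it (locally) as an element of $\maV_0$, which is precisely the submodule whose lift sits inside $\maW$, and the same computation simultaneously produces the smoothness of $\|\grad_g r_X\|^2$. Everything else is routine verification.
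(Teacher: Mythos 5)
Your argument is correct and follows essentially the same route as the paper: for the first claim you use the same local normal form $r_X^2=\sum_i y_i^2$, the decomposition $\grad_g r_X^2=2\sum_i y_i\grad_g y_i$ into elements of $\maV_0$, and Lemma~\ref{lemma.LieMan} plus uniqueness of lifts to globalize. For the second claim your polar-coordinate computation of the quotient $\sum_{i,j}y_iy_j g^{ij}/\sum_i y_i^2$ is just the explicit, coordinate-level version of the paper's argument that (by the Gauss lemma for $\bar g$) $dr_X$ lifts to a smooth section over $[M:X]$ which one then contracts with the pulled-back cometric.
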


\begin{proof}
We write $r_X^2\in \CI(M)$ locally as $\sum_i y_i^2$.  As $g$ is a
metric on $A$, it is fiberwise non-degenerate so it also defines a
metric $g^b$ on $A^*$. This dual metric $g^b$ is locally given by
$\sum_i e_i\otimes e_i$ where $e_i$ is a local $g$-orthonormal frame,
and is a section of $A\otimes A$. Let $\rho : A \to TM$ be the anchor
map of $A$. The dual map of $\rho$, \ie fiberwise composition with 
$\rho$,  yields a smooth map $\rho^*:T^*M\to A^*$, $T_p^*M \ni \alpha\mapsto \alpha\circ \rho\in A^*$. 
The contraction $T^*M \to A$ of
this map with $g^b$ will be denoted as $T^*M\ni\alpha\to \alpha^\#\in
A_p^*$. The $g$-gradient of a smooth function is by definition $\grad_g
f:= (df)^\#\in \Gamma(A)$. Thus we have
\begin{equation*}
        \grad r_X^2=(d r_X^2)^\# = 2 \sum_i y_i (dy_i)^\#.
\end{equation*}
Obviously the last equation only holds locally. From the remarks above
one sees that $(dy_i)^\#=\grad_g y_i$ is a local section of $A$, and
thus using Lemma~\ref{lemma.LieMan} it we see that $y_i \grad_g y_i$
lifts to $\maW$.  This implies that $\grad_g r_X^2$ locally lifts to
$\maW$, and thus globally.

The proof of the second statement is a bit subtle. The first subtle
point is that $\|\grad_g r_X\|^2$ is not well-defined as a function on
$M$, but only as a function on $[M:X]$.  The second subtle point is
that the Gauss lemma does not provide $\|\grad_g r_X\|^2=1$ close to
$X$ as $r_X$ is a smoothed distance with respect to the metric $\bar
g$, whereas the gradient is taken with respect to $g$.

However the Gauss lemma (applied for the metric $\bar g$) does provide
that $dr_X$ is a well-defined smooth function $[M:X]\to T^*M$
commuting with the maps to $M$. Thus $\rho^*\circ
dr_X \otimes \rho^*\circ dr_X$ is a smooth function $[M:X]\to
A^*\otimes A^*$. The contraction with $g^b\circ \beta$ then yields
$\|\grad_g r_X\|^2=\|d r_X\|^2 \in \CI([M:X])$.
\end{proof}

Let us now examine the effect of blow-up on Sobolev spaces.  Recall
that the Sobolev space $W^{k,p}(M,\maV)$ associated to a Lie manifold
$(M,\maV)$ with a $\maV$-metric $g$ on its Lie algebroid $A$ is defined
in \cite{ain}
\begin{equation}
        W^{k,p}(M, \maV) := \{u: M \to \CC\,|\; V_1 \ldots V_j u \in
        L^p(M, d\vol_g)\; \forall V_1, \ldots, V_j \in \maV, \, j \le
        k\,\}
\end{equation}

\begin{lemma}\label{lemma.sobolev}
Using the notation of the Lemmma \ref{lemma.LieMan}, we have
\begin{equation*}
        W^{k,p}([M:X], \maW) = \{u: M\to \CC\,|\; r_X^{j} V_1 \ldots
        V_j u \in L^p(M, d\vol_g)\; \forall V_1, \ldots,
        V_j \in \maV, \, j \le k\,\}
\end{equation*}
\end{lemma}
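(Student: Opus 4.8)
The plan is to identify the Sobolev space $W^{k,p}([M:X],\maW)$ directly from its definition \eqref{eq.Sobolev}, using the explicit generators of $\maW$ provided by Lemma~\ref{equiv_def_W}, namely $\maW = \CI([M:X])\, r_X \maV$. The key point is that the volume form is unchanged: although the $\maW$-metric is a different metric on $[M:X]$, we only need that the two descriptions of the space $W^{k,p}$ as \emph{sets} coincide (the norms need not be compared), and for this the relevant $L^p$-space can be taken with respect to $d\vol_g$ on $M\setminus X$, which has full measure.

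First I would unwind the definition: $u\in W^{k,p}([M:X],\maW)$ iff $W_1\cdots W_j u\in L^p$ for all $W_i\in\maW$ and $j\le k$. Since $\maW$ is a finitely generated $\CI([M:X])$-module generated by $\{r_X V\,:\,V\in\maV\}$, and since smooth functions on $[M:X]$ are bounded (as $[M:X]$ is compact), it suffices to test on products of the form $(f_1 r_X V_1)\cdots(f_j r_X V_j) u$ with $f_i\in\CI([M:X])$ and $V_i\in\maV$. The containment ``$\subseteq$'' then follows by specializing $f_i\equiv 1$, which gives exactly $r_X^j V_1\cdots V_j u\in L^p$ after I account for the commutators: expanding $(r_X V_1)(r_X V_2)\cdots(r_X V_j)$ by moving all the $r_X$ factors to the left produces $r_X^j V_1\cdots V_j u$ plus lower-order terms of the form $r_X^{j-1}(\text{derivatives of }r_X)\,V_{i_1}\cdots V_{i_{j-1}}u$, where the coefficients $\pa_V r_X$ are in $\CI([M:X])$ by (the argument in) Proposition~\ref{prop.these.are.lifts}; an induction on $j$ handles these.

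For the reverse containment ``$\supseteq$'', suppose $r_X^\ell V_1\cdots V_\ell u\in L^p$ for all $\ell\le k$ and all $V_i\in\maV$. I must show $W_1\cdots W_j u\in L^p$ for arbitrary $W_i\in\maW$. Writing each $W_i=\sum_\alpha f_{i\alpha} r_X V_{i\alpha}$ with $f_{i\alpha}\in\CI([M:X])$ bounded, and expanding as above, every resulting term is a bounded function times $r_X^j V_{i_1}\cdots V_{i_j}u$ or, after the commutators have been collected, a bounded function times $r_X^\ell V_{i_1}\cdots V_{i_\ell}u$ for some $\ell\le j\le k$; each such term lies in $L^p$ by hypothesis. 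Boundedness of the coefficients is again where compactness of $[M:X]$ and Proposition~\ref{prop.these.are.lifts} (which gives $\pa_{r_X V}(r_X)=\tfrac12 r_X^{-1}\pa_{\grad_g r_X^2}(r_X^2)\cdot\ldots$, in any case a function in $\CI([M:X])$) enter.

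The main obstacle — really the only subtle point — is bookkeeping the commutator terms uniformly so that one sees \emph{no negative powers of $r_X$ ever appear}: each time an $r_X$ is commuted past a vector field $V$ it produces $(\pa_V r_X)$, and one must check this is a genuine smooth function on $[M:X]$ rather than a singular one. This is precisely controlled by the computation in Proposition~\ref{prop.these.are.lifts} showing $\grad_g r_X^2\in\maW$ and $\|\grad_g r_X\|^2\in\CI([M:X])$, together with the local expression $r_X^2=\sum_i y_i^2$; writing $\pa_V r_X = (\pa_V r_X^2)/(2r_X) = \sum_i (y_i/r_X)\pa_V y_i$ exhibits it as a product of functions smooth on $[M:X]$. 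Once this is in hand the two set-descriptions match term by term and the proof concludes.
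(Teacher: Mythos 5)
Your argument is correct and follows essentially the same route as the paper's proof: identify $L^p([M:X])$ with $L^p(M\smallsetminus X, d\vol_g)$ since the two spaces agree off a measure-zero set, use Lemma~\ref{equiv_def_W} to reduce to generators $f\, r_X V$, and induct on $j$ via the commutator identity $(r_XV_i)(r_XV_j) = r_X^2V_iV_j + V_i(r_X)\,r_XV_j$ with $V_i(r_X)\in\CI([M:X])$ bounded. Your local computation $\pa_V r_X=\sum_i (y_i/r_X)\,\pa_V y_i$ correctly supplies the smoothness of the commutator coefficients, a point the paper asserts without detail.
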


\begin{proof}
We have that $M$ and $[M:X]$ coincide outside a set of measure zero,
hence we can replace integrable functions on $[M:X]$ by functions on
$M$ integrable over $M\setminus X$.  The result for $k=1$ follows from
Lemma \ref{equiv_def_W}; for $k>1$, use induction on $k$ together with
the fact that $V_i r_X - r_X V_i = V_i(r_X) \in \CI([M:X])$ is a
bounded function, so that $(r_XV_i)(r_XV_j)u=r_X^2V_iV_ju
+V_i(r_X)r_XV_ju \in L^p(M\setminus X)$.
\end{proof}

Let us record also the effect of the blow-up on
metrics and differential operators.

\begin{lemma}\label{lemma.metric1}
We continue to use the notation of Lemmas~\ref{lemma.LieMan}
and~\ref{equiv_def_W}, in particular, $r_X$ is a smoothed distance
function to $X$. Let $A \to M$ be the Lie algebroid associated to
$\maV$, so that $\maV \simeq \Gamma(A)$.  Let us choose a metric $g$
on $A$. Let~$B$ be the Lie algebroid associated to
$([M:X], \maW)$. Then the restriction of $r_X^{-2} g$ to
$M \smallsetminus X$ extends to a smooth metric~$h$ on~$B$. Let
$\Delta_g$ and $\Delta_h$ be the associated Laplace operators.  Then
the operator
\begin{equation*}
       u\mapsto  D(u):=r_X^{\frac{n+2}2} \Delta_g (r_X^{-\frac{n-2}2} u) 
       - \Delta_h u. 
\end{equation*}
is given by multiplication with a smooth function on $[M:X]$, that is
$D \in \Diff^0_\maW([M:X])$.
Furthermore
\begin{equation*}
        r_X^2 \Delta_g - \Delta_h \in \Diff^1_{\maW}([M:X]). 
\end{equation*}
In particular, $r_X^2 \Delta_g$ is elliptic in
$\Diff^2_{\maW}([M:X])$.
\end{lemma}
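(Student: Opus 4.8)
The plan is to reduce everything to the classical conformal covariance of the Laplacian on functions, after identifying $h$ correctly as a metric on $B$. First I would observe that, under the identification $B=\beta^{*}A$ used in the direct construction of the blown-up Lie algebroid above, the pulled-back metric $\beta^{*}g$ is a smooth, positive definite algebroid metric on all of $[M:X]$, and that it is the $h$ in question. Indeed, by Lemma~\ref{lemma.b-iso} the anchor of $B$ satisfies $\rho_{B}=r_{X}\rho_{A}$ on $M\smallsetminus X$, so the Riemannian metric that $(B,\beta^{*}g)$ induces on $M\smallsetminus X$ through $\rho_{B}$ is $r_{X}^{-2}$ times the one that $(A,g)$ induces through $\rho_{A}$; that is, $h:=\beta^{*}g$ induces on $M\smallsetminus X$ precisely the Riemannian metric $r_{X}^{-2}g$. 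Consequently $\Delta_{h}$, which on the interior of $[M:X]$ is the Riemannian Laplacian of $r_{X}^{-2}g$, is the conformal rescaling of $\Delta_{g}$ by the conformal factor $r_{X}^{-2}$.

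Second, I would run the conformal change formula. Writing the conformal factor as $e^{2\phi}=r_{X}^{-2}$, so $e^{\phi}=r_{X}^{-1}$, the two powers of $r_{X}$ appearing in the statement are exactly $e^{-\frac{n+2}{2}\phi}=r_{X}^{(n+2)/2}$ and $e^{\frac{n-2}{2}\phi}=r_{X}^{-(n-2)/2}$, with $n=\dim M$. Recall that $\Delta_{g}=d^{*}d$ on functions is the non-negative Laplacian, so the conformal Laplacian (Yamabe operator) $L_{g}:=\Delta_{g}+\frac{n-2}{4(n-1)}R_{g}$, with $R_{g}$ the scalar curvature, has principal part $\Delta_{g}$ and obeys $L_{h}u=e^{-\frac{n+2}{2}\phi}L_{g}\bigl(e^{\frac{n-2}{2}\phi}u\bigr)$ for $h=e^{2\phi}g$. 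Substituting the powers of $r_{X}$ and subtracting $\Delta_{h}u$ from $r_{X}^{(n+2)/2}\Delta_{g}(r_{X}^{-(n-2)/2}u)$ cancels the whole derivative part, and the computation yields
\begin{equation*}
        D(u)=\frac{n-2}{4(n-1)}\bigl(R_{h}-r_{X}^{2}\,R_{g}\bigr)\,u ,
\end{equation*}
that is, $D$ is multiplication by the function $\frac{n-2}{4(n-1)}(R_{h}-r_{X}^{2}R_{g})$. (When $\dim M=2$ this expression vanishes identically, consistent with $D=0$; in our application $\dim M=3N\ge 3$.)

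Third, I would check that this multiplier lies in $\CI([M:X])$. For a Lie manifold the scalar curvature of any metric on the Lie algebroid is smooth up to the corners, being an algebraic--differential expression in the metric and in the (smooth) algebroid structure; cf.\ \cite{aln1}. Applied to the Lie manifold $([M:X],\maW)$ with the algebroid metric $h$ on $B$ this gives $R_{h}\in\CI([M:X])$. Likewise $R_{g}\in\CI(M)$, and both $R_{g}$ and $r_{X}$ lift to smooth functions on $[M:X]$, so $r_{X}^{2}R_{g}\in\CI([M:X])$. Hence $D$ is multiplication by a smooth function on $[M:X]$, and such multiplication operators lie in $\Diff^{0}_{\maW}([M:X])$, which is the claim.

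The conformal-change identity itself is classical and purely interior, so the main obstacle is the two points that are specific to the blow-up: the smoothness of $R_{h}$ up to the boundary faces of $[M:X]$ created by the blow-up --- here one must work on the Lie manifold $([M:X],\maW)$, where $r_{X}^{-2}g$ is a genuine algebroid metric, rather than on $M$, where it is singular along $X$ --- and the identification of $\Delta_{h}$ on the interior through $B=\beta^{*}A$ and $\rho_{B}=r_{X}\rho_{A}$. If one prefers to avoid quoting the Yamabe covariance, the same conclusion can be reached directly: expanding $\Delta_{g}$ and $\Delta_{h}$ in a local $\maW$-frame $\{r_{X}V_{i}\}$ arising from a local $g$-orthonormal $\maV$-frame $\{V_{i}\}$ near $X$, the a priori first-order cross terms produced by the conformal factor are exactly those governed by Proposition~\ref{prop.these.are.lifts} ($\grad_{g}r_{X}^{2}\in\maW$ and $\|\grad_{g}r_{X}\|^{2}\in\CI([M:X])$), and one checks that they cancel, leaving an element of $\Diff^{0}_{\maW}([M:X])$.
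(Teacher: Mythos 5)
Your proof is correct and follows essentially the same route as the paper: both reduce the claim to the conformal covariance of the Yamabe operator $L_g=\Delta_g+\frac{n-2}{4(n-1)}R_g$ under $h=r_X^{-2}g$, identify the difference $D$ as multiplication by $\frac{n-2}{4(n-1)}(R_h-r_X^2R_g)$, and invoke the smoothness of $\mathrm{scal}_h$ on the Lie manifold $([M:X],\maW)$ from \cite{aln1} together with $\mathrm{scal}_g\in\CI(M)$. Your additional care in identifying $h=\beta^*g$ via the anchor $\rho_B=r_X\rho_A$ and in writing out the multiplier explicitly is a welcome elaboration of what the paper leaves implicit, but it is not a different argument.
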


\begin{proof}The operator $r_X^{\frac{n+2}2} \Delta_g r_X^{-\frac{n-2}2}$ and   
$\Delta_h$ have the same principal symbol, are  symmetric with respect 
to $d \vol_h$, and are smoth differential operators on $[M:X]$.
Thus $D$ is in $C^\infty([M:X])= \Diff^0_\maW([M:X])$.

%

Applying the formula $\Delta(uv)=v\Delta u+ u\Delta v + 2g(\grad_g
u,\grad_g v)$ we obtain
\begin{eqnarray*}
        {r_X}^{\frac{n+2}2} \Delta_g ({r_X}^{-\frac{n-2}2}u)&=&
        r_X^2 \Delta_g u+ {r_X}^{\frac{n+2}2} (\Delta_g
        {r_X}^{-\frac{n-2}2})u -2 \frac{n-2}2 {r_x}(\grad r_X)(u)\\
        &=& r_X^2 \Delta_g u+ {r_X}^{\frac{n+2}2} (\Delta_g
        {r_X}^{-\frac{n-2}2})u - \frac{n-2}2 (\grad r_X^2)(u)
\end{eqnarray*}
The formula $\Delta r^\alpha= \alpha r^{\alpha-1} \Delta r
+\alpha(\alpha-1)r^{\alpha-2}\|\grad r\|^2$ applied for $r=r_X$ yields
\begin{equation*}
        {r_X}^{2-\alpha} \Delta_g {r_X}^{\alpha} = \alpha r_X \Delta_g
        r_X+ \alpha(\alpha-1) \|\grad_g r_X\|_g^2.
\end{equation*}
We apply this for  $\alpha=-(n-2)/2$ and $\alpha=2$ and obtain
\begin{equation*}
        {r_X}^{\frac{n+2}2} \Delta_g {r_X}^{-\frac{n-2}2}=
        -\frac{n-2}4 \Delta_g r_X^2 + \frac{n^2-4}4 \|\grad_g
        r_X\|_g^2.
\end{equation*}
{}From the Gauss lemma applied to $\bar g$ it follows that $r_X^2\in
C^\infty(M)$. In Proposition~\ref{prop.these.are.lifts} we have shown
that $\|\grad_g r_X\|_g^2\in \CI([M:X])$, thus 
\begin{equation*}
        {r_X}^{\frac{n+2}2} \Delta_g
        {r_X}^{-\frac{n-2}2}\in \CI([M:X]).
\end{equation*}
Using then $\grad_g r_X^2\in \maW$, also proven in
Proposition~\ref{prop.these.are.lifts}, the lemma follows.
\end{proof}

We shall need the following result as well.

\begin{lemma}\label{lemma.distance}
Using the notation of Lemma \ref{lemma.metric1}, let $X \subset
Y \subset M$ be submanifolds with corners. Let $d_g$ (respectively,
$d_h$) be a smoothed distance function to $Y$ in the metric $g$
(respectively, in the metric $h= r_X^{-2}g$). Then the quotient
$r_X^{-1}d_g/d_h$, defined on $M \smallsetminus (Y \cup \pa M)$,
extends to a smooth function on $[M: X]$.
\end{lemma}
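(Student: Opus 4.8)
The statement is essentially local near $Y$, so I would work in a coordinate chart. Fix a point of $Y$ and choose local coordinates as in Definition~\ref{def.subm}: functions $y = (y_1,\dots,y_\ell) : U \to \RR^\ell$ with $Y \cap U = y^{-1}(0)$, and additional functions $z = (z_1,\dots,z_j)$ cutting out $X$ inside $Y$, so that $X \cap U = \{y = 0, z = 0\}$. Since $X \subset Y$, we may arrange $r_X^2 = |y|^2 + |z|^2$ and $d_g$ to agree near $Y$ with the $g$-distance to $Y$; by the Gauss lemma applied to $\bar g$ (as used already in the proofs of Lemmas~\ref{lemma.metric1} and~\ref{lemma.metric2}), $d_g^2 \in \CI(M)$ and in fact $d_g^2 = |y|^2 \cdot \phi$ for a positive smooth function $\phi$ with $\phi|_Y = 1$ in suitable coordinates, so it suffices to treat $d_g = |y|$ up to multiplication by $\sqrt\phi \in \CI(M) \subset \CI([M:X])$.

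\textbf{The comparison.} The key step is to understand $d_h$, the smoothed distance to $Y$ in the conformally rescaled metric $h = r_X^{-2}g$. Away from $X$ this metric is smooth and conformal to $g$, and $Y \smallsetminus X$ is a submanifold of $M \smallsetminus X$; on $[M:X]$ the hyperface $\beta^{-1}(X)$ coming from the blow-up is ``at infinity'' for $h$ (this is exactly the $b$-structure picture of subsection~\ref{subsec.btg}, since $h$ is of the form $r_X^{-2} g$). I would argue that, lifted to $[M:X]$, the submanifold $Y$ becomes $[Y:X]$ (by Proposition~\ref{prop.transverse}), which is a submanifold with corners of $[M:X]$ meeting $\beta^{-1}(X)$ transversally, and that $r_X^{-1} d_g$ is, up to a smooth positive factor on $[M:X]$, a smoothed distance function to $[Y:X]$ in the metric $h$. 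Concretely: $r_X^{-1} d_g = d_g / r_X = |y| / \sqrt{|y|^2 + |z|^2}$, and in the polar-type coordinates for the blow-up of $X$ (where $|y|^2 + |z|^2 = \varrho^2$ with $\varrho$ a defining function for $\beta^{-1}(X)$ and $(y,z)/\varrho$ ranging over a sphere) this equals the ``angular'' coordinate measuring the distance to the locus $\{y/\varrho = 0\} = [Y:X] \cap \beta^{-1}(X)$. That is a smooth function on $[M:X]$, vanishing exactly on $[Y:X]$, and it is comparable to the $h$-distance to $[Y:X]$ because near $\beta^{-1}(X)$ the metric $h$ is, up to bi-Lipschitz equivalence, a product of the round metric on the sphere factor with a $b$-metric $(d\varrho/\varrho)^2$ in the $\varrho$-direction, and along that sphere the Euclidean/round distance to the sub-sphere $\{y/\varrho=0\}$ is comparable to (a smooth multiple of) the coordinate $|y|/\varrho$. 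Any two smoothed distance functions to the same submanifold with corners in a fixed metric differ by a smooth positive factor (this is a standard fact, and the $\maV$-metrics on a Lie manifold are all bi-Lipschitz equivalent by the discussion after the definition of Lie manifold), so $r_X^{-1} d_g / d_h \in \CI([M:X])$ and is positive.

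\textbf{Assembling.} Thus I would: (1) reduce to a chart and to $d_g = |y|$, $r_X^2 = |y|^2 + |z|^2$ modulo smooth positive factors on $M$; (2) pass to the blow-up coordinates of $[M:X]$, identifying the lifted submanifold $Y$ with $[Y:X]$ via Proposition~\ref{prop.transverse}; (3) observe $r_X^{-1} d_g = |y|/\varrho$ is smooth on $[M:X]$, vanishes exactly on $[Y:X]$, and is a defining-type function for it transverse to $\beta^{-1}(X)$; (4) compare with $d_h$ using that near $\beta^{-1}(X)$ the rescaled metric $h$ is uniformly equivalent to a product metric for which the distance to $[Y:X]$ is comparable to $|y|/\varrho$, and invoke the uniqueness-up-to-smooth-positive-factor of smoothed distance functions; (5) patch via a partition of unity, noting the quotient is smooth and positive away from a neighborhood of $Y$ trivially.

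\textbf{Main obstacle.} The delicate point is step~(4): one must control the $h$-distance to $[Y:X]$ near the corner $[Y:X] \cap \beta^{-1}(X)$ and show it is genuinely comparable to the explicit coordinate $|y|/\varrho$ rather than to something that degenerates as $\varrho \to 0$. This is where the conformal factor $r_X^{-2}$ and the $b$-geometry interact, and one has to use that $h$ really does extend to a smooth (even $b$-compatible) metric on the Lie algebroid $B$ of $[M:X]$ — precisely the content already established in Lemma~\ref{lemma.metric1} — together with the Gauss lemma for $h$ on the interior to pin down $d_h$ up to a smooth factor. Once that is in hand, the rest is bookkeeping with blow-up coordinates.
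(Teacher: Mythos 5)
Your proposal is correct and follows essentially the same route as the paper: the paper's entire proof of Lemma~\ref{lemma.distance} is the remark that ``this is a local statement, so it can be proved using local coordinates'' (with a pointer to \cite{bmnz}), and your local-model computation with $d_g=|y|$, $r_X=\sqrt{|y|^2+|z|^2}$, the polar coordinates of the blow-up, and the comparison of $d_h$ with the angular coordinate $|y|/\varrho$ is precisely the fleshed-out version of that local argument. The delicate point you flag in step (4) is real but is handled exactly as you indicate, by the product structure of $h$ near $\beta^{-1}(X)$ and the uniqueness of smoothed distance functions up to a smooth positive factor.
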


\begin{proof} This is a local statement, so it can be proved
using local coordinates. See \cite{bmnz} for a similar result.
\end{proof}

\subsection{Iterated Blow-ups of Lie-manifolds}
We now iterate the above constructions to blow up a clean
family of submanifolds.

Let us fix for the remainder of this section the following notation:
$(M,\maV)$ is a fixed Lie manifold and $\maX$ is a fixed clean 
family of submanifolds with corners. As discussed at the
end of Section~\ref{sec.diff.struc}, we can assume that
$\maX=(X_i\,|\,i=1,2,\ldots,k)$ is admissibly ordered.  We denote by
$P=[M: \maX]$ the blow-up of $M$ with respect to $\maX$ and by
$\beta : P \to M$ the blow-down map.  Again let
$Y^{(1)},Y^{(2)},\ldots, Y^{(k)}$ and $M^{(0)}, M^{(1)}, \ldots,
M^{(k)}$ be the canonical sequences
associated to $M$ and the admissibly ordered family $\maX$, 
see Section~\ref{sec.diff.struc}, Definition \ref{def.canonical}. Let
$r_\ell : M^{(\ell-1)} \to [0, \infty)$ be a smoothed distance
function to $Y^{(\ell)}$, $1 \le \ell \le k$ in a true metric on
$M^{(\ell-1)}$ (in particular smooth up to the boundary).
Then we denote
\begin{equation}\label{eq.def.rho}
        \rho : = r_1 r_2 \ldots r_k,
\end{equation} 
where the product is first defined away from the singularity, and then
it is extended to be zero on the singular set. Let us notice that
$r_j$ is a defining function for the face corresponding to $Y^{(j)}$
in the blow-up manifold $M$.

We also denote by $r_{\maX}(x)$ the distance from $x$ to $\bigcup \maX
:= \bigcup_{i=1}^k X_i$, again in a true metric.  Let us note for
further use the following simple fact.

\begin{lemma}\label{lemma.r.rho} Using the notation just introduced,
we have that the quotient $r_{\maX}/\rho$, defined first on
$M \smallsetminus (\bigcup \maX )$, extends to a continuous, nowhere
zero function on $P$. In particular, there exists a constant $C>0$
such that
\begin{equation*}
        C^{-1} \rho \le r_{\maX} \le C\rho.
\end{equation*}
\end{lemma}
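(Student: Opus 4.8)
```latex
\textbf{Proof plan.}
The plan is to reduce the global statement to a local one and then to
work in a single blow-up coordinate chart by repeated application of
Lemma~\ref{lemma.distance}. First I would observe that, since $P$ is
compact and $r_{\maX}/\rho$ is positive and continuous on the dense
open set $M \smallsetminus (\bigcup\maX)$, it suffices to prove that
$r_{\maX}/\rho$ extends to a \emph{continuous, strictly positive}
function on all of $P$; the two-sided bound $C^{-1}\rho \le r_{\maX}
\le C\rho$ is then immediate by compactness (take $C = \max(\sup
(r_{\maX}/\rho), \sup(\rho/r_{\maX}))$, both suprema being finite and
attained). So the whole content is the continuity and non-vanishing of
the extension.

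The next step is to peel off the blow-ups one at a time along the
canonical sequence $M = M^{(0)}, M^{(1)}, \ldots, M^{(k)} = P$. At
stage $\ell$ we have the blow-up $M^{(\ell)} = [M^{(\ell-1)}:
Y^{(\ell)}]$ with smoothed distance function $r_\ell$ to $Y^{(\ell)}$
and defining function $r_\ell$ for the new face. I would compare two
things on $M^{(\ell)}$: the smoothed distance to $X_i$ measured in a
true metric on $M^{(\ell)}$ versus $r_\ell$ times the smoothed
distance to $X_i'$ (the lift, i.e.\ the closure of $X_i \smallsetminus
Y^{(\ell)}$) measured in the blown-up metric $h = r_\ell^{-2}g$. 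Since
$Y^{(\ell)} \subset X_i$ whenever $X_i \in \maX$ contains $Y^{(\ell)}$
(and for the $X_i$ not containing $Y^{(\ell)}$ the blow-down is a
diffeomorphism near $X_i$, so nothing happens), Lemma~\ref{lemma.distance},
applied with the inclusion $Y^{(\ell)} \subset X_i \subset M^{(\ell-1)}$,
tells us precisely that $r_\ell^{-1} d_g^{X_i}/d_h^{X_i'}$ extends to a
smooth, and by the same token nowhere-zero, function on $M^{(\ell)}$.
Taking the minimum over $i$ of the relevant smoothed distances (or
rather, comparing $r_{\maX}$ at stage $\ell-1$ with $r_\ell \cdot
r_{\maX'}$ at stage $\ell$, where $\maX'$ is the lifted family), one
gets that $r_{\maX}^{(\ell-1)} / (r_\ell \, r_{\maX}^{(\ell)})$ extends
continuously and positively across the new face. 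Iterating over $\ell
= 1, \ldots, k$ and telescoping the product $\rho = r_1 r_2 \cdots
r_k$ yields that $r_{\maX}/\rho$ extends continuously and positively to
$P$.

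One technical point I would be careful about: ``distance to a union of
submanifolds'' is only Lipschitz, not smooth, where the nearest
stratum changes, so I should phrase the comparison locally near a point
$p \in P$, where only finitely many of the $X_i$ (in fact, because
$\maX$ is closed under intersections, essentially a single minimal one
through $\beta(p)$, or a nested chain) are relevant, and use that near
such a point $r_{\maX}$ agrees up to a bounded positive factor with the
smoothed distance to that stratum. The true metrics on the various
$M^{(\ell)}$ are also not canonically related, but changing a true
metric alters a smoothed distance function only by a bounded positive
smooth factor, which is harmless for a statement about continuous
extension up to a nowhere-zero factor. The main obstacle, and the only
place where real work is hidden, is thus the bookkeeping needed to
invoke Lemma~\ref{lemma.distance} coherently along the whole canonical
sequence — i.e.\ checking that at each stage the lifted family
$\maX^{(\ell)}$ is again weakly transversal (which is
Theorem~\ref{thm.transverse}) and that the nested inclusion $Y^{(\ell)}
\subset X_i^{(\ell-1)}$ needed by Lemma~\ref{lemma.distance} indeed
holds for every $X_i$ surviving to stage $\ell-1$ that contains
$Y^{(\ell)}$. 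Everything else is the routine observation that smoothed
distance functions are well-defined up to bounded positive factors and
that $P$ is compact.
```
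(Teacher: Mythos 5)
Your proposal is correct and follows essentially the same route as the paper, which proves the lemma by induction along the canonical sequence of blow-ups using Lemma~\ref{lemma.distance} (the paper's proof is a one-line reference to exactly this induction). Your additional care about the Lipschitz nature of the distance to a union and the bounded-factor ambiguity of smoothed distance functions is appropriate but does not change the argument.
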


\begin{proof} 
This follows by induction from Lemma \ref{lemma.distance}, as
in \cite{bmnz}.
\end{proof}

We now show that we can blow up Lie manifolds with respect to a
clean family to obtain again a Lie manifold. Recall that
the blow-down map $\beta : P\to M$ was introduced in
Equation~\eqref{eq.it.blow-down} as the composition $ \beta
:= \beta_{1} \circ \beta_{2} \circ \ldots \circ \beta_{k} : P =
M^{(k)}= [M: \maX] \to M = M^{(0)}.$

\begin{prop}\label{prop.b.trans.fam}
Using the above notation, we have that
\begin{equation*}
        \maW_0
        := \{W \in \Gamma_\beta(TP),\ \beta_*(W\vert_{M\smallsetminus
        \bigcup\maX}) \in \rho (\maV\vert_{M\smallsetminus \bigcup\maX}) \}
\end{equation*}
is isomorphic to $\maV$ as a $\CI(M)$-module. Let
\begin{equation*} 
        \maW := \{f W,\ W \in \maW_0, f \in \CI(P) \}.
\end{equation*} 
Then $\maW$ is a Lie algebra isomorphic to $\CI(P)
\otimes_{\CI(M)} \maV$ as a $\CI(P)$-module and hence $\maW$ is
a finitely generated, projective module over $\CI(P)$, and $(P,\maW)$
is a Lie manifold, which is
isomorphic to the Lie manifold obtained by iteratively blowing up the
Lie manifold $(M, \maV)$ along the submanifolds $Y^{(\ell)}$, $1 \le \ell \le k$.
\end{prop}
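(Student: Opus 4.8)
The plan is to reduce Proposition~\ref{prop.b.trans.fam} to an iterated application of Theorem~\ref{theorem.Lie.P} and Lemma~\ref{equiv_def_W}. Recall the canonical sequences: $M^{(0)} = M$, $M^{(\ell)} = [M^{(\ell-1)} : Y^{(\ell)}]$, and $P = M^{(k)}$. The idea is to build the Lie structure one blow-up at a time. Set $\maV^{(0)} := \maV$. Assuming $(M^{(\ell-1)}, \maV^{(\ell-1)})$ is already a Lie manifold, Theorem~\ref{theorem.Lie.P} applied to $X = Y^{(\ell)}$ (a submanifold with corners of $M^{(\ell-1)}$ by Theorem~\ref{thm.transverse}, and of strictly smaller dimension by the convention in Subsection~\ref{subsec.lift.vf}) produces the Lie manifold $(M^{(\ell)}, \maV^{(\ell)})$ with $\maV^{(\ell)} := \CI(M^{(\ell)})\, r_\ell\, \maV^{(\ell-1)}$. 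Iterating, $(P, \maW')$ with $\maW' := \maV^{(k)}$ is a Lie manifold; this is by construction the Lie manifold obtained by iteratively blowing up. So the real content of the proposition is to identify $\maW$ — defined in one stroke via the condition $\beta_*(W|_{M\smallsetminus\bigcup\maX}) \in \rho(\maV|_{M\smallsetminus\bigcup\maX})$ — with the iteratively constructed $\maW'$, and to show it is $\CI(P)\otimes_{\CI(M)}\maV$.

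First I would unwind the iterated construction on the open dense set $M\smallsetminus\bigcup\maX \cong P \smallsetminus \partial_{\maX} P$ (the complement of the hyperfaces at infinity). There, each $r_\ell$ pulls back to a positive smooth function, and $\maV^{(\ell)}|_{M\smallsetminus\bigcup\maX} = (r_1\cdots r_\ell)\,\maV|_{M\smallsetminus\bigcup\maX}$ — this is an easy induction using that multiplication by a nowhere-zero smooth function is a module automorphism. Taking $\ell = k$ gives $\maW'|_{M\smallsetminus\bigcup\maX} = \rho\,\maV|_{M\smallsetminus\bigcup\maX}$, so a vector field $W\in\Gamma_\beta(TP)$ lies in $\maW'$ iff it is a $\CI(P)$-combination of lifts whose interior restriction is in $\rho\maV$; comparing with the definitions of $\maW_0$ and $\maW$ in the statement, I would argue $\maW_0 = \maV^{(k)}_0$ (the "generating submodule" analogue of $\maW_0$ from Lemma~\ref{equiv_def_W}, with $r_X$ replaced by $\rho$) and hence $\maW = \maW'$. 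The identification $r_{\maX}/\rho$ nowhere-zero smooth on $P$ (Lemma~\ref{lemma.r.rho}) is convenient here to see that $\rho$ itself is, up to a smooth positive factor, a product of boundary defining functions for the faces at infinity, so that the submodule $\rho\maV$ lifts correctly.

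For the module structure claims: $\maW_0 \cong \maV$ as $\CI(M)$-modules follows since $W \mapsto \beta_* W \mapsto \rho^{-1}\beta_* W$ is an isomorphism onto $\maV$ on the dense set and lifts are unique (Proposition~\ref{prop.lift}, applied iteratively — note each intermediate lift exists because the relevant vector fields are tangent to the relevant $Y^{(\ell)}$, which is where the weak transversality of the family, via Theorem~\ref{thm.transverse}, is used). That $\maW \cong \CI(P)\otimes_{\CI(M)}\maV$ as $\CI(P)$-modules, hence finitely generated projective, is the iterated analogue of the isomorphism $\mu$ in Lemma~\ref{equiv_def_W}: by induction, $\maV^{(\ell)} \cong \CI(M^{(\ell)})\otimes_{\CI(M^{(\ell-1)})}\maV^{(\ell-1)}$, and composing the base-change isomorphisms (using associativity of tensor product and $\CI(M^{(\ell-1)})\otimes_{\CI(M^{(\ell-2)})}\cdots$ telescoping) gives $\maW' \cong \CI(P)\otimes_{\CI(M)}\maV$. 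That $\maW$ is a Lie algebra and that $(P,\maW)$ is a Lie manifold (anchor an isomorphism on the interior) both follow verbatim from Theorem~\ref{theorem.Lie.P} at each stage.

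The main obstacle I expect is the bookkeeping in the identification $\maW = \maW'$ — specifically, verifying that the single global condition "$\beta_*(W|_{\mathrm{int}}) \in \rho\,\maV|_{\mathrm{int}}$" is equivalent to being a $\CI(P)$-combination of the iterated lifts, rather than something a priori weaker or stronger. The subtlety is that $\maW'$ is built with the \emph{smoothed} distance functions $r_\ell$ on the successive blow-ups $M^{(\ell-1)}$, whereas $\rho$ in the statement is the product of these $r_\ell$'s pulled all the way back to $P$; one must check these agree up to a smooth positive factor and that no extra vanishing is imposed or lost along intermediate faces. Lemma~\ref{lemma.distance} and Lemma~\ref{lemma.r.rho} are precisely the tools for this: $r_\ell$ lifted to $M^{(\ell)}$ differs from a defining function of the new face by a smooth positive factor, and these factors, accumulated, absorb into $\CI(P)$. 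Once this comparison of defining functions is in hand, the equality $\maW = \maW'$ is forced, and the remaining assertions are immediate from the single-blow-up results applied inductively.
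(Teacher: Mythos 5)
Your proposal is correct and follows essentially the same route as the paper, whose proof is simply ``this follows by induction from Lemmas~\ref{lemma.distance}, \ref{lemma.r.rho}, and Theorem~\ref{theorem.Lie.P}'' --- namely, iterate the single-blow-up construction along the canonical sequence and use the distance-function comparison lemmas to identify the iterated structure with the one defined globally via $\rho$. Your write-up supplies the bookkeeping that the paper leaves implicit, including the correct identification of the main subtlety (matching the successive smoothed distance functions $r_\ell$ against the pullback of $\rho$ up to smooth positive factors).
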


\begin{proof} 
Again, this follows by induction from Lemmas
\ref{lemma.distance}, \ref{lemma.r.rho}, and Theorem \ref{theorem.Lie.P}.
\end{proof}

The Lie manifold $(P, \maW) = ([M: \maX], \maW)$ is called the {\em
blow-up of the Lie manifold $(M, \maV)$ along the clean
family $\maX$.}

\begin{prop}\label{prop.Laplace}
Using the notation of the Proposition \ref{prop.b.trans.fam}, let
$A \to M$ be the Lie algebroid associated to $\maV$, so that
$\maV \simeq \Gamma(A)$.  Let us choose a metric $g$ on $A$. Let $B$
be the Lie algebroid associated to $(P, \maW)$. Then the restriction
of $\rho^{-2} g$ to $M \smallsetminus (\bigcup \maX \cup \pa M)$
extends to a smooth metric $h$ on $B$. Let $\Delta_g$ and $\Delta_h$
be the associated Laplace operators.  Then
\begin{equation*}
        \rho^2 \Delta_g - \Delta_h \in \Diff^1_{\maW}(P). 
\end{equation*}
In particular, $\rho^2 \Delta_g$ is elliptic in $\Diff^2_{\maW}(P)$.
\end{prop}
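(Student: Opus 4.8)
The plan is to proceed by induction on $k$, the number of blow-ups in the admissibly ordered family $\maX = (X_1, \ldots, X_k)$, using Lemma~\ref{lemma.metric2} as the base case and as the single-blow-up step. Recall that $P = M^{(k)}$ is obtained from $M^{(0)} = M$ by successively blowing up $Y^{(1)}, \ldots, Y^{(k)}$, and $\rho = r_1 r_2 \cdots r_k$, where $r_\ell$ is a smoothed distance function to $Y^{(\ell)}$ in a true metric on $M^{(\ell-1)}$. For $k = 1$ the statement is exactly Lemma~\ref{lemma.metric2}: $r_1^2 \Delta_g - \Delta_h \in \Diff^1_{\maW}([M:Y^{(1)}])$, with $h = r_1^{-2}g$ the induced metric on the Lie algebroid of the blown-up manifold.

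For the inductive step, suppose the result holds after blowing up $Y^{(1)}, \ldots, Y^{(\ell-1)}$, so that on $M^{(\ell-1)}$ with the Lie structure $\maW^{(\ell-1)}$ we have a $\maW^{(\ell-1)}$-metric $g_{\ell-1}$ on the corresponding Lie algebroid with $(r_1 \cdots r_{\ell-1})^2 \Delta_g - \Delta_{g_{\ell-1}} \in \Diff^1_{\maW^{(\ell-1)}}(M^{(\ell-1)})$. Now blow up $Y^{(\ell)} \subset M^{(\ell-1)}$. By Proposition~\ref{prop.b.trans.fam} (in its single-step form, i.e.\ Theorem~\ref{theorem.Lie.P}) the new Lie structure is $\maW^{(\ell)} = \CI(M^{(\ell)}) r_\ell \maW^{(\ell-1)}$, and by Lemma~\ref{lemma.metric2} applied to $(M^{(\ell-1)}, \maW^{(\ell-1)})$ and the submanifold $Y^{(\ell)}$, the metric $g_\ell := r_\ell^{-2} g_{\ell-1}$ extends smoothly to the algebroid $B^{(\ell)}$ and $r_\ell^2 \Delta_{g_{\ell-1}} - \Delta_{g_\ell} \in \Diff^1_{\maW^{(\ell)}}(M^{(\ell)})$. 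Multiplying the inductive hypothesis by $r_\ell^2$ and adding, one gets
\begin{equation*}
        (r_1 \cdots r_\ell)^2 \Delta_g - \Delta_{g_\ell}
        = r_\ell^2 \bigl( (r_1 \cdots r_{\ell-1})^2 \Delta_g - \Delta_{g_{\ell-1}} \bigr)
          + \bigl( r_\ell^2 \Delta_{g_{\ell-1}} - \Delta_{g_\ell} \bigr).
\end{equation*}
The second term lies in $\Diff^1_{\maW^{(\ell)}}(M^{(\ell)})$ directly. For the first term one notes that $r_\ell \cdot \Diff^1_{\maW^{(\ell-1)}}(M^{(\ell-1)}) \subset \Diff^1_{\maW^{(\ell)}}(M^{(\ell)})$: a first-order $\maW^{(\ell-1)}$-operator is $a + V$ with $a \in \CI$ and $V \in \maW^{(\ell-1)}$, and $r_\ell a \in \CI(M^{(\ell)})$ while $r_\ell V \in \maW^{(\ell)}$ by the very definition of $\maW^{(\ell)}$ as $\CI(M^{(\ell)}) r_\ell \maW^{(\ell-1)}$. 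Hence the sum is in $\Diff^1_{\maW}(P)$ with $\maW = \maW^{(k)}$ and $h = g_k = \rho^{-2} g$ up to the bi-Lipschitz ambiguity, completing the induction.

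The ellipticity claim follows formally: the principal symbol of $\rho^2 \Delta_g$ as an operator in $\Diff^2_\maW(P)$ equals that of $\Delta_h$ (since they differ by a first-order operator, which contributes nothing to the order-two symbol), and $\Delta_h$ is the Hodge--Laplacian of the $\maW$-metric $h$ on the Lie algebroid $B$, hence elliptic in $\Diff^2_\maW(P)$ by the results of \cite{aln1} recalled after the definition of Lie manifolds.

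I expect the main obstacle to be bookkeeping rather than a genuine mathematical difficulty: one must carefully track that at each stage the ``true metric'' used to define $r_\ell$ on $M^{(\ell-1)}$ satisfies the compatibility condition of Subsection~\ref{subsec.bu-lm} with $Y^{(\ell)}$ (so that $r_\ell$ is genuinely smooth off $Y^{(\ell)}$ and Lemma~\ref{lemma.metric2} applies), and that the smoothed distance functions $r_\ell$ appearing in the definition $\rho = r_1 \cdots r_k$ are compatible with the boundary-defining-function interpretation. The potential subtlety is that $g_{\ell-1}$ is only a $\maW^{(\ell-1)}$-metric, not a true metric, whereas Lemma~\ref{lemma.metric2} was stated for a metric $g$ on the algebroid together with a \emph{true} metric $\bar g$ compatible with $X$; one must observe that the statement and proof of Lemma~\ref{lemma.metric2} only use that $g_{\ell-1}$ is a metric on the algebroid $A^{(\ell-1)}$ of $\maW^{(\ell-1)}$ and that the smoothed distance $r_\ell$ to $Y^{(\ell)}$ comes from an auxiliary true metric on $M^{(\ell-1)}$, which is exactly the setup available here, so the iteration is legitimate. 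Equivalently, and perhaps more cleanly, one can simply invoke Proposition~\ref{prop.b.trans.fam}, which already packages the iterated construction, and then apply Lemma~\ref{lemma.metric2} once at each stage; this is the route I would write up.
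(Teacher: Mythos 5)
Your proposal is correct and follows exactly the route the paper takes: the paper's proof of Proposition~\ref{prop.Laplace} is the one-line statement that it ``follows from Lemma~\ref{lemma.metric2} by induction,'' and your write-up supplies precisely the induction the authors had in mind, including the key telescoping identity and the observation that $r_\ell^2\,\Diff^1_{\maW^{(\ell-1)}}(M^{(\ell-1)}) \subset \Diff^1_{\maW^{(\ell)}}(M^{(\ell)})$.
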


\begin{proof}
This proposition follows from Lemma~\ref{lemma.metric1} by induction.
\end{proof}

We complete this section with a description of the Sobolev space of the
blow-up.

\begin{prop}\label{proposition.sobolev}
Using the notation of Lemma \ref{lemma.r.rho} and of
Proposition \ref{prop.b.trans.fam}, we have
\begin{equation*}
        W^{k,p}(P, \maW) := \{u: M \to \CC, \rho^{j} V_1 \ldots V_j
        u \in L^p(M, d\vol_g),\, \forall V_1, \ldots, V_j \in \maV, \,
        j \le k\,\}\, .
\end{equation*}
\end{prop}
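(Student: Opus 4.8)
The plan is to reduce Proposition~\ref{proposition.sobolev} to the single-blow-up computation already carried out in Lemma~\ref{lemma.sobolev}, by iterating along the canonical sequence $M = M^{(0)}, M^{(1)}, \ldots, M^{(k)} = P$. At each stage $M^{(\ell)} = [M^{(\ell-1)} : Y^{(\ell)}]$ we have, by Proposition~\ref{prop.b.trans.fam} and its proof, a Lie manifold structure $\maW^{(\ell)}$ on $M^{(\ell)}$ which is the blow-up (in the sense of Theorem~\ref{theorem.Lie.P}) of $(M^{(\ell-1)}, \maW^{(\ell-1)})$ along $Y^{(\ell)}$, with $\maW^{(\ell)} = \CI(M^{(\ell)})\, r_\ell\, \maW^{(\ell-1)}$. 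Fixing the $\maV$-metric $g$ on $M$ (equivalently on $M^{(0)}$), Proposition~\ref{prop.Laplace} (or rather the chain of metrics underlying it) gives compatible metrics $g^{(\ell)}$ on each $M^{(\ell)}$ with $g^{(\ell)} = (r_\ell)^{-2} g^{(\ell-1)}$ on the interior, and hence $d\vol_{g^{(\ell)}} = r_\ell^{-n}\, d\vol_{g^{(\ell-1)}}$, where $n = \dim M$; since all the $M^{(\ell)}$ agree with $M \smallsetminus \bigcup\maX$ up to a set of measure zero, an $L^p$ statement on $P$ relative to $d\vol_{g^{(k)}}$ must be rewritten, through these conformal factors, back to $d\vol_g$, and this is exactly where the weight $\rho = r_1 r_2 \cdots r_k$ of Equation~\eqref{eq.def.rho} will be produced.

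The induction hypothesis I would carry is the natural relative statement: for each $\ell$,
\begin{equation*}
        W^{k,p}(M^{(\ell)}, \maW^{(\ell)}) = \{u \,|\, (r_1 r_2 \cdots r_\ell)^{j} V_1 \cdots V_j u \in L^p(M, d\vol_g),\ \forall V_i \in \maV,\ j \le k\}.
\end{equation*}
The base case $\ell = 0$ is the definition \eqref{eq.Sobolev} of $W^{k,p}(M, \maV)$. For the inductive step, I would first apply Lemma~\ref{lemma.sobolev} to the single blow-up $M^{(\ell)} = [M^{(\ell-1)} : Y^{(\ell)}]$, giving
\begin{equation*}
        W^{k,p}(M^{(\ell)}, \maW^{(\ell)}) = \{u \,|\, r_\ell^{\,j}\, W_1 \cdots W_j u \in L^p(M^{(\ell-1)}, d\vol_{g^{(\ell-1)}}),\ \forall W_i \in \maW^{(\ell-1)},\ j \le k\};
\end{equation*}
here one must take the $L^p$-space with respect to the correct volume form on $M^{(\ell-1)}$, since Lemma~\ref{lemma.sobolev} is stated for the base metric of that stage. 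Then I substitute $\maW^{(\ell-1)} = \CI(M^{(\ell-1)})\,(r_1\cdots r_{\ell-1})\,\maV$ from the iterated version of Theorem~\ref{theorem.Lie.P}, and rewrite $d\vol_{g^{(\ell-1)}} = (r_1 \cdots r_{\ell-1})^{-n}\, d\vol_g$ on the common interior. Writing each $W_i = f_i\, (r_1\cdots r_{\ell-1})\, V_i'$ and using that $V(f) \in \CI$ is bounded to move the smooth factors across the vector fields (the same commutator trick $[r_X, V] = V(r_X) \in \CI$ already used in Lemma~\ref{lemma.sobolev} and Lemma~\ref{lemma.metric2}), the monomial $r_\ell^{\,j} W_1 \cdots W_j u$ becomes, up to bounded factors and lower-order terms, $(r_1 \cdots r_\ell)^{j} V_1' \cdots V_j' u$. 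Combining the $L^p$-membership of these with the volume-form rewriting and a routine bookkeeping of lower-order terms closes the induction, and the case $\ell = k$ together with Lemma~\ref{lemma.r.rho} (so that $\rho$ may be used interchangeably with $r_1 \cdots r_k$ up to bounded factors) yields the stated formula.

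\textbf{Main obstacle.} The genuinely delicate point is the careful tracking of the volume forms and of the conformal weights through the iteration: at stage $\ell$ one is handed an $L^p$ condition relative to $d\vol_{g^{(\ell-1)}}$, not relative to the fixed $d\vol_g$, and the conversion factor $(r_1\cdots r_{\ell-1})^{-n}$ is itself a power of the earlier weights, so one must verify that after pulling everything back to $(M \smallsetminus \bigcup\maX, d\vol_g)$ the powers of $r_j$ coming from the volume forms combine correctly with those coming from the $j$ vector-field applications and the substitution $\maW^{(\ell-1)} = \CI\,(r_1\cdots r_{\ell-1})\,\maV$ to give precisely the exponent $j$ on $\rho$ — with no residual powers of individual $r_j$'s left over. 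The commutator manipulations needed to pass smooth bounded functions (the $f_i$'s and the factors $r_m/r_{m'}$ from Lemma~\ref{lemma.r.rho}) through products of vector fields, while harmless, must be organized inductively in $k$ exactly as in the proof of Lemma~\ref{lemma.sobolev} so that the "lower-order" error terms are themselves covered by the induction hypothesis at smaller $j$. Everything else is bookkeeping once Lemma~\ref{lemma.sobolev}, Theorem~\ref{theorem.Lie.P} (iterated), Proposition~\ref{prop.b.trans.fam}, and Lemma~\ref{lemma.r.rho} are in hand.
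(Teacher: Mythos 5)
Your proof follows the same route as the paper, whose entire argument is to cite Lemmas \ref{lemma.sobolev} and \ref{lemma.r.rho} and leave the induction along the canonical sequence $M^{(0)},\ldots,M^{(k)}$ implicit; your write-up supplies exactly the details the paper omits. The volume-form bookkeeping you single out as the main obstacle is indeed glossed over in the paper (both here and already in the proof of Lemma \ref{lemma.sobolev}), so your explicit treatment of the conformal factors is, if anything, more careful than the source.
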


\begin{proof} 
This follows from Lemmas \ref{lemma.sobolev} and \ref{lemma.r.rho}.
\end{proof}

\section{Regularity of eigenfunctions}\label{sec.reg.eigen}

We now provide the main application of the theory developed in the
previous sections.

\subsection{Regularity of multi-electron eigenfunctions} 
Let us consider $\RR^{3N}$ with the standard Euclidean metric.  We
radially compactify $\RR^{3N}$ as follows.  Using the diffeomorphism
$\phi:\RR^{3N}\to B_1(0)$, $x\mapsto \frac{2 \arctan |x|}{\pi|x|} x$
we view $\RR^{3N}$ as the open standard ball $\RR^{3N}$.  The
compactification $M = \overline{\RR^{3N}}$ is then a manifold with
boundary together with a diffeomorphisms from $M$ to the closed
standard ball, extending $\phi$.  The compactification $M$ carries a
Lie structure at infinity~$\maV_{sc}$  
\cite{aln1, SchroheSG, LauterNistor, Parenti, MelroseScattering} which 
is defined as follows.  Let $r_{\infty}$ be a defining function of the
boundary of $M = \overline{\RR^{3N}}$, for example, we can take
$r_{\infty}(x) = (1 + |x|^2)^{-1/2}$. We extend $x_1:=r_\infty$
locally to coordinates $x_1,x_2,\ldots,x_N$, defined on a neighborhood
of a boundary point. In particular $x_2, \ldots, x_N$ are coordinates
of the boundary. In these coordinates $\maV_{sc}$ is generated by
$r_{\infty}^2\pa_{r_\infty}, r_{\infty}\pa_{x_j}$, $j=2, \ldots , N$.
Thus $\maV_{sc} = r_{\infty}\maV_M$, with $\maV_M$ defined in
Equation~\eqref{eq.maV_M}.  We can then choose the metric on
$\maV_{sc}$ so that the induced metric on $M_0$, the interior of $M$,
is the usual Euclidean metric on $\RR^{3N}$.

Motivated by the specific form of the potential $V$ introduced in
Equation~\eqref{eq.def.V}, let us now introduce the following family
of submanifolds of $M = \overline{\RR^{3N}}$.  Let $X_{j}$ be the
closure in $M$ of the set $\{x = (x_1, \ldots, x_N), x_j =
0 \in \RR^3\}$. Let us define similarly $X_{ij}$ to be the closure in
$M$ of the set $\{x = (x_1, \ldots, x_N), x_i =
x_j \in \RR^3\}$. Let~$\maS$ be the family of consisting of all
manifolds $X_{j}, X_{ij}$ for which the parameter functions $b_j$ and
$c_{ij}$ are non-zero, together with their intersections. The family
$\maS$ will be called the {\em multi-electron} family of singular
manifolds.

\begin{prop} \label{prop.S.transverse} 
The multi-electron family of singular manifolds $\maS$ is a clean family.
\end{prop}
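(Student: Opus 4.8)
The plan is to show that, in the radial compactification $M=\overline{\RR^{3N}}$ fixed above, every member of $\maS$ is the closure of a linear subspace of $\RR^{3N}$, and then to read off the two defining properties of a weakly transversal family (closure under intersection, and the tangent-space identity of Subsection~\ref{subsec.it.bu}) directly from this linear structure. The only point requiring genuine attention is the behaviour at the sphere at infinity $\partial M\cong S^{3N-1}$; in the interior $\RR^{3N}$ everything reduces to the elementary fact, recorded as the third example in Subsection~\ref{subsec.it.bu}, that a finite family of affine subspaces closed under intersection is weakly transversal.

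First I would fix the ball model: identify $M$ with $\overline{B_1(0)}\subset\RR^{3N}$ via $\phi$, so that a linear subspace $V\subseteq\RR^{3N}$ of dimension $d$ corresponds to $V\cap B_1(0)$ and its closure $\overline{V}$ in $M$ to $V\cap\overline{B_1(0)}$, a closed $d$-ball with $\overline{V}\subseteq V$. Thus $X_j=\overline{V_j}$ with $V_j=\{x_j=0\}$ and $X_{ij}=\overline{V_{ij}}$ with $V_{ij}=\{x_i=x_j\}$, and for any finite family of linear subspaces one has $\bigcap_s\overline{V_s}=\overline{\bigcap_s V_s}$: the interiors meet in $(\bigcap_s V_s)\cap B_1(0)$, while a unit vector lies in every $\overline{V_s}$ if and only if it lies in every $V_s$, hence in their intersection. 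Since an intersection of linear subspaces is again a linear subspace — in particular nonempty (it contains $0$) and connected — it follows that $\maS$, which consists of the finitely many $X_j$, $X_{ij}$ together with all finite intersections thereof, is a finite set of connected, nonempty, closed subsets that is closed under intersection, each member being the closure of a linear subspace. This verifies the first condition in the definition of a weakly transversal family.

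Next I would check that the closure $\overline{V}$ of a linear subspace $V=\{\ell_1=\dots=\ell_c=0\}$, with $\ell_1,\dots,\ell_c$ independent linear functionals and $c=3N-\dim V$, is a submanifold with corners of $M$ in the sense of Definition~\ref{def.subm}. Away from $\partial M$ this is clear. At a boundary point $v\in\partial M$ with $v\in V$, the restrictions of $\ell_1,\dots,\ell_c$ to the ball model cut out $\overline{V}$ near $v$, and the condition to verify is that $dr_\infty,d\ell_1,\dots,d\ell_c$ are linearly independent at $v$. This holds because $\ell_i(v)=0$ for all $i$, whereas $dr_\infty$ at $v$ is a nonzero multiple of the covector dual to $v$ (any boundary defining function has differential with kernel exactly $T_v\partial M=v^\perp$); hence evaluating a relation $\alpha\,dr_\infty+\sum_i\beta_i\,d\ell_i=0$ on $v$ forces $\alpha=0$, and then $\beta_i=0$ by independence of the $\ell_i$. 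So each member of $\maS$ is a submanifold with corners of $M$.

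Finally, for the tangent-space condition, I would take indices $i_1<\dots<i_t$, write the corresponding members of $\maS$ as $\overline{V_1},\dots,\overline{V_t}$ for linear subspaces $V_s$, and pick $x\in\bigcap_s\overline{V_s}=\overline{\bigcap_s V_s}$. If $x$ lies in the interior $\RR^{3N}$, then $T_x\overline{V_s}=V_s$ and $T_x(\bigcap_s\overline{V_s})=\bigcap_s V_s$, so $\bigcap_s T_x\overline{V_s}=T_x(\bigcap_s\overline{V_s})$ is exactly the affine case of Subsection~\ref{subsec.it.bu}. If $x\in\partial M$, the key observation is that $\overline{V_s}\subseteq V_s$ with $\dim\overline{V_s}=\dim V_s$, so $T_x\overline{V_s}=V_s$ as subspaces of $T_xM\cong\RR^{3N}$ (even at the boundary point the full tangent space of the $d_s$-ball $\overline{V_s}$ sits inside, hence equals, the $d_s$-plane $V_s$), and likewise $T_x\,\overline{\bigcap_s V_s}=\bigcap_s V_s$; therefore again $\bigcap_s T_x\overline{V_s}=\bigcap_s V_s=T_x(\bigcap_s\overline{V_s})$. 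This completes the verification. As indicated, the main — and really the only — obstacle is the sphere at infinity, and it is disarmed once one notes that the closure of a linear subspace in the radial compactification is contained in that subspace and has it as tangent space at every point, boundary points included.
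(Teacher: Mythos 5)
Your proof is correct, and it rests on the same reduction as the paper's: every member of $\maS$ is the closure of a linear subspace, so the tangent-space identity collapses to the elementary fact that $\bigcap_s V_s=\bigcap_s V_s$ for linear subspaces. The two arguments differ only in how they handle the sphere at infinity. The paper's (very terse) proof disposes of boundary points by invoking a product structure of the family in a tubular neighborhood of $\partial M$, thereby reducing the boundary case to the corresponding statement for the great subspheres $V\cap S^{3N-1}$ times a collar factor; you instead work directly in the ball model and observe that the closure $\overline V=V\cap\overline{B_1(0)}$ of a linear subspace satisfies $T_x\overline V=V$ at \emph{every} point, boundary points included, so the identity at infinity is again pure linear algebra. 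Your route is more explicit and self-contained: you also verify that each member of $\maS$ is genuinely a submanifold with corners in the sense of Definition~\ref{def.subm} (the independence of $dr_\infty$ from the $d\ell_i$ at a point of $V\cap\partial M$, which uses that $\ell_i$ vanishes at such a point while $dr_\infty$ does not) and that $\bigcap_s\overline{V_s}=\overline{\bigcap_s V_s}$, two points the paper's proof leaves implicit. The paper's product-structure remark buys brevity and would adapt more readily to compactifications that are not literally radial, but for the radial compactification at hand your direct tangent-space computation is the cleaner argument.
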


\begin{proof}
Let $\maY = \{Y_j\}$ be the family of all finite intersections of the
sets $X_j$. We need to prove that $T_x(\bigcap Y_{j_k}) = \bigcap T_x
Y_{j_k}$.  At a point $x \in \RR^{3N}$ this is obvious, since each
$Y_j$ is (the closure of) a linear subspace close to $x$. For $x$ on
the boundary of $M$, we notice that $\maY$ has a product structure in
a tubular neighborhood of the boundary of $M$.
\end{proof}

Let $(\SS, \maW) := ([M: \maS], \maW)$ be the blow-up of the Lie
manifold $(M=\overline{\RR^{3N}}, \maV_{sc})$, given by
Proposition \ref{prop.b.trans.fam}, and $\rho$ be the function
introduced in \eqref{eq.def.rho}. Note that the definition of~$\SS$
and $\maW$ depend on which of the $b_j$ and $c_{ij}$ are allowed to be
non-zero. Let $V$ be the potential considered in Equation~\eqref{eq.def.V}:
\begin{equation*} 
  V(x) = \sum_{1 \le j \le N} \frac{b_j}{|x_j|} 
  + \sum_{1 \le i < j \le N} \frac{c_{ij}}{|x_i-x_j|},
\end{equation*}
where $x = (x_1, x_2, \ldots, x_N) \in \RR^{3N}$, $x_j \in \RR^3$.  We
allow $b_j, c_{ij} \in \CI(\SS)$, which is important for some
applications to the Hartree--Fock and Density Functional
Theory. We endow $\SS$ with the volume form defined by a
compatible metric and we then define $L^p(\SS)$ accordingly.

\begin{thm}\label{thm.regularity} 
The blow-up $(\SS, \maW)$ of the scattering manifold
$(M=\overline{\RR^{3N}}, \maV_{sc})$ has the following properties:
\begin{enumerate}[(i)]
\item $\rho V \in r_{\infty} \CI(\SS)$.
\item  $\rho^2(-\Delta + V) \in \Diff_{\maW}(\SS)$ and
is elliptic in that algebra.  
\item Let  $x_H$ be a defining function of the face $H$ and $a_H \in \RR$,
for each hyperface $H$ of~$\SS$. Denote $\chi = \prod_H x_H^{a_H}$
and assume that $u \in \chi L^p(\SS)$ satisfies $(-\Delta + V)u
= \lambda u$, $1<p<\infty$, for some $\lambda \in \RR$. Then
$u \in \chi W^{m,p}(\SS, \maW)$ for all $m \in \ZZ_+$.
\end{enumerate}
\end{thm}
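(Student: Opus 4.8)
The plan is to prove the three assertions in order, since each feeds into the next. For (i), I would work in a neighborhood of each face of $\SS$ separately. Away from the blow-up faces, $V$ is the pullback of a smooth function on the scattering manifold times $r_\infty$ (since $|x_j|, |x_i-x_j|$ grow like $1/r_\infty$ at infinity and each $b_j, c_{ij}$ is in $\CI(\SS)$), so $\rho V = r_1\cdots r_k V$ lies in $r_\infty\CI(\SS)$ there because none of the $r_\ell$ vanish. Near a blow-up face corresponding to, say, $X_j$, the distance function $|x_j|$ coincides up to a smooth nonvanishing factor with the defining function $r_\ell$ of that face (this is the content of Lemma~\ref{lemma.distance} and Lemma~\ref{lemma.r.rho} applied to $Y^{(\ell)}$); hence $r_\ell/|x_j|$ is smooth and nonvanishing, so $\rho V$ has a removable singularity and is smooth, and it still carries the $r_\infty$-factor coming from the scattering structure at infinity. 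Collecting contributions from all faces — and using that the $X_j$ and $X_{ij}$ for different index sets meet the blow-up faces in disjoint regions, so at most the relevant $r_\ell$ cancels the singularity of the corresponding summand of $V$ — gives $\rho V \in r_\infty\CI(\SS) \subset \CI(\SS)$.

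For (ii), I would write $\rho^2(-\Delta+V) = \rho^2(-\Delta_g) + \rho^2 V$ where $g$ is the Euclidean metric, viewed as a $\maV_{sc}$-metric on $M$. By Proposition~\ref{prop.Laplace}, $\rho^2\Delta_g = \Delta_h + R$ with $R\in\Diff^1_\maW(\SS)$ and $\Delta_h\in\Diff^2_\maW(\SS)$ elliptic; hence $\rho^2\Delta_g\in\Diff^2_\maW(\SS)$ and is elliptic in that algebra. Since $\rho^2 V = \rho\cdot(\rho V)$ and $\rho\in\CI(\SS)$ (it is a product of boundary defining functions) while $\rho V\in\CI(\SS)$ by (i), we get $\rho^2 V\in\CI(\SS)=\Diff^0_\maW(\SS)$. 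A zeroth-order term does not affect ellipticity, so $\rho^2(-\Delta+V)\in\Diff^2_\maW(\SS)$ is elliptic.

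For (iii), I would reduce to the unweighted regularity theorem, Theorem~\ref{thm.ain}. Set $P := \rho^2(-\Delta+V+c) = \rho^2(-\Delta+V) + c\rho^2$ for a constant $c$; this is still in $\Diff^2_\maW(\SS)$ and elliptic by (ii). The eigenvalue equation $(-\Delta+V)u=\lambda u$ becomes $P u = (\lambda+c)\rho^2 u$. Conjugating by the weight $\chi=\prod_H x_H^{a_H}$, the operator $P_\chi := \chi^{-1}P\chi$ is again an elliptic element of $\Diff^2_\maW(\SS)$ (conjugation by a product of powers of boundary defining functions preserves $\Diff_\maW$ and the principal symbol, a standard fact in the Lie-manifold calculus), and $v := \chi^{-1}u\in L^p(\SS)$ satisfies $P_\chi v = (\lambda+c)\chi^{-1}\rho^2\chi\, v = (\lambda+c)\rho^2 v$. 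Since $\rho^2$ is a bounded smooth function on $\SS$, the right-hand side lies in $L^p(\SS)$, i.e.\ in $W^{0,p}(\SS,\maW)$; hence by Theorem~\ref{thm.ain} (in its $L^p$ form, valid for $1<p<\infty$) we get $v\in W^{2,p}(\SS,\maW)$. Then $\rho^2 v\in W^{2,p}$, so $P_\chi v\in W^{2,p}$, and another application of Theorem~\ref{thm.ain} gives $v\in W^{4,p}$; iterating, $v\in W^{m,p}(\SS,\maW)$ for all $m$, so $u=\chi v\in\chi W^{m,p}(\SS,\maW)$.

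The main obstacle I expect is the precise bookkeeping in step (i): verifying that on each blow-up face the relevant defining function $r_\ell$ exactly cancels the Coulomb-type singularity of the corresponding summand of $V$, with a genuinely smooth and nonvanishing quotient, and that the other summands of $V$ remain smooth there (this uses weak transversality of the family $\maS$ so that distinct singular loci are separated after blowing up the smallest strata first). Everything else is a formal consequence of the structural results of Section~\ref{section.lie.struct} together with the regularity theorem from \cite{ain}.
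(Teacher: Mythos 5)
Your proposal is correct and follows essentially the same route as the paper: part (i) rests on the smoothness and nonvanishing of the quotient $r_\infty d_X/r_X$ on the blow-up (the paper packages your face-by-face cancellation into the single function $\phi := r_\infty d_X/r_X \in \CI(\SS)$), part (ii) is the same appeal to Proposition~\ref{prop.Laplace} plus the order-zero perturbation $\rho^2 V$, and part (iii) is the same reduction to Theorem~\ref{thm.ain} applied to the elliptic operator $\rho^2(-\Delta+V-\lambda)$, with your conjugation-by-$\chi$ and bootstrap merely spelling out what the paper leaves implicit in the phrase ``direct consequence.''
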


\begin{proof}
(i) Let us choose the compatible metric to be the Euclidean metric and
choose the boundary defining function $r_{\infty}$ for the boundary
(sphere) at infinity of $\RR^{3N}$ to satisfy $r_{\infty}(x) = 1/|x|$
for $|x|$ large.  Let $X$ be any of the manifolds $X_j
:= \overline{\{x_j = 0\}}$ or $X_{ij} := \overline{\{x_i = x_j\}}$
defining $\maS$ (the closures are all in $M$). We shall denote by
$r_X$ the distance to $X$ in a true (bounded) metric on $M$ and by
$d_X$ the distance to $X$ in the Euclidean metric. For example, if
$X \cap \RR^{3N} = X_j\cap \RR^{3N} = \{x_j = 0 \in \RR^{3} \}$, then
$d_X(x) = |x_j|$. We claim that the function $\phi :=
r_{\infty}d_{X}/r_X$ extends to a smooth and positive function on
$[M:X]$. We will assume that the bounded metric is a product metric
near the boundary in the standard (polar coordinates) tubular
neighborhood $U = S^{3N-1} \times [0, \epsilon)$ of $S^{3N-1}$. We can
also assume $r_{\infty}(x', t) = t$. In the interior of $M$, the
smoothness and positivity of $\phi$ follows from the fact that if $V$
is a linear subspace of $\RR^{3N}$ and if $g_1$ and $g_2$ are two
scalar products on $\RR^{3N}$ with associated distance functions $d_1$
and $d_2$, then $x\mapsto d_1(x,V)/d_2(x;V)$ extends to a smooth and
positive function on $[\RR^{3N}:V]$. At the boundary of $M$ the
argument uses also homogeneity. Both functions $r_X(x', t)$ and
$r_{\infty}(x', t) d_X(x', t)$ are in fact independent of $t \in
[0, \epsilon)$. Therefore $\phi(x',t)$ is independent of $t$ for $t$
small. Since the function $\phi$ was proved to be smooth for $t > 0$,
the claim follows.

It follows that $\phi$ is a smooth function also on $\SS=[M: \maS]$,
because $\CI([M:X]) \subset \CI([M:\maS])$. Moreover, $\phi$ is
nowhere zero, so we also have $\phi^{-1} \in \CI(\SS)$. Since $V$ is a
sum of terms of the form $d_{X}^{-1}$, it is enough to show that
$\rho/d_{X} \in r_{\infty} \CI(\SS)$. But $\rho = \psi r_X$ for some
smooth function $\psi \in \CI(\SS)$ and hence
\begin{equation*}
        \rho/d_{X} = \psi r_X/d_X
        = \psi \phi^{-1} r_{\infty} \in r_{\infty} \CI(\SS).
\end{equation*}

(ii) follows from Propositions \ref{prop.Laplace}
and \ref{prop.S.transverse} using also (i) just proved.

(iii) is a direct consequence of the regularity result in \cite{ain},
Theorem \ref{thm.ain}, because $\rho^{2}(-\Delta + V - \lambda)$ is
elliptic, by (ii). The proof is now complete.
\end{proof}

Note that it follows from Proposition \ref{proposition.sobolev} and
the definition of $\maV_{sc}$ that
\begin{equation}\label{eq.sobolev.SS}
        W^{k,p}(\SS, \maW) := \{u: \RR^{3N} \to \CC, \rho^{|\alpha| + 3N/2}
        \pa^\alpha u \in L^p(\RR^{3N}),\, \
        |\alpha| \le k\,\}\, .
\end{equation}

We are now ready to prove our main result, as stated in Equation~\eqref{eq.main}.

\begin{thm}\label{theorem.main} 
Assume $u \in L^2(\RR^{3N})$ is an eigenfunction of $\cH := -\Delta +
V$, then 
\begin{equation*}
        u \in \maK_{a}^{m}(\RR^{3N},r_S) = \rho^{a
        -3N/2}W^{m,2}(\SS, \maW)
\end{equation*}
for all $m\in \ZZ_+$ and for all $a \le 0$.
\end{thm}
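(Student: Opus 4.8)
The plan is to derive the theorem from the bootstrap regularity statement Theorem~\ref{thm.regularity}(iii), once we have (a) identified the weighted Sobolev (Babu\v{s}ka--Kondratiev) space $\maK_a^m(\RR^{3N})$ with a weighted Lie--Sobolev space on the blow-up $\SS$, and (b) checked that the eigenfunction $u$ lies in the weight class to which that bootstrap applies. The heart of the matter is thus the equality
\[
        \maK_a^m(\RR^{3N})\;=\;\rho^{\,a-3N/2}\,W^{m,2}(\SS,\maW)
\]
recorded in the statement; granting it, the rest is a short deduction.

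First I would prove this identification. The inputs are the explicit description \eqref{eq.sobolev.SS} of $W^{m,2}(\SS,\maW)$ and Lemma~\ref{lemma.r.rho}, which gives $C^{-1}\rho\le r_{\maS}\le C\rho$ on $\SS$; since the smoothed distance $r_S$ to the singular set coincides with $r_{\maS}$ near $S$ and both are comparable to $1$ away from $S$ (in particular near spatial infinity), the weights $r_S$ and $\rho$ are boundedly equivalent on all of $\SS$. For $v:\RR^{3N}\to\CC$ put $u:=\rho^{\,3N/2-a}v$; expanding the products $\rho^{\,\cdots}\pa^\alpha u=\rho^{\,\cdots}\pa^\alpha(\rho^{\,3N/2-a}v)$ occurring in \eqref{eq.sobolev.SS} by the Leibniz rule, and using the elementary bounds $|\pa^\beta\rho^{\,s}|\le C_\beta\,\rho^{\,s-|\beta|}$ (valid off the zero set of $\rho$, where $\rho^{\,s}$ is a smooth power of a product of smoothed distances) together with the identity $|\beta|+|\alpha-\beta|=|\alpha|$, one checks that $u\in W^{m,2}(\SS,\maW)$ if and only if $r_S^{\,|\alpha|-a}\pa^\alpha v\in L^2$ for all $|\alpha|\le m$, i.e.\ $v\in\maK_a^m(\RR^{3N})$. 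The index shift by $3N/2$ is exactly the factor converting the volume form of a metric compatible with the blown-up Lie structure to Lebesgue measure on $\RR^{3N}$, cf.~\eqref{eq.sobolev.SS}. I expect this step --- matching the two weight functions \emph{globally} on the manifold with corners $\SS$, and correctly tracking that index shift --- to be the only delicate point; the remainder is formal.

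Granting the identification, fix $a\le 0$. Since $\SS$ is compact, $\rho$ and hence $r_S$ are bounded, so $r_S^{-a}$ is bounded and therefore $L^2(\RR^{3N})\subseteq\maK_a^0(\RR^{3N})$; by the $m=0$ case this says $u\in L^2(\RR^{3N})\subseteq\rho^{\,a-3N/2}W^{0,2}(\SS,\maW)=\rho^{\,a-3N/2}L^2(\SS)$. I would then apply Theorem~\ref{thm.regularity}(iii) with $p=2$, with $\lambda$ the given eigenvalue, and with $\chi:=\rho^{\,a-3N/2}=\prod_H x_H^{a_H}$, taking $a_H=a-3N/2$ on every hyperface $H$ of $\SS$ obtained by blowing up $\maS$ and $a_H=0$ on the hyperface at spatial infinity. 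Since $u\in\chi L^2(\SS)$ and $(-\Delta+V)u=\lambda u$ in the sense of distributions, Theorem~\ref{thm.regularity}(iii) gives $u\in\chi W^{m,2}(\SS,\maW)=\rho^{\,a-3N/2}W^{m,2}(\SS,\maW)=\maK_a^m(\RR^{3N})$ for all $m\in\ZZ_+$. As $a\le 0$ was arbitrary, this proves the theorem.
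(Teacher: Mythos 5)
Your proposal is correct and follows essentially the same route as the paper: identify $L^2(\RR^{3N})$ with $\rho^{-3N/2}L^2(\SS)$ via the conformal change of volume form, recognize $\rho^{a-3N/2}$ as an admissible weight $\chi=\prod_H x_H^{a_H}$ (with exponent $0$ at spatial infinity), and invoke Theorem~\ref{thm.regularity}(iii). The paper's own proof is just a compressed version of this, treating the identification $\maK_a^m(\RR^{3N})=\rho^{a-3N/2}W^{m,2}(\SS,\maW)$ and the passage from $a=0$ to $a\le 0$ as immediate from Proposition~\ref{proposition.sobolev} and the boundedness of $\rho$, which you spell out.
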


\begin{proof} We have that $L^2(\RR^{3N}) = \rho^{-3N/2} L^2(\SS)$ 
since the metric on $\SS$ is $g_\SS = \rho^{-2} g_{\RR^{3N}}$. The
function $\rho$ is a product of defining functions of faces at
infinity, so $\rho^{-3N/2} = \chi$, for some $\chi$ as in
Theorem \ref{thm.regularity} (iii). The result then follows from
Theorem \ref{thm.regularity} (iii).
\end{proof}

\subsection{Regularity in the case of one electron and several heavy nuclei.}
Let us now consider $S = \{P_1, P_2, \ldots P_m\} \in \RR^{3}$, let
$M$ be the scattering calculus Lie manifold obtained by radially
compactifying $\RR^3$, as in the previous subsection. So $N=1$ in this
section, but we allow several fixed nuclei. Let us blow it up with
respect to the set $S$, obtaining a manifold with boundary $\SS$. Let
$\maW$ be the structural Lie algebra of vector fields on $\SS$
obtained blowing up the scattering calculus on $M$.

Let $V_0, k_j : \SS \to \RR$ be smooth functions, $j=1, 2, 3$. Let
$r_S : \SS \to [0, 1]$ be a smooth function that is equal to $0$ on
the faces corresponding to the singular points in $S$ and equal to $1$
in a neighborhood of the hyperface coming from the ball
compactification of $\RR^3$, \ie the face at infinity before the
blowup.  We assume that $dr_S \neq 0$ on the faces corresponding to
the set of singular points $S$.  As $S$ is a compact set, we can
assume in this subsection that $r_S(x)$ is the euclidian distance from
$x$ to $S$ if $x \in \RR^{3} \smallsetminus S$ is close to $S$. We
have $r_S = \rho$ in the notation of the previous subsection.

In view of further applications to operators that arise in the study
of periodic potentials, in this subsection we shall consider
eigenfunctions of the operator
\begin{equation}\label{eq.def.m.hamilt}
        \cH_m = - \sum_{j=0}^3 (\pa_j - i k_j)^2 +
        V_0/r_S,
\end{equation}
which is the {\em magnetic} version of the Schr\"{o}dinger
operator \eqref{hamilt}. For possible applications to the periodic
case, the case where $k_j$ are constants is the most important case,
but our results are more general.  Recall that the spaces $H^m(\SS)$
were introduced in Equation \eqref{eq.def.Hm}.  Also, let us notice
that $e^{-\epsilon |x|}$ is a smooth function on $\SS$, so
multiplication by this function maps the spaces $H^m(\SS)$ to
themselves.

\begin{thm}\label{thm.regularity2} 
Let $u \in L^2(\RR^3)$ be such that $\cH_m u = \lambda u$, in
distribution sense. Then
\begin{enumerate}[(i)]
\item\label{lab1} $r_S^2 e^{\mu |x|} \cH_m e^{-\mu |x|} \in \Diff_{\maW}(\SS)$, $\mu \in \RR$,
is elliptic.
\item\label{lab2} $u \in r_S^{-3/2} H^{m}(\SS) = \maK_{0}^m(\RR^3,r_S)$ for all $m$. 
\item\label{lab3} If $-\lambda > \epsilon > 0$, then
$u \in r_S^{-3/2} e^{-\epsilon |x|} H^{m}(\SS)$ for all $m$.
\end{enumerate}
\end{thm}

\begin{proof} 
The first part, \eqref{lab1}, is a direct calculation, completely
similar to Theorem \ref{thm.regularity}. To prove $(ii)$, we notice
that $L^2(\RR^{3}) = r_S^{-3/2}H^0(\SS)$. Then \eqref{lab2} is an
immediate consequence of the regularity theorem of \cite{ain}.
Finally, we have that $v = e^{\epsilon |x|} u \in L^2(\RR^3) =
r_S^{-3/2}H^{0}(\SS)$ by \cite{AgmonDecay}, since $-\lambda > \epsilon
> 0$. It is also an eigenfunction of $H_1 := e^{\epsilon |x|} \cH_{m}
e^{-\epsilon |x|}$. The result of~\eqref{lab3} then follows from the
ellipticity of~$r_S^{2}H_1$, by~\eqref{lab1}, and by the regularity
theorem of~\cite{ain}, Theorem~\ref{thm.ain}.
\end{proof}

See also \cite{CarmonaSimon1, Cornean, GeorgescuIftimovici,
PuriceDecay, VasyExp} and the references therein for more on the decay
of eigenfunctions. 
See also \cite{Kato51, KatoBook} for additional general properties of
the Hamiltonian operators arising in Quantum Mechnics.

To get an improved regularity in the index $a$, we shall need the
following result of independent interest. Let us replace $\RR^{3}$ by
$\RR^{N}$ in the following result, while keeping the rest of the
notation unchanged. In particular, $S \subset \RR^{N}$ is a finite
subset and $r_S(x) \in [0, 1]$ is the distance from $x$ to $S$ for $x$
close to $S$ and is equal to $1$ in a neighborhood of the hyperface at
infinity before the blow-up of the singular points.


As usual we define $\maK_{-a}^{-m}(\RR^N,r_S)$ to be the dual 
of $\maK_{a}^{m}(\RR^N,r_S)$ with respect to the pairing 
$(f_1,f_2):=\int_{\RR^N} f_1 f_2$, where 
$\maK_{a}^{m}(\RR^N,r_S)$ was defined in \eqref{eq.def.sobw}. 

\begin{thm}\label{thm.Delta}  Let $|a| < (N-2)/2$, then
\begin{equation*}
        \Delta - \mu
        : \maK_{a+1}^{m+1}(\RR^N,r_S) \to \maK_{a-1}^{m-1}(\RR^N,r_S)
\end{equation*}
is an isomorphism for $\mu>0$ large enough.
\end{thm}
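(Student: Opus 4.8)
The plan is to transport the statement to the Lie manifold $(\SS,\maW)$ obtained by blowing up the radially compactified $\RR^{N}$ along the finite set $S$, prove it there by a coercivity (Lax--Milgram) argument resting on Hardy's inequality, and then bootstrap by the elliptic regularity theorem on Lie manifolds (Theorem~\ref{thm.ain}). First I would record the two facts that make the translation work: the identification $\maK_{a}^{m}(\RR^{N})=r_{S}^{\,a-N/2}\,H^{m}(\SS)$ with $H^{m}(\SS):=W^{m,2}(\SS,\maW)$, obtained exactly as in the proof of Theorem~\ref{theorem.main}; and the elementary fact that, for any $b\in\RR$, multiplication by $r_{S}^{\,b}$ is an isomorphism $\maK_{a+b}^{m}\to\maK_{a}^{m}$. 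Hence $\Delta-\mu\colon\maK_{a+1}^{m+1}\to\maK_{a-1}^{m-1}$ is an isomorphism if and only if the conjugated operator $L_{a}:=r_{S}^{-a}(\Delta-\mu)\,r_{S}^{\,a}$ is an isomorphism $\maK_{1}^{m+1}\to\maK_{-1}^{m-1}$. A direct computation in geodesic polar coordinates about each point of $S$ gives
\[
  L_{a}=\Delta-\mu+a(a+N-2)\,r_{S}^{-2}+2a\,r_{S}^{-1}\pa_{r}
\]
near $S$, while $L_{a}=\Delta-\mu$ near infinity, where $r_{S}\equiv1$. In particular, exactly as in Theorem~\ref{thm.regularity} and Theorem~\ref{thm.regularity2}(i), $r_{S}^{2}L_{a}$ lies in $\Diff^{2}_{\maW}(\SS)$ and is elliptic there; equivalently, conjugating $L_{a}$ by the weights that identify $\maK_{\pm1}^{m}$ with $H^{m}(\SS)$ produces an elliptic operator $\tilde L\in\Diff^{2}_{\maW}(\SS)$, $\tilde L\colon H^{m+1}(\SS)\to H^{m-1}(\SS)$.

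The crux is the case $m=0$. The Hilbert space $\maK_{1}^{1}$ has norm comparable to $\bigl(\|r_{S}^{-1}u\|_{L^{2}(\RR^{N})}^{2}+\sum_{j}\|\pa_{j}u\|_{L^{2}(\RR^{N})}^{2}\bigr)^{1/2}$, and its dual with respect to the $L^{2}(\RR^{N},dx)$ pairing is $\maK_{-1}^{-1}$. On it I would consider the bounded sesquilinear form $\mathcal B_{a}(u,v):=-\int_{\RR^{N}}(L_{a}u)\,\bar v\,dx$, which after one integration by parts involves only first derivatives. Writing the first--order term $2a\,r_{S}^{-1}\pa_{r}$ as its skew--adjoint part $K$ (with respect to Lebesgue measure) plus a multiple of $r_{S}^{-2}$, one finds, modulo bounded terms,
\[
  \operatorname{Re}\mathcal B_{a}(u,u)=\sum_{j}\|\pa_{j}u\|_{L^{2}}^{2}-a^{2}\int_{\RR^{N}}r_{S}^{-2}\,|u|^{2}\,dx+\mu\,\|u\|_{L^{2}}^{2},
\]
because $\operatorname{Re}\langle Ku,u\rangle=0$ for $u\in\maK_{1}^{1}$ (the boundary contribution at the blown--up faces vanishes, since elements of $\maK_{1}^{1}$ decay fast enough there). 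Applying Hardy's inequality $\int_{\RR^{N}}|u|^{2}|x-P|^{-2}\,dx\le\tfrac{4}{(N-2)^{2}}\sum_{j}\int_{\RR^{N}}|\pa_{j}u|^{2}\,dx$ (valid for $N\ge3$) near each $P\in S$, and using that $|a|<(N-2)/2$ is exactly $a^{2}<(N-2)^{2}/4$, a fixed fraction of the Dirichlet energy absorbs $a^{2}\int r_{S}^{-2}|u|^{2}$ near $S$, and for $\mu$ large $\mu\|u\|_{L^{2}}^{2}$ absorbs the remaining bounded terms; this yields $\operatorname{Re}\mathcal B_{a}(u,u)\ge c\,\|u\|_{\maK_{1}^{1}}^{2}$ with $c>0$. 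By Lax--Milgram, $L_{a}\colon\maK_{1}^{1}\to\maK_{-1}^{-1}$ is an isomorphism, that is, $\tilde L\colon H^{1}(\SS)\to H^{-1}(\SS)$ is an isomorphism.

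Next, since $\tilde L$ is elliptic in $\Diff^{2}_{\maW}(\SS)$ and is an isomorphism $H^{1}(\SS)\to H^{-1}(\SS)$, the regularity theorem of \cite{ain} (Theorem~\ref{thm.ain}) upgrades it to an isomorphism $H^{m+1}(\SS)\to H^{m-1}(\SS)$ for every $m\in\ZZ_{+}$: for $f\in H^{m-1}(\SS)$ the solution $u=\tilde L^{-1}f$ lies in $H^{1}(\SS)$ and satisfies $\tilde Lu\in H^{m-1}(\SS)$, hence $u\in H^{m+1}(\SS)$, while injectivity is inherited from $H^{1}(\SS)$. Undoing the conjugations then gives that $\Delta-\mu\colon\maK_{a+1}^{m+1}(\RR^{N})\to\maK_{a-1}^{m-1}(\RR^{N})$ is an isomorphism for all $m$ and all $|a|<(N-2)/2$, provided $\mu>0$ is large enough, which is the assertion.

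I expect the main obstacle to be the sharp constant: coercivity holds precisely when $a^{2}<(N-2)^{2}/4$, i.e. when the weight stays within the first ``spectral gap'' of the conic Laplacian at the points of $S$ (equivalently, strictly below the sharp Hardy constant); outside this range the form is no longer coercive and $\Delta-\mu$ in fact ceases to be Fredholm between these spaces, so the restriction on $a$ is optimal. A more routine but delicate point is verifying that the skew--adjoint first--order piece contributes nothing to $\operatorname{Re}\mathcal B_{a}$ on $\maK_{1}^{1}$ (a boundary--term computation at the blown--up faces) and keeping track of the several conjugations and of the identification of the weighted spaces $\maK_{a}^{m}$ with the geometric Sobolev spaces $H^{m}(\SS)$. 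As an alternative to Lax--Milgram one could argue purely by Fredholm theory: deduce Fredholmness of $\Delta-\mu$ on the weighted spaces from the indicial--root analysis for the conic Laplacian, observe that the index is constant on $|a|<(N-2)/2$ and equals its value at $a=0$, and obtain injectivity from a removable--singularity argument---a kernel element is bounded near $S$ by the indicial analysis, hence extends across $S$, and then $\mu\notin\sigma(\Delta)=(-\infty,0]$ forces it to vanish.
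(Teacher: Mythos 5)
Your proposal is correct and follows essentially the same route as the paper: conjugate by $r_S^{a}$, establish coercivity of the conjugated operator on $\maK_1^1$ via Hardy's inequality with the sharp constant $c_N=(N-2)/2$ (which is exactly where $|a|<(N-2)/2$ enters), conclude invertibility at the bottom level, bootstrap with the regularity theorem of \cite{ain}, and undo the conjugation using that $r_S^{b}$ is an isomorphism between the weighted spaces. The only cosmetic difference is that you invoke Lax--Milgram where the paper deduces surjectivity from the closed range of $P_{a,\mu}$ together with the observation that its adjoint is $P_{-a,\mu}$, which satisfies the same coercivity estimate.
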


\begin{proof} 
We begin by recalling the classical Hardy's inequality, valid for $u \in
H^1(\RR^N)$:
\begin{equation}\label{eq.Hardy}
        c_N^2 \int_{\RR^N} \frac{|u|^2}{|x|^2}
        dx \leq \int_{\RR^N} |\nabla u|^2 dx,
\end{equation}
with $c_N = (N-2)/2$ (see for example \cite{Zuazua} and the
references therein). A partition of unity argument then implies that
for any $\delta>0$ there exists $\mu = \mu_\delta >0$ such that
\begin{equation}\label{eq.Hardy2}
        (1-\delta) c_N^2 \int_{\RR^N} |r_S^{-1} u|^2
        dx \leq \int_{\RR^N} \big( |\nabla u|^2 + \mu|u|^2 \big) dx.
\end{equation}

We can assume that $|\nabla r_S| \le 1$. Let us assume
$u \in \CIc(\RR^{3} \smallsetminus S)$, which is a dense subset of
$\maK_{a}^m(\RR^{N}, r_S)$ for all $m$ and $a$, by \cite{ain}.  Let $|a|
<(N-2)/2$. We shall denote $(u, v) = \int_{\RR^{N}} uv\, dx$, as
usual. Let us regard $r^{a}$ and $r^{-a}$ as multiplication
operators. Let us now multiply Equation \eqref{eq.Hardy2} with
$1-\delta$ and use $\nabla(r_S^{a}u) = a r_S^{a-1} u \nabla r_S +
r_S^{a} \nabla u$ to obtain
\begin{eqnarray*}
        \big( (\mu - r_S^{-a}\Delta r_{S}^{a}) u, u \big) &=& \mu(u,
      u) + (\nabla r_{S}^{a} u, \nabla r_{S}^{-a} u ) \\
        & = & \mu(u, u) + (r_{S}^{a} \nabla u, r_{S}^{-a} \nabla u ) +
        a( r_{S}^{-1}(\nabla r_{S}) u, \nabla u) \\
        & & - a( \nabla u, r_{S}^{-1}(\nabla r_{S}) u) - a^2(r_{S}^{-1}
        (\nabla r_{S}) u , r_{S}^{-1} (\nabla r_{S}) u )\\
        & \ge &\mu(u, u) + (\nabla u, \nabla u) - a^2(r_{S}^{-1} u ,
        r_{S}^{-1} u)\\
        & \ge &( (1-\delta)^2 c_N^2 - a^2)(r_{S}^{-1} u , r_{S}^{-1} u)
        + \delta (\nabla u, \nabla u) \\
        &\ge &\delta \|u\|_{\maK_1^1}^2.
\end{eqnarray*}
For $\delta >0$ small enough ($(1-\delta)^2 c_N^2 - \delta \ge a^2$).
This means that the continuous map
\begin{equation*}
        P_{a,\mu} := \mu - r_{S}^{-a}\Delta r_{S}^{a}
        : \maK^1_{1}(\RR^{N},r_S) \to \maK^{-1}_{-1}(\RR^{N},r_S)
\end{equation*}
satisfies 
\begin{equation*}
        \|P_{a,\mu} u \|_{\maK_{-1}^{-1}} \|u \|_{\maK_{1}^{1}} \ge
        (P_{a,\mu} u, u) \ge \delta \|u\|_{\maK_1^1}^2,
\end{equation*}
and hence $\|P_{a,\mu} u \|_{\maK^{-1}_{-1}(\RR^{N})}
\ge \delta \|u\|_{\maK^1_{1}(\RR^{N})}$, for $\mu>0$ 
large and some $\delta > 0$.  It follows that $P_{a,\mu}$ is injective
with closed range for all $|a|<(N-2)/2$. Since the adjoint of
$P_{a,\mu}$ is $P_{-a,\mu}$, it follows that $P_{a,\mu}$ is also
surjective, and hence an isomorphism by the Open Mapping Theorem. The
regularity result of \cite{ain} (Theorem \ref{thm.ain}) shows that
$P_{a,\mu} := \mu - r_{S}^{-a} \Delta r_{S}^{a}
: \maK^{m+1}_{1}(\RR^{N},r_S) \to \maK^{m-1}_{-1}(\RR^{N},r_S)$ is also an
isomorphism for all $m$. The result follows from the fact that $r_S^b
: \maK^{m}_{c} (\RR^{N},r_S) \to \maK^{m}_{c+b} (\RR^{N},r_S)$ is an
isomorphism for all $b$, $c$, and $m$ \cite{ammann.nistor:07}.
\end{proof}

We are ready to prove the main result of this subsection.

\begin{thm}\label{thm.regularity3} 
Let $u \in L^2(\RR^3)$ be such that $\cH_m u = \lambda u$, in
distribution sense. Then $u \in \maK_{a}^{m}(\RR^{3},r_S) = r_S^{a-3/2}
H^{m}(\SS)$ for all $m \in \ZZ_+$ and all $a<3/2$.
\end{thm}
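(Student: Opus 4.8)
The plan is to bootstrap the weight $a$ in the scale $\maK_a^m(\RR^3)$, starting from the case $a\le 0$ — which is already provided by Theorem~\ref{thm.regularity2}(ii) — and repeatedly inverting $\Delta-\mu$ by means of Theorem~\ref{thm.Delta}. I would first fix once and for all a $\mu>0$ large enough that Theorem~\ref{thm.Delta} applies. Since here $N=3$, so $(N-2)/2=1/2$, a reindexing of that theorem shows that
\begin{equation*}
  \Delta-\mu:\maK_{b}^{\,m+2}(\RR^3)\longrightarrow\maK_{b-2}^{\,m}(\RR^3)
\end{equation*}
is an isomorphism for every $b\in(1/2,3/2)$ and every $m\ge 0$. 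Moreover, because $\mu>0$, the self-adjoint operator $-\Delta+\mu$ has spectrum in $[\mu,\infty)$ on $L^2(\RR^3)$ and is therefore injective there; this last fact will supply the cross-scale uniqueness needed to identify the solutions produced by the isomorphism above with $u$ itself.

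Next I would rewrite the equation. Expanding the magnetic part one has the operator identity $(\pa_j-k_j)(\pa_j+k_j)=\pa_j^2+(\pa_jk_j)-k_j^2$, so the first-order terms cancel and $\cH_m=-\Delta+W$, where $W:=\sum_j\bigl(k_j^2-\pa_jk_j\bigr)+V_0/r_S$ is a multiplication operator. Since $k_j,V_0\in\CI(\SS)$ and a Euclidean derivative $\pa_j$ of a function smooth on the blow-up costs at most one power of $r_S$ near the singular set (and nothing away from it), we have $W\in r_S^{-1}\CI(\SS)$; hence, using that $r_S^{-1}:\maK_c^m\to\maK_{c-1}^m$ is an isomorphism and that multiplication by a function in $\CI(\SS)$ is bounded on each $\maK_c^m$ \cite{ammann.nistor:07,ain}, multiplication by $W$ maps $\maK_c^m\to\maK_{c-1}^m$ boundedly for all $c,m$. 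The eigenvalue equation $\cH_mu=\lambda u$ then reads $(\Delta-\mu)u=Wu-(\lambda+\mu)u=:F$.

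Now comes the bootstrap. By Theorem~\ref{thm.regularity2}(ii), $u\in\maK_0^m(\RR^3)$ for all $m$. \emph{Step 1:} then $Wu\in\maK_{-1}^m$ and $(\lambda+\mu)u\in\maK_0^m\subset\maK_{-1}^m$, so $F\in\maK_{-1}^m$ for all $m$. Applying the isomorphism with $b=1$ (target weight $b-2=-1$) produces, for each $m$, some $v\in\maK_1^{m}$ with $(\Delta-\mu)v=F$; since $v\in\maK_1^m\subset L^2(\RR^3)$ and $u\in\maK_0^m\subset L^2(\RR^3)$ both satisfy this equation and $-\Delta+\mu$ is injective on $L^2(\RR^3)$, we get $u=v\in\maK_1^m$ for all $m$. \emph{Step 2:} now $Wu\in\maK_0^m$ and $(\lambda+\mu)u\in\maK_1^m\subset\maK_0^m$, so $F\in\maK_0^m$ for all $m$; for any $b\in(1/2,3/2)$ we have $b-2<0$, hence $F\in\maK_0^m\subset\maK_{b-2}^m$, and the same inversion together with the $L^2$-uniqueness argument yield $u\in\maK_b^m$ for all $m$. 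Letting $b\uparrow 3/2$ and using the embedding $\maK_b^m\subset\maK_a^m$ whenever $a\le b$, we conclude $u\in\maK_a^m(\RR^3)$ for all $m$ and all $a<3/2$. The identification $\maK_a^m(\RR^3)=r_S^{a-3/2}H^m(\SS)$ then follows from Theorem~\ref{thm.regularity2}(ii) and the isomorphism $r_S^a:\maK_0^m\to\maK_a^m$.

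The hard part is that Theorem~\ref{thm.Delta} only gives invertibility of $\Delta-\mu$ on the single admissible range $|a|<(N-2)/2$, which for $N=3$ translates, after the index shift, into domain weights in $(1/2,3/2)$ — precisely enough to reach $a<3/2$ and no further, in agreement with the expected optimal range. What makes this suffice, and what I would check most carefully, is twofold: that the perturbation $\cH_m-(-\Delta)=W$ is genuinely of order zero and costs only one power of $r_S$, so that exactly two bootstrap steps bridge the gap from weight $0$ to weight arbitrarily close to $3/2$; and that the cross-scale uniqueness comes for free from positivity of $-\Delta+\mu$ on $L^2$ rather than from Theorem~\ref{thm.Delta}, whose range of validity excludes precisely the weight-$0$ space in which $u$ initially lives.
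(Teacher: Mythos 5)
Your proposal is correct and follows essentially the same route as the paper: write $\cH_m=-\Delta+Q$ with $Q$ of order zero costing one power of $r_S$, start from the weight-$0$ regularity of Theorem~\ref{thm.regularity2}, and bootstrap twice through the isomorphism of Theorem~\ref{thm.Delta} to reach every $a<3/2$. You additionally make explicit two points the paper leaves implicit — the cancellation of first-order terms in $(\pa_j-k_j)(\pa_j+k_j)$ and the cross-scale identification of $u$ with $(\Delta-\mu)^{-1}F$ via injectivity of $-\Delta+\mu$ on $L^2$ — which is a welcome tightening rather than a different argument.
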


\begin{proof}
Let us first notice that the operator $Q := \cH_m +\Delta$ is a
bounded operator $\maK_{a}^{m}(\RR^{3},r_S) \to \maK_{a-1}^{m-1}(\RR^{3},r_S)$
for all $a$ and $m$.  Assume that $u \in L^2(\RR^{3})$ satisfies
$-\cH_m u = \lambda u$. Then we know that
$u \in \maK_{0}^{m}(\RR^{3},r_S)$ for all $m$ by
Theorem~\ref{thm.regularity2}. Hence
\begin{equation*}
        f := (\Delta - C) u = Qu + (\lambda-C)
        u \in \maK_{-1}^{m-1}(\RR^3,r_S).
\end{equation*}
For large $C$ we can invert $\Delta - C$, and thus we obtain 
$u = (\Delta - C)^{-1}f \in \maK_{1}^{m+1}(\RR^3,r_S) = (\Delta -
C)^{-1}\maK_{-1}^{m-1}(\RR^3,r_S)$, by Theorem~\ref{thm.Delta}. But then
$f = Qu + (\lambda-C) u \in \maK_{0}^{m}(\RR^3,r_S) \subset \maK_{-1 +
a}^{m-1}(\RR^3,r_S)$ for any $a<1/2$. We can then repeat this argument to
obtain $u = (\Delta - C)^{-1}f \in \maK_{1+a}^{m+1}(\RR^3,r_S)$ for any
$a < 1/2$ and any $m$, as claimed.
\end{proof}

See \cite{Ferrero, Flad1, HunsickerNistorSofo} for an approach to the
singularities of one electron Hamiltonians using the theory of
singular functions for problems with conical singularities. The
regularity at the origin in the above theorem is, in fact, a simple
consequence of the theory of singular functions. For $V_0$ real
analytic and $k_j=0$, the regularity at the origin is also an
immediate consequence of the analytic regularity result proved
in \cite{Fournais3}.

It would be interesting to extend our results in the case of magnetic
fields \cite{Georgescu, HunsickerNistorSofo, PuriceMagnetic, Siedentop}.
In addition to the above extensions, one would have to look into the
issues that arise in the numerical approximation of solutions of
partial differential equations in spaces of high dimension (the so
called ``curse of dimensionality''). Let us mention in this regard the
papers \cite{Griebel, Hamaekers, Schwab} and the references therein,
where the issue of approximation in high dimension is discussed.

\onlyarxive{
\appendix 

\section{$b$-tangent bundle and partial $b$-structure on $[M:X]$}\label{subsec.btg}\label{app.ex}

In this example we give an example for a Lie manifold as explained
in Subsection~\ref{subsec.liemanifolds}. The content of this
section was removed in the printed version, as we were asked to 
shorten the article.
 
Important examples of Lie manifolds are Melrose's $b$-manifolds.  Let
$N$ be a manifold with corners. The $b$-tangent bundle is a Lie
algebroid $T^bN$ with an anchor map $\rho:T^bN\to TN$ such that $\rho$
induces a $\CI(M)$-module isomorphism, and
$\Gamma(T^bN)\cong \maV_N$. Recall that $\maV_N$ was defined in
Equation~\eqref{eq.maV_M}. The Lie algebroid $T^bN$ is hereby
determined up to isomorphisms of Lie-algebroids.
 
Now we assume that, following \cite{ain}, the boundary hyperfaces
$\{H_1,\ldots, H_k\}$ of $N$ are divided into two sets
$\maT=\{H_1,\ldots,H_r\}$ (the so-called {\it true} boundary faces)
and $\maF=\{H_{r+1},\ldots, H_k\}$, (the so-called boundary faces {\it
at infinity}).  The cases $r=0$ and $r=k$ are not excluded, \ie one of
these sets might be empty. Then one carries out the
$b$-con\-struc\-tion only at the boundary faces at infinity. In other
words, one defines $T^{b\maF}N$ as a vector bundle with anchor map
inducing an isomorphism between $\Gamma(T^{b\maF}N)$ and the set
$\maV^\maF_N$ of vector fields, tangent to the boundaries at infinity.
As above $T^{b\maF}N$ is hereby determined up to isomorphism of
Lie-algebroids.

This bundle plays an important role on $N=[M:X]$ where $X$ is a
submanifold with corners of the manifold with corners $M$.  The
boundary hyperfaces of $[M:X]$ arising from boundary hyperfaces of $M$
are considered as true boundary, whereas the boundary faces obtained
from the blow-up around $X$, are considered as boundary at infinity.
In this situation $T^{b\maF}N$ will be denoted as $T^{bX}[M:X]$.
}


\end{document}